%
\documentclass[runningheads]{llncs}
\usepackage{graphicx}
%

\usepackage{todonotes}
\usepackage{latexsym,amssymb,amsmath,amsfonts}


  { }


\newcommand{\diff}{\ensuremath{\mathtt{diff}}}
\newcommand{\el}{\ensuremath{el}}

\newcommand{\len}[1]{\ensuremath{\vert {#1} \vert}\xspace}
\newcommand{\ARRAY}{\ensuremath{\mathtt{ARRAY}}\xspace}
\newcommand{\INDEX}{\ensuremath{\mathtt{INDEX}}\xspace}
\newcommand{\ELEM}{\ensuremath{\mathtt{ELEM}}\xspace}
\newcommand{\formulae}{formul\ae\xspace}

\newcommand{\LRA}{\ensuremath{\mathcal{LRA}}\xspace}
\newcommand{\EUF}{\ensuremath{\mathcal{EUF}}\xspace}
\newcommand{\LIA}{\ensuremath{\mathcal{LIA}}\xspace}
\newcommand{\IDL}{\ensuremath{\mathcal{IDL}}\xspace}

\newcommand{\AXEXT}{\ensuremath{\mathcal{AR}_{{\rm ext}}}\xspace}

\newcommand{\imp}{\rightarrow}

\newcommand{\dpll}[1]{{\sc DPLL}\xspace}

\newcommand{\COMMENT}[1]{}

\newcommand{\alocal}{\ensuremath{A}-local\xspace}
\newcommand{\blocal}{\ensuremath{B}-local\xspace}
\newcommand{\astrict}{\ensuremath{A}-strict\xspace}
\newcommand{\bstrict}{\ensuremath{B}-strict\xspace}
\newcommand{\abcommon}{\ensuremath{AB}-common\xspace}

\newcommand{\ut}{\ensuremath{\underline t}}
\newcommand{\ux}{\ensuremath{\underline x}}

\newcommand{\uy}{\ensuremath{\underline y}}
\newcommand{\uz}{\ensuremath{\underline z}}

\newcommand{\cA}{\ensuremath \mathcal A}
\newcommand{\cB}{\ensuremath \mathcal B}
\newcommand{\cM}{\ensuremath \mathcal M}
\newcommand{\cN}{\ensuremath \mathcal N}

\newcommand{\cI}{\ensuremath \mathcal I}

\renewcommand{\int}{\ensuremath {\mathcal I}}



\newcommand{\Const}{\ensuremath{\mathtt{Const}}\xspace}


\newcommand{\AXD}{\ensuremath{\mathcal{CARD}}\xspace}
\newcommand{\AXDTI}{\ensuremath{\mathcal{\AXD}(T_I)}\xspace}
\newcommand{\estesa}{\ensuremath{\mathcal{CARDC}(T_I)}\xspace}
\newcommand{\AXEXTTI}{\ensuremath{\mathcal{\AXEXT}(T_I)}\xspace}
\newcommand{\extestesa}{\ensuremath{\mathcal{CARC}_{{\rm ext}}(T_I)}\xspace}

\newcommand{\uguale}{\ensuremath{=}\xspace}
\newcommand{\coincide}{\ensuremath{\equiv}\xspace}


\newcommand{\eop}{$\hfill \dashv$}

\usepackage[]{algorithm2e}
\usepackage{framed}
\usepackage{tabularx}
\usepackage{tikz-cd}
\usepackage{wrapfig}
\usepackage{comment}
\usepackage{xspace}
\usepackage{textgreek}
\usepackage{paralist}
\usepackage{enumerate}
\usepackage{amsmath}

\sloppy

\begin{document}
\title{General Interpolation and Strong Amalgamation for Contiguous Arrays}
%
%
\author{Silvio Ghilardi\inst{1}
\and
Alessandro Gianola\inst{2}
\and
Deepak Kapur\inst{3}
\and 
Chiara Naso\inst{1}
}
\authorrunning{S. Ghilardi et al.}
%
\institute{%
 Dipartimento di Matematica, Universit\`a degli Studi di Milano (Italy)\\
\and 
Faculty of Computer Science, Free University of Bozen-Bolzano (Italy)\\
\email{gianola@inf.unibz.it}
\and
Department of Computer Science, University of New Mexico (USA)\\
}
\maketitle              
\begin{abstract}
 Interpolation is an essential tool in software verification, where first-order theories are used to constrain datatypes 
manipulated by programs. 
In this paper, we introduce the datatype theory of \emph{contiguous arrays with maxdiff}, where arrays are completely defined in their allocation memory and for which maxdiff returns the max index where they differ. This theory is strictly more expressive than the array theories previously studied. By showing via an algebraic analysis that its models \emph{strongly amalgamate}, we prove that this theory admits quantifier-free interpolants and, notably, that interpolation transfers to theory combinations. Finally, we provide an algorithm that significantly improves the ones 
for related array theories: 
it  relies on  a polysize  reduction to  general interpolation in linear arithmetics, thus avoiding  
 impractical full terms instantiations and unbounded loops.

 \end{abstract}

\section{Introduction}\label{sec:intro}


Craig Interpolation Theorem \cite{Craig} is a well-known result in first logic that,  given an entailment between two logical formulae $\alpha$ and $\beta$, establishes the existence of a third formula $\gamma$ that shares its non-logical symbols with both $\alpha$ and $\beta$ and such that it is entailed by $\alpha$ and entails $\beta$. Studying interpolation has a long-standing tradition in non-classical logics and in algebraic logic.
Nevertheless, interpolation has been obtaining an increasing attention in automated reasoning and formal verification since the seminal works by McMillan \cite{McM03,McM06}. Indeed, specifically in infinite-state model checking, where an exhaustive, explicit exploration of the state space is not possible, computing interpolants has been proven to be a useful method for practical and efficient approximations of preimage computation.   In the context of software verification, the initial configurations and the transitions relation are usually represented symbolically by means of logical formulae, which gives the possibility of implicitly encoding the execution traces of the system.  More precisely, if $T$ is the first order theory that constraints the state space, one can symbolically express via a suitable $T$-inconsistent formula the fact that the system, starting from its initial configuration, cannot reach  in $n$-steps an error configuration.  Through this inconsistency, interpolants are then extracted from the symbolic representations of these `error' traces with the goal of helping the search of (safety) invariants of the modeled system: interpolants can be successfully used to refine and improve the construction of the candidate safety invariants.

 Model-checking applications usually require that such computed interpolants are not arbitrary but present specific shapes so as to guarantee their concrete usability. Since in many cases studied in software verification the underlying theories have a decidable quantifier-free fragment (but are undecidable or have prohibitive complexity outside),  the most naturale choice is to consider \emph{quantifier-free} interpolants. However, even in case $\alpha$ and $\beta$ are quantifier-free, Craig's Theorem does not guarantee that an interpolant $\gamma$ is quantifier-free too. Indeed, this property, called 'quantifier-free interpolation', does not hold in general for arbitrary first order theories. It is then a non-trivial (and, very often, challenging) problem to prove that useful theories admit 
 quantifier-free interpolation.
 
 In this paper, we are interested in studying the problem of quantifier-free interpolation for an expressive datatype theory that strictly extends the well-studied McCarthy's theory of arrays with extensionality. The original theory was introduced by McCarthy in~\cite{mccarthy}: however, in ~\cite{KMZ06} it is shown that quantifier-free interpolation fails for this theory.  Moreover, although its quantifier-free fragment is decidable, it is well-known that this theory in its full generality is undecidable~\cite{BMS06}: nonetheless, in the same paper, the authors studied a significant decidable fragment, the so-called `array property fragment', which strictly extends the quantifier-free one. The array property fragment 
 is expressive enough to formalize several benchmarks; 
 however, as proved in~\cite{HoeSchi19}, it is not closed under interpolation. Thus, 
 a particularly challenging but interesting problem is 
 that of identifying expressive extensions of the quantifier-free fragment of arrays that are still decidable but also enjoy interpolation: this is what we attack in this contribution.

A first attempt in this direction is in~\cite{lmcs_interp}, where  a
variant of McCarthy's theory was introduced 
 by Skolemizing the axioms of extensionality. This variant turned out to enjoy quantifier-free
interpolation~\cite{lmcs_interp},\cite{TW16}.
However, this Skolem function $\diff$ 
is generic because its semantic interpretation is 
undetermined. Moreover, all the array theories mentioned so far 
allow unlimited out-of-bound write operations and so
cannot express the notion of array \emph{length}, which is fundamental when  formalizing the real behavior of programs. In this respect, there are two possible variants that can be considered: 
\begin{inparaenum}[(i)]
\item the \emph{weak length} $\tilde{\len{-}}$ formalizes the \emph{minimal} interval $[0,\tilde{\len{a}}]$ of indexes  outside which the array $a$ is undefined ($a$ can be undefined also in some location inside $[0,\tilde{\len{a}}]$);
\item the \emph{(strong) length} $\len{-}$ represents the \emph{exact} interval $[0,\len{a}]$ where the array $a$ is fully defined. 
\end{inparaenum}
Strong length is essential for the \emph{faithful} logical formalization of benchmarks coming from software verification, such as C programs included in the SV-COMP competition~\cite{svcomp}. 

These are the main reasons why in \cite{ourfossacs}  
the
theory has been further enriched. There, the semantics of $\diff$, called maxdiff,  
is uniquely determined  in the models of the theory and is
more informative: it  returns \emph{the biggest index} where two different arrays differ. In this theory, weak length can be defined: this is notable, since it represents a first step toward capturing real program arrays.
The main contribution of  \cite{ourfossacs} is to show that this enriched theory has quantifier-free interpolants and its quantifier-free fragment is decidable. Still,  some expressive limitations 
(shared with  the previous literature) persist: arrays are not forced to be completely defined inside their allocation interval (when an array satisfies this property, we call it  `contiguous'), 
because they might contain undefined values in some location. Hence, strong length cannot be defined. Moreover, although in~\cite{ourfossacs} a complete terminating procedure for computing interpolants is provided, 
a complexity upper bound is given only in the simple 
basic case where indexes are just linear orders: 
for more complex 
arithmetical theories of indexes, no complexity analysis  is carried out and the algorithm becomes quite impractical, since it requires to fully instantiate universal quantifiers coming from the theory axioms with index terms of arbitrarily large size.

In this paper, we overcome all those limitations. For that purpose, we introduce the very expressive theory of \emph{contiguous arrays with maxdiff} $\AXDTI$ (parameterized over an index theory $T_I$), which improves and strictly extends the theory presented in~\cite{ourfossacs} by requiring arrays to be all contiguous. This makes the theory more adequate to represent arrays  used in common programming languages: for instance, strong length is now definable. Moreover, in contrast to ~\cite{ourfossacs} where only amalgamation is shown, we prove here a \emph{strong amalgamation} result, when  $\AXDTI$ is enriched with `constant arrays' of a fixed length with a default value in all their locations. Notably, this not only yields that quantifier-free interpolants exist, 
 but also that interpolation is preserved under disjoint signatures combinations and holds in presence of free function symbols (see the definition of `general interpolation' below). This result is completely novel and particularly challenging to be proven, since it requires a sophisticated model-theoretic machinery and a careful algebraic analysis of the class of all models.
 %
 %
We also radically re-design the interpolation algorithm, avoiding the use of unbounded loops and of impractical full instantiation routines. Our new algorithm  reduces the computation of interpolants of a jointly unsatisfiable pair of constraints  to a polynomial size  instance of the same problem in the underlying index theory enriched with unary function symbols. As such, the new algorithm becomes part of the hierarchical interpolation algorithms family~\cite{SS08} and in particular formally resembles the algorithm presented in~\cite{TW16} for array theory enriched with the basic \diff\ symbol. We underline that one aspect  making our problems technically  more challenging than similar problems investigated in the literature is the fact that we handle a combination  with 
\emph{very expressive}
index theories: such a combination is \emph{non-disjoint} because the  total orderings on indexes enter into the specification of the maxdiff and length axioms for arrays.

\subsection{Plan of the paper}\label{subsec:road}
In the following, we call $\EUF$ the  theory of equality and uninterpreted symbols. 
We introduce two novel theories in Section~\ref{sec:tharr}: \AXDTI, i.e., the theory of contiguous arrays with maxdiff, and \estesa, which is an extension of \AXDTI also containing `constant arrays' of a fixed length with a default value (called `$el$') in all locations.
The main technical results 
are that, for every index theory $T_I$:
\begin{compactenum}
 \item[{\rm (i)}] \estesa has general quantifier-free interpolation;
 \item[{\rm (ii)}] \AXDTI enjoys quantifier-free interpolation and 
 such interpolants 
 can be computed hierarchically by relying on a black-box 
  interpolation algorithm for the weaker theory $T_I\cup \EUF$ (which has quantifier free interpolation because $T_I$ is strongly amalgamable, see Theorem~\ref{thm:interpolation-amalgamation}).
\end{compactenum}

Result (i) 
is proved semantically, i.e., we show the equivalent 
strong amalgamation property (see Section~\ref{sec:background} for the definitions). 
The semantic proof 
 requires dedicated constructions (Section~\ref{sec:strongamalg}), 
 relying on some 
 important facts about models 
 and their embeddings (Section~\ref{sec:embeddings}).

The fact that \AXDTI has interpolants follows from the results in 
Section~\ref{sec:strongamalg} 
(where we prove that this theory is amalgamable). 
Result (ii) is proved last (Section~\ref{sec:algo});
we first need an
investigation on the solvability of the $SMT(\AXDTI)$ problem (Section~\ref{sec:sat}). 

We supply here some intuitions about our interpolation algorithm from Section~\ref{sec:algo}.
The algorithm computes an interpolant out of a pair of 
(suitably preprocessed)
mutually unsatisfiable quantifier-free \formulae\ $A^0, B^0$.
We call \emph{common} the variables occurring in both $A^0$ and $B^0$.
The 
existence of quantifier-free interpolants intuitively means that there are two reasoners, one for $A^0$ and one for $B^0$, the first (the second, resp.) of which operates on formulae involving only variables from $A_0$ ($B_0$, resp.). 
The reasoners discover the inconsistency of $A^0\land B^0$ by exchanging information on the common language, i.e., by communicating each other only the entailed quantifier-free formulae 
over the common variables.
The information exchange is hierarchical, i.e., it is limited to $T_I\cup \EUF$-\formulae: literals from the richer language of $\AXDTI$ and outside the language of $T_I\cup \EUF$ can contribute to the information exchange only via  \emph{instantiation} of the universal quantifiers in suitable 
$T_I\cup \EUF$-\formulae given in Section~\ref{sec:tharr}: these formulae, as proved in Lemmas~\ref{lem:univinst} and~\ref{lem:elim}, 
supply equivalent definitions of such literals.
 In contrast to~\cite{ourfossacs}, instantiations of universal quantifiers is 
 limited to variables  and constants for efficiency.

The main problem is to show that the above limited information exchange is sufficient.
This is the case thanks to the fact that the the algorithm manipulates 
\emph{iterated diff operators}~\cite{TW16},\cite{ourfossacs}
(formally defined in Section~\ref{sec:tharr}) and it 
 \emph{gives names} 
to all
such operators when applied to common array variables.
%
Both the production of names for iterated diff-terms and the variable instantiations of the universal quantifiers in the equivalent universal $T_I\cup \EUF$-\formulae
need in principle 
to be repeated infinitely many times;
what we prove  (this is the content of our main Theorem~\ref{thm:al} below) 
is that \emph{a pre-determined polynomial size subset of such manipulations is sufficient}
for the $T_I\cup \EUF$-interpolation module to produce the interpolant we are looking for.\footnote{One could reformulate this fact using the $W$-separability framework from~\cite{TW16}; however, using this framework would not sensibly modify the proof of Theorem~\ref{thm:al}, so we preferred for space reasons and for simplicity to supply 
proofs within standard direct terminology.
}

\subsubsection*{Related work.} We already mentioned the related work on first-order  theories axiomatizing arrays~\cite{mccarthy,KMZ06,lmcs_interp,ourfossacs}, which  our theories of contiguous arrays strictly extend.  Since we adopt a hierarchical approach, our method is closely related to hierarchical interpolation, where interpolants are computed by reduction to a base theory treated as black-box. 
A non-exhaustive summary of this literature is given by the approach in~\cite{RSS07,RS10}, where in the context of linear arithmetic general interpolation is reduced to constraint solving, the one based on \emph{local extensions} in~\cite{SS06,SS08,SS16,SS08} and the one based on $W$-compatibility and finite instantiations of~\cite{TotlaW13,TW16}.


\section{Formal Preliminaries}
\label{sec:background}

We assume the usual syntactic (e.g., signature, variable, term, atom,
literal, formula, and sentence) and semantic (e.g., structure,
sub-structure, truth) 
notions of
first-order logic. 
The equality symbol
``\uguale'' is 
in all signatures. 
Notations like $E(\ux)$ mean that the expression (term, literal,
formula, etc.) $E$ contains free variables only from the tuple $\ux$.
A `tuple of variables' is a list of variables without repetitions and
a `tuple of terms' is a list of terms (possibly with repetitions).
These conventions are useful for substitutions: we use them when denoting with $\phi(\ut/\ux)$ 
(or simply with $\phi(\ut)$) the formula obtained from $\phi(\ux)$ by simultaneous replacement of the `tuple of variables' $\ux$ with the `tuple of terms' $\ut$.
A \emph{constraint} is a conjunction of literals.
A formula is
\emph{universal} (\emph{existential}) iff it is obtained from a
quantifier-free formula by prefixing it with a string of universal
(existential, resp.)  quantifiers.   

\paragraph{Theories and satisfiability modulo theory.}
 A \emph{theory} $T$ is a pair $({\Sigma},
Ax_T)$, where $\Sigma$ is a signature and $Ax_T$ is a set of
$\Sigma$-sentences, called the \emph{axioms} of $T$ (we shall
sometimes write directly $T$ for $Ax_T$).  The \emph{models} of $T$
are those $\Sigma$-structures in which all the sentences in $Ax_T$ are
true.  
A $\Sigma$-formula $\phi$ is \emph{$T$-satisfiable} 
(or $T$-consistent)
if there exists a
model $\cM$ of $T$ such that $\phi$ is true in $\cM$ under a suitable
assignment $\mathtt a$ to the free variables of $\phi$ (in symbols,
$(\cM, \mathtt a) \models \phi$); it is \emph{$T$-valid} (in symbols,
$T\vdash \varphi$) if its negation is $T$-unsatisfiable or,
equivalently, $\varphi$ is provable from the axioms of $T$ in a
complete calculus for first-order logic.  
A theory $T=(\Sigma, Ax_T)$ is \emph{universal} iff 
all sentences in $Ax_{T}$ are
universal.
A formula $\varphi_1$ \emph{$T$-entails} a formula $\varphi_2$ if
$\varphi_1 \to \varphi_2$ is \emph{$T$-valid} (in symbols,
$\varphi_1\vdash_T \varphi_2$ or simply $\varphi_1 \vdash \varphi_2$
when $T$ is clear from the context). 
If $\Gamma$ is a set of \formulae and $\phi$ a formula, $\Gamma \vdash_T \phi$ means that there are 
$\gamma_1, \dots, \gamma_n\in \Gamma$ such that $\gamma_1\wedge \cdots\wedge \gamma_n \vdash_T \phi$.  
 The \emph{satisfiability modulo
  the theory $T$} (SMT$(T)$) \emph{problem} amounts to establishing
the $T$-satisfiability of quantifier-free $\Sigma$-\formulae (equivalently, 
the $T$-satisfiability of  $\Sigma$-constraints).
Some theories have special names, 
which are becoming standard in SMT-literature,
we shall recall some of them during the paper. 
As already mentioned,
we shall call $\EUF(\Sigma)$ (or just \EUF)
the pure equality theory in the signature $\Sigma$.
A theory $T$ admits \emph{quantifier-elimination} iff for every
formula $\phi(\ux)$ there is a quantifier-free formula
$\phi'(\ux)$ such that $T\vdash \phi \leftrightarrow \phi'$. 

\paragraph{Embeddings and sub-structures} 

 The support 
 of a structure $\mathcal{M}$ is denoted
with $|\mathcal{M}|$. For a (sort, constant, function, relation) symbol $\sigma$, we denote as 
$\sigma^\cM$ the interpretation of $\sigma$ in $\cM$.
Let $\cM$ and $\cN$ be two $\Sigma$-structures; 
a $\Sigma$-embedding (or, simply, an embedding) $\mu: \cM \longrightarrow \cN$ is an injective 
function from $|\mathcal{M}|$ into $|\mathcal{N}|$ that preserves and reflects the interpretation of functions 
and relation symbols (see, e.g.,~\cite{CK} for the formal definition). If such an embedding is a set-theoretical 
inclusion, 
we say that $\cM$ is
a {\it substructure} of $\cN$ or that $\cN$ is a {\it superstructure}
of $\cM$.
As it is known,
 the truth of a universal (resp. existential)
sentence is preserved through substructures (resp. superstructures).

Given a signature $\Sigma$ and a $\Sigma$-structure $\cM$, we  
indicate with
$\Delta_{\Sigma}(\cM)$  
the \emph{diagram} of $\cM$:
this is the set of sentences obtained by first expanding $\Sigma$ with a fresh constant $\bar a$ for every
element $a$ from $\vert \cM\vert$ and then taking the set of ground $\Sigma\cup \vert \cM\vert$-literals which are true in $\cM$
(under the  natural expanded interpretation mapping $\bar a$ to $a$).
An easy but nevertheless important basic result (to be frequently used in our proofs), called 
\emph{Robinson Diagram Lemma}~\cite{CK},
says that, given any $\Sigma$-structure $\cN$, there is an embedding $\mu: \cM \longrightarrow \cN$
iff $\cN$ can be expanded to a  $\Sigma\cup \vert \cM\vert$-structure in such a way that it becomes a model of 
$\Delta_{\Sigma}(\cM)$.

\paragraph{Combinations of
    theories.}
A theory $T$ is \emph{stably infinite} iff every $T$-satisfiable
quantifier-free formula (from the signature of $T$) is satisfiable in
an infinite model of $T$.  By compactness, it is possible to show that
$T$ is {stably infinite} iff every model of $T$ embeds into an
infinite one (see, e.g., \cite{Ghil05}).
Let $T_i$ be a stably-infinite theory over the signature $\Sigma_i$
such that the $SMT(T_i)$ problem is decidable for $i=1,2$ and
$\Sigma_1$ and $\Sigma_2$ are disjoint (i.e., the only shared symbol
is equality).  Under these assumptions, the \emph{Nelson-Oppen combination
result}~\cite{NO79} 
says 
that the SMT problem for the combination
$T_1\cup T_2$ of the theories $T_1$ and $T_2$ 
 is
decidable.   Nelson-Oppen result trivially extends to many-sorted languages.

\paragraph{Interpolation properties.}
%
In the introduction, we roughly stated Craig's interpolation theorem~\cite{CK}. 
In this paper, we are interested to
specialize this result to the computation of quantifier-free
interpolants modulo (combinations of) theories.
\begin{definition} \label{def:restricted}[Plain quantifier-free interpolation]
  A theory $T$ \emph{admits (plain) quantifier-free interpolation}
  iff for
  every pair of quantifier-free formulae $\phi, \psi$ such that
  $\psi\wedge \phi$ is $T$-unsatisfiable, there exists a
  quantifier-free formula $\theta$, called an \emph{interpolant}, such
  that: (i) $\psi$ $T$-entails $\theta$, (ii) $\theta \wedge \phi$ is
  $T$-unsatisfiable, and (iii) only the variables occurring in both
  $\psi$ and $\phi$ occur in $\theta$.
\end{definition}

In verification, the following extension of the above definition is
considered more useful.
\begin{definition}
  \label{def:general} [General quantifier-free interpolation]
  Let $T$ be a theory in a signature $\Sigma$; we say that $T$ has the
  \emph{general quantifier-free interpolation property} iff for every
  signature $\Sigma'$ (disjoint from $\Sigma$) and for every pair of
  ground $\Sigma\cup\Sigma'$-\formulae $\phi, \psi$ such that
  $\phi\wedge \psi$ is $T$-unsatisfiable,\footnote{By this (and
    similar notions) we mean that $\phi\wedge \psi$ is unsatisfiable
    in all $\Sigma'$-structures whose $\Sigma$-reduct is a model of
    $T$.  }  there is a ground formula $\theta$ such that: (i) $\phi$
  $T$-entails $\theta$; (ii) $\theta\wedge \psi$ is $T$-unsatisfiable;
  (iv) all 
  relations, constants and function symbols from $\Sigma'$
  occurring in $\theta$ also occur in $\phi$ and $\psi$.
\end{definition}
By replacing free variables with free constants, 
it is easily seen
that the general quantifier-free interpolation property
(Definition~\ref{def:general}) implies the plain quantifier-free
interpolation property (Definition~\ref{def:restricted}); the converse implication does not hold, however
(a counterexample can be found  in this paper too, see Example~\ref{example1} below).

\paragraph{Amalgamation and strong amalgamation.}

Interpolation can be characterized semantically via amalgamation.
\begin{definition}
  \label{def:sub-amalgamation-and-strong-sub-amalgamation}
  A universal theory $T$ has the \emph{amalgamation property} iff, 
  given models $\cM_1$ and $\cM_2$ of $T$ and a common
  submodel $\cA$ of them, there exists a further model $\cM$ of
  $T$ (called $T$-amalgam) 
  endowed with embeddings $\mu_1:\cM_1 \longrightarrow \cM$ and
  $\mu_2:\cM_2 \longrightarrow \cM$ whose restrictions to $|\cA|$
  coincide.

  A universal theory $T$ has the \emph{strong amalgamation property} \cite{kiss} 
   if the
  above embeddings $\mu_1, \mu_2$ and the above model $\cM$ can be chosen so to satisfy the following additional
  condition: if for some $m_1\in \vert \cM_1\vert , m_2\in \vert \cM_2\vert$ we have $\mu_1(m_1)=\mu_2(m_2)$,
  then there exists an element $a$ in $|\cA|$ such that $m_1=a=m_2$.
\end{definition}

The first point 
 of the following theorem is an old result due to~\cite{amalgam}; the second point 
 is proved in~\cite{BGR14} (where it is also suitably reformulated for theories which are not universal):

\begin{theorem}
  \label{thm:interpolation-amalgamation} Let $T$ be a
universal theory. Then 
\begin{description}
 \item[{\rm (i)}] $T$ has the amalgamation property iff it admits
  quan\-ti\-fier-free interpolants;
  \item[{\rm (ii)}] $T$ has the strong amalgamation property iff it has the general quantifier-free interpolation property.
\end{description} 
\end{theorem}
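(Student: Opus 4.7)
The plan is to prove both equivalences by the standard route through the Robinson Diagram Lemma, handling (i) and (ii) in parallel since the strong case is a refinement of the plain one. Both directions will rely on compactness applied to combined diagrams.

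For the ``amalgamation $\Rightarrow$ interpolation'' direction of (i), I would argue by contradiction. Suppose $\psi(\ux,\uy)\wedge\phi(\ux,\uz)$ is $T$-unsatisfiable but no quantifier-free interpolant $\theta(\ux)$ exists. Then the set $\Gamma(\ux)$ of quantifier-free $\ux$-formulae that are $T$-entailed by $\psi$ is consistent with $\phi$ modulo $T$, and dually the set of quantifier-free consequences of $\phi$ over $\ux$ is consistent with $\psi$. By compactness I obtain models $\cM_1\models T+\psi$ and $\cM_2\models T+\phi$ that realize the same quantifier-free type on the interpretations of $\ux$; hence the substructure $\cA$ generated by these interpretations embeds into both models. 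Amalgamating via a common superstructure $\cM$ of $T$ and invoking preservation of quantifier-free formulae under embeddings, both $\psi$ and $\phi$ hold in $\cM$, contradicting $T$-unsatisfiability. The converse uses the Diagram Lemma: given $\cA\hookrightarrow\cM_1,\cM_2$, consider $T\cup\Delta_\Sigma(\cM_1)\cup\Delta_\Sigma(\cM_2)$ in the signature expanded with names for $|\cM_1|\cup|\cM_2|$ (sharing names on $|\cA|$); if this theory were inconsistent, compactness and quantifier-free interpolation produce a quantifier-free formula in the common language (i.e.\ over $|\cA|$) witnessing a contradiction inside $\cA$ already, which is absurd. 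Any model of the consistent union is the desired amalgam.

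For (ii), I would repeat the same argument but in an expanded signature $\Sigma\cup\Sigma'$, treating the free symbols of $\Sigma'$ as uninterpreted. The ``strong amalgamation $\Rightarrow$ general interpolation'' direction proceeds exactly as above for ground $\Sigma\cup\Sigma'$-formulae $\phi,\psi$, except that the symbols of $\Sigma'$ shared by $\phi$ and $\psi$ must receive a \emph{single, well-defined} interpretation in the amalgam. This is precisely where strong amalgamation intervenes: if $f\in\Sigma'$ appears in both $\phi$ and $\psi$ and is applied to a common term $t$, then $f^{\cM_1}(t^{\cM_1})$ and $f^{\cM_2}(t^{\cM_2})$ must be identified in $\cM$, and the strong condition guarantees the identified element already lies in $|\cA|$, so the interpretation extends coherently. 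Conversely, if general interpolation holds, strong amalgamation follows by applying the Diagram Lemma argument after introducing, for each pair $(m_1,m_2)\in(|\cM_1|\setminus|\cA|)\times(|\cM_2|\setminus|\cA|)$, a fresh unary predicate or constant from an auxiliary $\Sigma'$ that forces the non-identification; an interpolant on the common (i.e.\ $|\cA|$-)language would again contradict the assumption that $m_1,m_2\notin|\cA|$.

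The main obstacle I anticipate is the bookkeeping in the strong case: controlling the interpretation of shared symbols from $\Sigma'$ in the amalgam, and making sure that the compactness-plus-diagram construction yields a \emph{set-theoretically} coherent extension of $f^{\cA}$ rather than just an abstract identification. The cleanest way to handle this is to incorporate, into the combined diagram $\Delta_\Sigma(\cM_1)\cup\Delta_\Sigma(\cM_2)$, the Skolem-style literals $\bar a\neq \bar b$ for all $a\in|\cM_1|\setminus|\cA|$ and $b\in|\cM_2|\setminus|\cA|$, so that the strong condition is forced by consistency, and this is exactly where the contrast between plain and general interpolation becomes visible.
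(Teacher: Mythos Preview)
The paper does not give its own proof of this theorem: it attributes (i) to an old result of Bacsich and (ii) to the reference \cite{BGR14}, and only states the result. So there is nothing in the paper to compare your argument against beyond the fact that your route is the standard one behind those citations.

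Your outline is essentially correct and matches the classical proofs. A couple of points are worth tightening. In the forward direction of (i), the ``dually'' clause is superfluous: you only need one pass---take $\Gamma(\ux)$ the quantifier-free $\ux$-consequences of $\psi$, observe $\Gamma\cup\{\phi\}$ is $T$-consistent (else a finite conjunction from $\Gamma$ is an interpolant), realize it in some $\cM_2$, let $\cA$ be the substructure generated by the $\ux$-values (a $T$-model since $T$ is universal), and then show $\Delta_\Sigma(\cA)\cup\{\psi\}$ is $T$-consistent to get $\cM_1$. In the forward direction of (ii), your explanation of the role of strong amalgamation is slightly off: the issue is not that values $f^{\cM_1}(t)$ and $f^{\cM_2}(t)$ ``must be identified,'' but rather that you want to \emph{define} $f^\cM$ piecewise---as $f^{\cM_1}$ on $\mu_1(|\cM_1|)$, as $f^{\cM_2}$ on $\mu_2(|\cM_2|)$, arbitrarily elsewhere---and strong amalgamation guarantees that $\mu_1(|\cM_1|)\cap\mu_2(|\cM_2|)$ is exactly the image of $|\cA|$, where the two partial definitions already agree because $\cA$ is a common substructure for the shared symbols. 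For the converse of (ii), the inequality literals $\bar a\neq\bar b$ you mention at the end are indeed the clean device; the ``fresh unary predicate'' detour is unnecessary.
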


We underline that, in presence of stable infiniteness, strong amalgamation is a modular property (in the sense that it transfers to signature-disjoint unions of theories), whereas amalgamation is not (see again~\cite{BGR14} for details). 
As a special case,
since \EUF has strong amalgamation and is stably infinite, 
the following result follows:

\begin{theorem}\label{thm:ti+euf}
 If $T$ is stably infinite and has strong amalgamation, so does $T\cup \EUF$.
\end{theorem}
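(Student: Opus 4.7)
The plan is to apply the modularity result for strong amalgamation that is explicitly recalled in the paragraph immediately preceding the theorem (citing~\cite{BGR14}). By that result, if two theories $T_1$ and $T_2$ over disjoint signatures are both stably infinite and both strongly amalgamable, then $T_1 \cup T_2$ is strongly amalgamable as well. So everything reduces to checking that the two theories $T$ and $\EUF$ in question meet the hypotheses of the modularity theorem.

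First I would observe that, when forming the union $T \cup \EUF$, the $\EUF$ component is understood in a signature $\Sigma'$ disjoint from the signature $\Sigma$ of $T$, so the disjointness assumption of the modularity result is satisfied. Next, $T$ is stably infinite and strongly amalgamable by hypothesis. For the $\EUF$ side, stable infiniteness is standard (any model of $\EUF(\Sigma')$ embeds into an infinite one by adding fresh elements and extending the free symbols arbitrarily), and strong amalgamation of $\EUF(\Sigma')$ is also standard: given $\Sigma'$-structures $\cM_1,\cM_2$ sharing a common substructure $\cA$, one forms the amalgam on the set-theoretic pushout $|\cM_1|\sqcup_{|\cA|}|\cM_2|$ and extends the interpretations of the symbols of $\Sigma'$ in the obvious way (using any fixed default value on pairs coming from different factors, which is unproblematic since $\Sigma'$ contains no axioms). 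The strong condition (that two elements identified by the embeddings must come from $\cA$) is automatic from the construction via pushout of sets.

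With both theories stably infinite and strongly amalgamable over disjoint signatures, the modularity theorem from~\cite{BGR14} directly yields that $T \cup \EUF$ has the strong amalgamation property. Stable infiniteness of the union is then immediate: given a $T \cup \EUF$-satisfiable quantifier-free formula, by Nelson--Oppen style reasoning (or equivalently, by embedding any model of $T \cup \EUF$ into one whose $T$-reduct is infinite and then further into one in which the $\EUF$-reduct is also infinite, using stable infiniteness of each component) we find an infinite model.

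The only step requiring care is invoking modularity correctly; in particular one must make sure the $\EUF$-part is formalized over a genuinely disjoint signature, so that the hypotheses of the cited modularity theorem strictly apply. Everything else is essentially bookkeeping: the heavy lifting has already been done in~\cite{BGR14}, and the role of the present theorem is simply to package the special case where one of the components is $\EUF$, which will be used repeatedly in the sequel to lift strong amalgamation of an index theory $T_I$ to $T_I \cup \EUF$.
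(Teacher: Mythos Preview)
Your proposal is correct and follows essentially the same route as the paper: the paper simply notes, in the sentence preceding the theorem, that strong amalgamation is modular under stable infiniteness (citing~\cite{BGR14}) and that $\EUF$ is stably infinite and strongly amalgamable, so the result is an immediate special case. You supply a bit more detail (the pushout construction for strong amalgamation of $\EUF$, and the separate argument for stable infiniteness of the union), but the underlying argument is identical.
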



%
\section{Arrays with MaxDiff}
\label{sec:tharr}

The  \emph{McCarthy theory of arrays}~\cite{mccarthy} has three
sorts $\ARRAY, \ELEM, \INDEX$ (called ``array'', ``element'', and
``index'' sort, respectively) and two function symbols $rd$ (``read'') and $wr$ (``write'')  
of appropriate arities; its axioms are:
\begin{eqnarray*}
  \forall y, i, e. & & rd(wr(y,i,e),i) \uguale e \\
  \forall y, i, j, e. & & i \not\uguale j \imp rd(wr(y,i,e),j)\uguale rd(y,j) .
\end{eqnarray*}
Arrays \emph{with extensionality} 
have the further
axiom
\begin{eqnarray}\label{ext}
 \forall x, y. 
 x \not\uguale y \imp (\exists i.\ rd(x,i)\not\uguale rd(y,i)),
\end{eqnarray}
called the `extensionality' axiom. This theory is not universal and does not have quantifier-free interpolants. 
Here, we want to  introduce a variant of this theory where Axiom~\eqref{ext} is skolemized via a function $\diff$ with a precise semantic interpretation: it returns \emph{the biggest index} where two different arrays differ. 
We first need the notion of index theory.

\begin{definition}\cite{ourfossacs}\label{def:index}
 An \emph{index theory} $T_I$ is a mono-sorted theory (\INDEX is its sort) satisfying the following conditions:
 \begin{compactenum}
  \item[-] $T_I$ is universal, stably infinite and 
  has the general quantifier-free interpolation property (i.e., it is strongly amalgamable, see Theorem~\ref{thm:interpolation-amalgamation});
  \item[-] $SMT(T_I)$ is decidable;
  \item[-] $T_I$ extends the theory $TO$ of linear orderings with a distinguished  element $0$.
 \end{compactenum}
\end{definition}
We recall that $TO$ is the theory whose only proper symbols (beside equality) are a
 binary predicate $\leq$ and a constant $0$ subject to the  axioms saying that $\leq$ is reflexive, transitive, antisymmetric and total. 
 Thus, the signature of 
 $T_I$ contains at least the binary relation symbol $\leq$
  and the constant $0$. In the paper, when we speak of a $T_I$-term, $T_I$-atom, $T_I$-formula, etc. we mean a term, atom, formula in the signature of $T_I$.
  %
 Below, we use the abbreviation $i<j$ for $i\leq j \land i\not \uguale j$. The constant $0$ is 
 used to separate `positive' indexes - those satisfying $0\leq i$ - from the remaining `negative' ones.

Examples of index theories are $TO$ itself, integer difference logic \IDL, integer linear arithmetic \LIA, and  real linear arithmetics \LRA. In order to match the  requirements of Definition~\ref{def:index}, one need however to make a careful choice of the language (see~\cite{BGR14} for details): most importantly, notice that 
 integer (resp., real) division  by all positive 
integers should be added to the language of \LIA (resp. \LRA). For most applications, \IDL 
(
which is the theory of integer numbers with 0, ordering, successor and predecessor) 
is sufficient as in this theory one can model counters for scanning arrays.

Given an index theory $T_I$, we can now introduce our \emph{contiguous array theory with maxdiff} $\AXD(T_I)$ (parameterized by $T_I$) as follows.
We still have three sorts $\ARRAY, \ELEM, \INDEX$; the language includes the symbols of $T_I$, the read and write operations $wr, rd$, a binary function $\diff$
of type $\ARRAY \times \ARRAY \to \INDEX$, a unary function $\len{-}$ of type 
$\ARRAY\to \INDEX$,
as well as  constant $\bot, \el$ of sort \ELEM. 
The constant $\bot$ models an undefined value; the term $\diff(x,y)$ returns the maximum index where $x$ and $y$ differ and returns 0 if $x$ and $y$ are equal.~\footnote{Notice that it might well be the case that 
$\diff(x,y)=0$ for different $x,y$, but in that case $0$ is the only index where 
$x,y$ differ. } The term $\len{a}$ indicates the \emph{(strong) length} of $a$, meaning that $a$ is allocated in the interval $[0, \len{a}]$ and undefined outside. 
Formally, the axioms of $\AXDTI$
 include, besides the axioms of $T_I$, the following ones:  
\begin{equation}
\label{ax1}
\forall y,i,e,\ \len{wr(y,i,e)}=\len{y}
\end{equation}
\begin{equation}
\label{ax2}
\forall y,i,\ wr(y,i,\bot)=y 
\end{equation}
\begin{equation}
\label{ax3}
\forall y,i,e,\ (e\neq\bot \land \ 0\le i\le \len{y}) \rightarrow rd(wr(y,i,e),i)=e
\end{equation}
\begin{equation}
\label{ax4}
\forall y,i,j,e,\ i\neq j\rightarrow  rd(wr(y,i,e),j)=rd(y,j)
\end{equation}
\begin{equation}
\label{ax5}
\forall y,i,\ rd(y,i)\neq \bot \leftrightarrow 0\le i\le \len{y}
\end{equation}
\begin{equation}
\label{ax6}
\forall y,\ \len{y}\ge 0
\end{equation}
\begin{equation}
\label{ax7}
\forall y,\diff(y,y)=0
\end{equation}
\begin{equation}
\label{ax8}
\forall x,y,\ x\neq y \rightarrow rd(x,\diff(x,y))\neq rd(y,\diff(x,y)). 
\end{equation} 
\begin{equation}
\label{ax9} 
\forall x,y,i,\ \diff(x,y)<i \rightarrow rd(x,i)=rd(y,i).
\end{equation}
\begin{equation}
\label{ax12} 
\bot \neq \el.
\end{equation}

Axiom~\eqref{ax12} prevents the \ELEM\ sort to contain just $\bot$ (thus trivializing a model).
Since an array $a$ is fully allocated only in the interval $[0,\len{a}]$, any reading or writing attempt outside that interval should produce some runtime error; similarly, it should be impossible to overwrite $\bot$ inside that interval. In our declarative context, there is nothing like a `runtime error', so we assume that such illegal operations simply do not produce any effect. However, when applying the theory to code annotations, the verification conditions should include that no memory violation like the above ones occur (that is, when, e.g., a term like $rd(b,i)$ occurs, it should be accompanied by the proviso annotation $0\leq i\leq \len{a}$, etc.).

As we shall see the above theory enjoys amalgamation (i.e., plain quantifier-free interpolation) but not strong amalgamation (i.e., it lacks the general quantifier-free interpolation). To restore it, it is sufficient to add some (even limited) support for constant arrays: we 
call the related theory 
$\estesa$. The extension is interesting by itself, because it  increases the expressivity of the language: in $\estesa$, applying the $wr$ operation to terms $\Const(i)$,  one can encode  all finite lists (if $T_I$ has a reduct to \IDL).
Formally, 
$\estesa$ has an additional unary function $\Const:\INDEX\to \ARRAY$, constrained by the following axioms:
\begin{equation}
\label{ax10} 
\forall i, \len{\Const(i)} = \max(i,0).
\end{equation}
\begin{equation}
\label{ax11} 
\forall i,j,~ (0\leq j \wedge j \leq \len{\Const(i)} \to rd(\Const(i),j)=el).
\end{equation}
(we assume without loss of generality that $\max$ is a symbol of $T_I$ - in fact it is definable in it). 

The following easy facts will be often used in our proofs:

\begin{lemma}\label{lem:easy}
 The following \formulae are $\AXDTI$-valid
 \begin{equation}\label{eq:diffmax}
 \len{a} \neq \len{b} \to \diff(a,b)= \max(\len{a}, \len{b})~~
 \end{equation}
 \begin{equation}\label{eq:triangular}
  \max(\diff(a,b), \diff(b,c))\geq \diff(a,c)~~.
  \end{equation}
\end{lemma}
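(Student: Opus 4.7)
My plan is to handle the two formulas separately by small case analyses, using axioms \eqref{ax5}, \eqref{ax7}, \eqref{ax8} and crucially \eqref{ax9} (the latter being essentially the ``maxdiff'' characterisation).

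For \eqref{eq:diffmax}, I would assume without loss of generality that $\len{a} < \len{b}$, so $\max(\len{a},\len{b}) = \len{b}$; the other case is symmetric since $\diff$ is symmetric (which itself follows from \eqref{ax7}--\eqref{ax9} by a short argument, or can be assumed as a folklore consequence). The goal becomes $\diff(a,b) = \len{b}$. First I would show $\diff(a,b) \geq \len{b}$: by \eqref{ax5}, at index $\len{b}$ we have $rd(b,\len{b}) \neq \bot$, while $rd(a,\len{b}) = \bot$ because $\len{b} > \len{a}$; in particular $a \neq b$ and they actually differ at $\len{b}$. If we had $\diff(a,b) < \len{b}$, then \eqref{ax9} applied with $i = \len{b}$ would give $rd(a,\len{b}) = rd(b,\len{b})$, contradiction. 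Second, I would show $\diff(a,b) \leq \len{b}$: if $\diff(a,b) > \len{b}$, then also $\diff(a,b) > \len{a}$, so by \eqref{ax5} both $rd(a,\diff(a,b))$ and $rd(b,\diff(a,b))$ equal $\bot$, contradicting \eqref{ax8} (which applies because $a \neq b$). Combining the two inequalities yields the claim.

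For \eqref{eq:triangular}, set $k := \max(\diff(a,b), \diff(b,c))$ and split on whether $a = c$. If $a = c$ then $\diff(a,c) = 0$ by \eqref{ax7}, and $k \geq 0$ since each $\diff(\cdot,\cdot)$ is non-negative (this sub-claim I would justify once: for equal arguments use \eqref{ax7}; for distinct arguments, \eqref{ax8} forces at least one of the two reads to be different from $\bot$, and \eqref{ax5} then forces the diff-index to lie in $[0,\len{\cdot}]$). If $a \neq c$, I argue by contradiction: suppose $k < \diff(a,c)$, and let $i := \diff(a,c)$. Then $i > \diff(a,b)$ and $i > \diff(b,c)$, so two applications of \eqref{ax9} give $rd(a,i) = rd(b,i) = rd(c,i)$, contradicting \eqref{ax8} applied to the pair $a,c$.

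Neither step is technically deep; the only mildly delicate point is the implicit use of symmetry of $\diff$ and of non-negativity of $\diff$, both of which I would either prove as a short preliminary lemma or absorb into the case split. The main conceptual content is simply the observation that axiom \eqref{ax9} is what turns $\diff$ into a genuine ``maximum index of disagreement'' operator, and both inequalities are direct manifestations of this fact.
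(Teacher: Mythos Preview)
Your argument is correct and is precisely the natural derivation from axioms~\eqref{ax5}--\eqref{ax9}; the paper itself does not spell out a proof of this lemma (it is introduced as an ``easy fact'' and left to the reader), so there is nothing to compare against beyond noting that your route is the intended one. The only cosmetic remark is that you do not actually need symmetry of $\diff$ for \eqref{eq:diffmax}: the case $\len{b} < \len{a}$ can be argued directly with the same two inequalities, swapping the roles of $a$ and $b$ in the applications of \eqref{ax5} and \eqref{ax8}, so the preliminary symmetry sub-lemma can be dropped.
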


The
next lemma 
follows
 from the 
axioms of \AXDTI: 

\begin{lemma}\label{lem:univinst}
 An atom like $a=b$ is equivalent (modulo \AXDTI) to 
 \begin{equation}\label{eq:eleq}
  \diff(a,b)=0 \wedge rd(a,0)=rd(b,0)~.
 \end{equation}
An atom like $a=wr(b,i,e)$ is equivalent (modulo \AXD) to the conjunction of the following formulae
\begin{equation}
\begin{aligned}
    (e\neq \bot \land 0 \le i\le |b|)\rightarrow rd(a,i)=e\\
    (i<0 \lor i> |b| \lor e=\bot)\rightarrow rd(a,i)=rd(b,i)\\
    \forall h\ (h\neq i\rightarrow rd(a,h)=rd(b,h)).
\end{aligned}
\label{wrt3}
\end{equation}
An atom of the kind $\len{a}=i$ is equivalent to:  
\begin{equation}
\label{lunghequivt3}
    i\ge 0~\wedge ~
    \forall h\ (rd(a,h)\neq \bot \leftrightarrow 0\le h \le i).
\end{equation}
\end{lemma}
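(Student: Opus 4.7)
Each of the three equivalences is a direct manipulation of the axioms \eqref{ax1}--\eqref{ax9}; I would handle them in order, reusing the first in the proof of the second.

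For \eqref{eq:eleq}, the left-to-right direction is immediate from axiom~\eqref{ax7} and reflexivity of equality under $rd$. For the converse, I argue by contraposition: if $a\neq b$, axiom~\eqref{ax8} forces $rd(a,\diff(a,b))\neq rd(b,\diff(a,b))$, and since the hypothesis is $\diff(a,b)=0$, this collapses to $rd(a,0)\neq rd(b,0)$, contradicting the second conjunct. A slight generalization worth noting in passing (and used later) is the ``extensionality'' principle: if $rd(a,k)=rd(b,k)$ for every $k$, then $a=b$, again by contrapositing axiom~\eqref{ax8}.

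For \eqref{wrt3}, the left-to-right direction unfolds $a=wr(b,i,e)$ through the read/write axioms: axiom~\eqref{ax3} handles the first clause, axiom~\eqref{ax4} handles the last, and the middle clause splits into two subcases --- if $e=\bot$ then axiom~\eqref{ax2} gives $a=b$, while if $i$ lies outside $[0,\len{b}]$ then axiom~\eqref{ax1} gives $\len{a}=\len{b}$ and axiom~\eqref{ax5} forces $rd(a,i)=\bot=rd(b,i)$. For the converse, set $c:=wr(b,i,e)$ and apply the extensionality principle above: I verify $rd(a,k)=rd(c,k)$ for every index $k$, distinguishing $k=i$ (use the first two clauses on the $a$ side, axioms~\eqref{ax2}--\eqref{ax3} together with \eqref{ax1} and \eqref{ax5} on the $c$ side) from $k\neq i$ (use the third clause on the $a$ side and axiom~\eqref{ax4} on the $c$ side). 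The main subtlety is that the three clauses of the right-hand side do not visibly mention $\len{a}$, so pinning down $rd(c,i)$ in the out-of-bounds case really requires invoking axiom~\eqref{ax1} to transfer the bound from $b$ to $c$.

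For \eqref{lunghequivt3}, the left-to-right direction is immediate from axioms~\eqref{ax6} and~\eqref{ax5}. For the converse, axiom~\eqref{ax5} rewrites the assumed biconditional on $rd(a,-)$ as $0\le h\le i \leftrightarrow 0\le h\le \len{a}$; instantiating $h$ at $i$ (using $i\ge 0$) yields $i\le \len{a}$, and instantiating at $\len{a}$ (using axiom~\eqref{ax6}) yields $\len{a}\le i$, so $\len{a}=i$. The only delicate point across the three parts is the extensionality step in \eqref{wrt3}, but that is precisely what axiom~\eqref{ax8} provides.
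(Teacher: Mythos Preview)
Your proof plan is correct and fills in exactly the routine axiom-chasing that the paper omits: the paper states only that the lemma ``follows from the axioms of \AXDTI'' and gives no explicit argument, so there is nothing further to compare against. The one small point worth tightening is in the converse of \eqref{wrt3}: when $i\notin[0,\len{b}]$ you should also observe $rd(b,i)=\bot$ (via axiom~\eqref{ax5}) so that $rd(c,i)=rd(b,i)$ actually holds---you implicitly use this but it deserves a word.
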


\begin{lemma}\label{lem:constinst}
 An atom like $\Const(i)=a$ is equivalent (modulo \estesa) to 
 \begin{equation}\label{constequivt}
 \len{a}=i \wedge \forall h\ (0\leq h\leq i \to rd(a,h)=el)~~.
 \end{equation}

\end{lemma}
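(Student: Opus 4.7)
The plan is to establish both directions of the biconditional by combining the dedicated axioms for $\Const$, namely~\eqref{ax10} and~\eqref{ax11}, with the extensionality information supplied by the $\diff$-axioms~\eqref{ax8}--\eqref{ax9} and the boundary axiom~\eqref{ax5}.

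For the forward implication, assume $\Const(i)=a$. By congruence and Axiom~\eqref{ax10} we get $\len{a}=\max(i,0)$, which combined with Axiom~\eqref{ax6} (forcing $\len{a}\geq 0$) yields $\len{a}=i$ once $i\geq 0$. Axiom~\eqref{ax11} applied to $\Const(i)$ then gives $rd(a,h)=rd(\Const(i),h)=el$ for every $h$ with $0\leq h\leq i$, which is exactly the second conjunct of the right-hand side.

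For the backward implication, assume $\len{a}=i$ together with $\forall h\,(0\leq h\leq i \to rd(a,h)=el)$. From Axiom~\eqref{ax6} we infer $i\geq 0$, whence $\len{\Const(i)}=\max(i,0)=i=\len{a}$. To conclude $\Const(i)=a$, I would use the contrapositive of Axiom~\eqref{ax8}: it suffices to show that $rd(\Const(i),h)=rd(a,h)$ for every index $h$. A two-case analysis does this: if $0\leq h\leq i$, both reads return $el$ by Axiom~\eqref{ax11} and the hypothesis on $a$; if $h<0$ or $h>i$, Axiom~\eqref{ax5} applied to both arrays (which share the same length $i$) forces both reads to equal $\bot$. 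If $\Const(i)\neq a$, instantiating~\eqref{ax8} at the witness $\diff(\Const(i),a)$ would produce an index of disagreement, contradicting the pointwise equality just established.

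The main obstacle is the extensionality step in the backward direction, where one must carefully leverage Axiom~\eqref{ax8} together with the length-based boundary condition~\eqref{ax5} so that agreement on the allocated interval $[0,i]$, augmented by forced undefinedness outside, upgrades pointwise equality of reads to equality of arrays. The remaining reasoning is pure axiom chasing. A minor wrinkle I would flag is the corner case $i<0$ in the forward direction: strictly speaking, $\Const(i)=a$ only forces $\len{a}=\max(i,0)$, so the stated equivalence implicitly presumes $i\geq 0$ (guaranteed in all intended uses, since atoms of shape $\Const(i)=a$ occur alongside index guards produced by the algorithm).
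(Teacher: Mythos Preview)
Your argument is correct and is exactly the axiom-chasing the paper has in mind; the paper in fact states this lemma without proof, treating it as an immediate consequence of Axioms~\eqref{ax10}, \eqref{ax11}, \eqref{ax5}, \eqref{ax6} together with extensionality via~\eqref{ax8}. Your observation about the corner case $i<0$ is also valid: as written, the equivalence in~\eqref{constequivt} fails in the forward direction when $i<0$ (since then $\len{a}=\max(i,0)=0\neq i$), and the lemma is tacitly meant under the standing proviso $i\geq 0$, which is how it is used later in the satisfiability and interpolation procedures.
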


Similarly to~\cite{TW16} and ~\cite{ourfossacs}, we now introduce iterated $\diff$ operations, that will be used in our interpolation algorithm. 
In fact, in addition to $\diff:=\diff_1$ we need 
an operator $\diff_2$ that returns the last-but-one index where $a,b$ differ (0 if $a,b$ differ in at most one index), 
an operator $\diff_3$ 
that returns the last-but-two index where $a,b$ differ (0 is they differ in at most two indexes), etc. Our language is already sufficiently expressive for that. 
Indeed, given array variables $a,b$, we define by mutual recursion the sequence of array terms $b_1, b_2, \dots$ and of index terms $\diff_1(a,b), \diff_2(a,b), \dots$:  
\begin{eqnarray*}
b_1~:=~b;\; ~~~
\diff_1(a,b):=~  \diff(a,b_1); \\
b_{k+1}~:=~wr(b_k, \diff_k(a,b), rd(a,\diff_k(a,b))); \\
\diff_{k+1}(a,b)~:=~\diff(a, b_{k+1}) ;
\end{eqnarray*}

A useful fact is that 
formulae like $\bigwedge_{j<l}\diff_j(a,b)=k_j$ can be eliminated in favor of universal clauses in a language whose only symbol  for array variables is $rd$. In detail:

\begin{lemma}\label{lem:elim}
 A formula like 
 \begin{equation}\label{eq:iterated_diff}
  \diff_1(a,b)=k_1 \wedge \cdots \cdots \wedge \diff_l(a,b)=k_l
 \end{equation}
is equivalent modulo $\AXDTI$ to the conjunction of the following 
seven formulae:
\begin{equation}
\begin{aligned}
\label{diffiterate}
    k_1\ge k_2 \land ... \land k_{l-1}\ge k_l \land k_l\ge 0\\
    \bigwedge_{j<l}(k_j>k_{j+1}\rightarrow rd(a,k_j)\neq rd(b,k_j))\\
    \bigwedge_{j<l}(\len{a}=\len{b} \land k_j=k_{j+1})\rightarrow k_j=0\\
    \bigwedge_{j\le l}(rd(a,k_j)= rd(b,k_j)\rightarrow k_j=0)\\
    \forall h\ (h>k_l \rightarrow rd(a,h)=rd(b,h)\lor h=k_1\lor ... \lor h=k_{l-1})\\
    \len{a} > \len{b} \rightarrow (k_1 = k_l \land k_l = \len{a})\\
    \len{b} > \len{a}\rightarrow (k_1 = k_l \land k_l = \len{b}).
\end{aligned}
\end{equation}
\end{lemma}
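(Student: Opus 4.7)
The plan is to establish both directions of the equivalence via an inductive description of the arrays $b_j$: by axiom~(\ref{ax1}), $\len{b_j}=\len{b}$, and by axiom~(\ref{ax4}), $rd(b_j,h)=rd(b,h)$ whenever $h\notin\{k_1,\ldots,k_{j-1}\}$.  Two regimes govern the argument, depending on whether $\len{a}=\len{b}$ or not, and I refer below to the seven conjuncts of~(\ref{diffiterate}) as formulae~(1)--(7).

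For the $(\Rightarrow)$ direction, I would first treat the length-mismatch regime.  By Lemma~\ref{lem:easy}, identity~(\ref{eq:diffmax}), $\diff(a,b)=\max(\len{a},\len{b})$; the write producing $b_2$ then either writes $\bot$, absorbed by axiom~(\ref{ax2}), or hits a strictly out-of-range index where the result is bound to remain $\bot$ by axioms~(\ref{ax1}) and~(\ref{ax5}).  In either case a read-by-read comparison via axiom~(\ref{ax4}) plus axiom~(\ref{ax8}) gives $b_2=b$; by induction $b_j=b$ and $\diff_j=\max(\len{a},\len{b})$ for all $j$, which yields~(6)--(7) directly and makes (1)--(5) vacuous or trivial.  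In the matching-length regime, I would proceed by induction: either $k_j=0$ and the recursion has already stabilised, or $k_j>0$, in which case axiom~(\ref{ax8}) together with the inductive description of $b_j$ gives $rd(a,k_j)\neq rd(b,k_j)$, axiom~(\ref{ax3}) makes the write defining $b_{j+1}$ effective, and axiom~(\ref{ax9}) forces $\diff_{j+1}<k_j$.  This monotonicity delivers~(1) and~(3); (4) is immediate from axiom~(\ref{ax8}); (2) is subsumed; and~(5) follows by applying axiom~(\ref{ax9}) to $\diff(a,b_l)=k_l$ and unfolding $b_l$ via axiom~(\ref{ax4}) outside $\{k_1,\ldots,k_{l-1}\}$.

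For the $(\Leftarrow)$ direction, assume the seven formulae and argue by induction on $j\le l$ that $\diff_j(a,b)=k_j$.  A preliminary observation, from~(4) and axiom~(\ref{ax5}), is that in the matching-length regime $0\le k_j\le \len{a}$ (otherwise both reads at $k_j$ are $\bot$, forcing $k_j=0$ by~(4), a contradiction when $k_j>\len{a}\ge 0$); hence every write is effective.  Formulae~(1) and~(3) guarantee that $k_{m+1}$ is distinct from $k_1,\ldots,k_m$ whenever $k_{m+1}>0$, so axiom~(\ref{ax4}) gives $rd(b_{m+1},k_{m+1})=rd(b,k_{m+1})$; combined with~(4) this yields $\diff(a,b_{m+1})\ge k_{m+1}$.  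The reverse inequality comes from~(5): for every $h>k_{m+1}$ either $h\in\{k_1,\ldots,k_m\}$ (and $b_{m+1}$ was patched to agree with $a$ at $h$) or $rd(a,h)=rd(b,h)$ with $h\notin\{k_1,\ldots,k_m\}$ (so $rd(b_{m+1},h)=rd(b,h)=rd(a,h)$ by axiom~(\ref{ax4})).  The length-mismatch regime is handled as before: $b_{m+1}=b$ and $\diff_{m+1}=\max(\len{a},\len{b})=k_{m+1}$ by~(6) or~(7).

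The main obstacle is the careful bookkeeping for ineffective writes, arising when $k_j$ is out of range or when $rd(a,k_j)=\bot$.  Formulae~(6) and~(7) cleanly isolate these pathologies into the length-mismatch regime, where $\diff_j$ stabilises immediately at $\max(\len{a},\len{b})$; once this separation is made, the matching-length analysis reduces to routine reasoning from the read/write/diff axioms of $\AXDTI$ together with Lemma~\ref{lem:easy}.
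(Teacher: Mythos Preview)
The paper states Lemma~\ref{lem:elim} without proof (neither the main text nor the appendix contains an argument for it; it is presented as a routine consequence of the axioms).  Your proposal is therefore not competing with any existing proof, and the approach you outline---splitting into the two regimes $\len{a}\neq\len{b}$ and $\len{a}=\len{b}$, and controlling both directions via the inductive description of the $b_j$'s---is sound and is exactly the natural way to organise the argument.

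A few points deserve tightening.  First, the step ``axiom~(\ref{ax9}) forces $\diff_{j+1}<k_j$'' in the forward direction actually needs both axioms~(\ref{ax8}) and~(\ref{ax9}): axiom~(\ref{ax9}) alone is only the implication $\diff<i\Rightarrow rd(x,i)=rd(y,i)$; to conclude $\diff(a,b_{j+1})<k_j$ from the fact that $a$ and $b_{j+1}$ agree on all $h\ge k_j$ you use the contrapositive together with axiom~(\ref{ax8}).  Second, in the backward direction the case $k_{m+1}=0$ is not explicitly covered by your lower-bound argument (which assumes $k_{m+1}>0$ to invoke formula~(4)); it is of course trivial since $\diff\ge 0$ always, but should be mentioned.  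Third, when you invoke formula~(5) for $h>k_{m+1}$ you obtain $h\in\{k_1,\ldots,k_{l-1}\}$, not $h\in\{k_1,\ldots,k_m\}$; the reduction to the latter set uses the monotonicity from~(1), which you should make explicit.  None of these are gaps---the argument goes through once they are spelled out.
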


\section{Embeddings}\label{sec:embeddings}

In this section we present some useful facts about embeddings that will be crucial in the proofs throughout the paper. 

We first introduce the third array theory $\AXEXTTI$, 
which is weaker than \AXDTI, lacks the \diff\ symbol and axiom~\eqref{ax9} is replaced by the following extensionality axiom:
\begin{equation}\label{eq:ext}
\forall x,y,\ x\neq y \rightarrow (\exists i,\ rd(x,i)\neq rd(y,i)).
\end{equation}
Notice that 
$\AXEXTTI\subseteq \AXDTI \subseteq \estesa$ (the inclusion holds both for signatures and for axioms). To simplify the statements of some lemmas below, let us also introduce the theory \extestesa: this theory is obtained from \AXEXTTI by adding the function symbol $\Const$ to the signature and the sentences \eqref{ax10},\eqref{ax11} to the axioms.

 We
now discuss the class of models of $\AXEXTTI$ and we 
clarify the important features of 
embeddings between such models. 
A model $\cM$ of \AXEXTTI  is \emph{functional} when the following conditions are satisfied: 
\begin{compactenum}
 \item[{\rm (i)}]
$\ARRAY^\cM$ is a subset of the set of all positive-support functions from $\INDEX^\cM$ to $\ELEM^\cM$
(a function $a$ is \emph{positive-support} iff there exists an index $\len{a}$ such that $\len{a}\geq 0$ and,  
for every $j$,
$a(j)\neq\bot$ iff  $j\in [0,\len{a}]$); 
 \item[{\rm (ii)}] $rd$ is function application;
 \item[{\rm (iii)}] $wr$ is
the point-wise update operation inside the interval $[0,\len{a}]$ (i.e., function 
$wr(a,i,e)$ returns the same values as function $a$,
except at the index $i$ and only in case $i\in [0,\len{a}]$: in this case  
it returns the
element $e$); 
 \item[{\rm (iv)}] if $\cM$ is also a model of \extestesa, then the set $\ARRAY^\cM$ contains the positive-support functions 
 with value $\el^{\cM}$ inside their support. 
\end{compactenum}
Because of the extensionality axiom~\eqref{eq:ext},
it can be shown
that every
model of \AXEXTTI or of \extestesa
is \emph{isomorphic  to a functional one}. 
For an array $a\in \INDEX^\cM$ in a functional model $\cM$ and for $i\in \INDEX^\cM$,  since $a$ is a function, we interchangeably use the notations $a(i)$ and $rd(a,i)$.

Let $a,b$ be
elements of $\ARRAY^\cM$ in a model $\cM$.  We say that
\emph{$a$ and $b$ are cardinality equivalent}
iff $\len{a}=\len{b}$ and $\{i\in \INDEX^\cM \mid \cM\models rd(a,i)\neq
rd(b,i)\}$ is finite.  
This relation in $\cM$ is 
an equivalence, 
that we denote as $\sim_\cM$ or simply as $\sim$.
We also write $\cM\models a\sim b$ to say that $a\sim_\cM b$ holds.

\begin{lemma}\label{lem:dependency}
  Let $\cN$, $\cM$ be models of  \AXEXTTI 
  such that $\cM$ is a
  substructure of $\cN$.  For every $a,b\in\ARRAY^\cM$, we have that
  \begin{eqnarray*}  
    \cM\models a\sim b & \mbox{ {\rm iff} } &
    \cN\models a\sim b.
  \end{eqnarray*}
\end{lemma}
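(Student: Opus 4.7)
The plan is to prove both directions by exploiting the preservation properties of substructures together with the extensionality axiom, reducing the cardinality-equivalence condition to an equality that is then preserved across the embedding. The condition $\len{a} = \len{b}$ is a ground atom built from function symbols and is therefore preserved and reflected by the substructure inclusion $\cM \subseteq \cN$; so from now on the work is on the finiteness of the differing-index set. Let $D_\cM = \{i \in \INDEX^\cM \mid \cM \models rd(a,i) \neq rd(b,i)\}$ and $D_\cN$ the analogous set in $\cN$.

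The easy direction is $\cN \models a \sim b \Rightarrow \cM \models a \sim b$: since atomic formulas are preserved and reflected by substructures, $D_\cM = D_\cN \cap \INDEX^\cM$, so finiteness of $D_\cN$ entails finiteness of $D_\cM$. The main obstacle is the converse direction, where the substructure $\cM$ might miss infinitely many indexes in $\cN$ at which $a$ and $b$ differ. The idea is to ``patch'' $a$ into $b$ using a finite sequence of writes indexed by $D_\cM$, force equality in $\cM$ via extensionality, and then transfer that equality up to $\cN$.

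Concretely, assume $D_\cM = \{i_1, \dots, i_n\}$ is finite (if empty, use extensionality directly). Note first that every $i_k$ satisfies $0 \le i_k \le \len{a}^\cM$: otherwise by axiom~\eqref{ax5} both $rd(a,i_k)$ and $rd(b,i_k)$ would equal $\bot$ (using $\len{a}=\len{b}$), contradicting $i_k \in D_\cM$. In particular $rd(b,i_k) \neq \bot$. Define the term $c_0 := a$ and $c_{k+1} := wr(c_k, i_{k+1}, rd(b, i_{k+1}))$, and set $c := c_n$. Repeated application of axiom~\eqref{ax1} gives $\len{c_k} = \len{a}$, so the precondition of axiom~\eqref{ax3} holds at each step, yielding $c_{k}(i_k) = rd(b,i_k)$; combining with axiom~\eqref{ax4} (the $i_k$'s are distinct), one gets $c(i_k) = rd(b,i_k)$ for each $k$, and $c(j) = a(j) = b(j)$ for every $j \in \INDEX^\cM \setminus \{i_1,\dots,i_n\}$ (the second equality because $j \notin D_\cM$). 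Hence $c$ and $b$ agree on every index in $\INDEX^\cM$, and the extensionality axiom~\eqref{eq:ext} of $\AXEXTTI$ (in its contrapositive reading inside $\cM$) gives $\cM \models c = b$.

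Now equality between elements of $\cM$ is preserved by the embedding into $\cN$, so $\cN \models c = b$. Reading $c$ inside $\cN$, axiom~\eqref{ax4} applied $n$ times shows that, for every $j \in \INDEX^\cN$ with $j \notin \{i_1,\dots,i_n\}$, $c(j) = a(j)$. Therefore $b(j) = c(j) = a(j)$ for all such $j$, which means $D_\cN \subseteq \{i_1,\dots,i_n\}$ and hence $D_\cN$ is finite, giving $\cN \models a \sim b$. The critical step, and the one I expect to be the main obstacle to get right cleanly, is the construction of $c$ and the verification that the preconditions of the write/read axioms are met so that extensionality can be invoked inside $\cM$; the transfer to $\cN$ is then essentially free because equality is an atomic relation preserved by substructures.
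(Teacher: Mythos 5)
Your proof is correct and follows essentially the same route as the paper's: the hard direction is handled by encoding the finitely many differences as an iterated-write equality (the paper writes $\cM\models a=wr(b,I,E)$ directly, you build $c$ from $a$ and conclude $c=b$ via extensionality) and transferring that atomic fact along the substructure inclusion, while the easy direction uses reflection of $rd$ and $\len{-}$. Your extra care in checking that the differing indexes lie in $[0,\len{a}]$ and that the written values are not $\bot$ (so the writes are not no-ops under axioms~\eqref{ax2} and~\eqref{ax3}) is a detail the paper elides but that is genuinely needed in this theory.
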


In a functional model $\cM$ of $\AXEXTTI$,
we say that $\diff(a,b)$ is \emph{defined} iff there is a maximum index where $a,b$ differ (or if $a=b$). If in the model $\cM$  the index sort $\INDEX$ is interpreted as  the set of the integers,
with standard ordering, then for any two positive-support functions $a,b$, we have that $\diff(a,b)$ is defined. However, this will not be the case if the index sort $\INDEX$ is interpreted, e.g., in some non-standard model  of the integers. We must take into considerations these models too, since 
we want to prove amalgamation. For this purpose, we need to build amalgams 
for \emph{all} models of the theory (only in that case in fact, amalgamation turns out to be equivalent to quantifier-free interpolation). Thus, we are forced to take into consideration below also phenomena that might arise only in non-standard models.

An embedding $\mu: \cM\longrightarrow \cN$ between $\AXEXTTI$-models (or of \extestesa-models) is said to be $\diff$-faithful iff, whenever $\diff(a,b)$ is defined, so is
$\diff(\mu(a),\mu(b))$ and it is equal to $\mu(\diff(a,b))$.
Since there might not be a maximum index where $a,b$ differ,  in principle it is not always possible to expand a functional model of $\AXEXTTI$ to a functional model of $\AXDTI$, if the set of indexes remains unchanged.
Indeed, in order to do that in a $\diff$-faithful way, 
one needs to explicitly add to $\INDEX^{\cM}$ 
new indexes including at least  
 the ones representing
the missing maximum indexes 
where two given array differ. 
This idea leads to Theorem~\ref{thm:extension} below, which is the main result of the current section.

\begin{theorem}\label{thm:extension} For every index theory $T_I$,
 every  model $\cM$ of $\AXEXTTI$ (resp. of \extestesa) has a $\diff$-faithful embedding into a model of $\AXD(T_I)$ (resp. of \estesa).
\end{theorem}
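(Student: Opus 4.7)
The plan is to build $\cN$ by an iterative/transfinite enlargement of $\cM$, adding at each stage new indices that serve as the missing $\diff$-values. Assume WLOG that $\cM$ is functional (as noted after the definition of functional model). By Lemma~\ref{lem:easy}, if $\len{a} \neq \len{b}$ then $\diff(a,b) = \max(\len{a}, \len{b})$ is already defined in $\cM$, so the only \emph{problematic} pairs are those with $\len{a} = \len{b}$ for which the set $D_{a,b} := \{i \in \INDEX^\cM \mid a(i)\neq b(i)\}$ has no maximum. Arrays in the same $\sim_\cM$-class cannot be problematic (only finitely many differences), so the problem lives between distinct $\sim$-classes of equal length.

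The basic one-step construction is as follows. Given a problematic pair $(a,b)$, I adjoin to $\INDEX^\cM$ a fresh element $d$ playing the role of $\sup D_{a,b}$: the type requiring $i \leq d \leq \len{a}$ for every $i \in D_{a,b}$ and $d < j$ for every $j \in \INDEX^\cM$ that is a strict upper bound of $D_{a,b}$ is finitely satisfiable in $\cM$ (because $\len{a}$ bounds $D_{a,b}$), hence by strong amalgamation of $T_I$ (Definition~\ref{def:index}) it can be realized in a $T_I$-extension of $\INDEX^\cM$. Then I extend every array $c \in \ARRAY^\cM$ to be defined at $d$: for the witnessing pair, assign $a(d) \neq b(d)$ (needed for axiom~(\ref{ax8})); for any other array $c$ with $\len{c} \geq d$, choose $c(d)$ so as to respect all the $\diff$-values and $\sim$-relations already fixed in $\cM$. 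Concretely, for each previously-existing pair $(c_1,c_2)$ with $\diff^\cM(c_1,c_2) = k$ defined in $\cM$, if $d > k$ the assignment must give $c_1(d) = c_2(d)$, so that axiom~(\ref{ax9}) is preserved; otherwise values can be chosen freely subject to the functionality of $\ARRAY$. Then iterate over all problematic pairs by transfinite induction, taking unions at limit stages; the process closes at some $\cN$ where $\diff$ is everywhere defined.

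In the limit $\cN$, axioms~(\ref{ax7})--(\ref{ax9}) hold by construction, so $\cN \models \AXDTI$. The inclusion $\mu:\cM \hookrightarrow \cN$ is a $\Sigma$-embedding and is $\diff$-faithful: whenever $\diff^\cM(a,b)$ already existed, no index strictly larger than it with $a,b$ disagreeing was introduced (we only ever insert suprema of previously unbounded difference sets), so $\diff^\cN(\mu(a),\mu(b)) = \mu(\diff^\cM(a,b))$. For the $\extestesa$-case, at every stage I additionally ensure that $\ARRAY^{\cM_\alpha}$ contains the positive-support constant-$\el$ function of length $\max(i,0)$ for each new index $i$, so that $\Const$ and axioms~(\ref{ax10}),~(\ref{ax11}) lift to the extension.

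The main obstacle will be Step~2 above: extending all arrays consistently at the new index $d$. The delicate point is that inserting $d$ into the linear order creates fresh constraints among \emph{all} pre-existing arrays, not just the witnessing pair $(a,b)$. In particular, any pair $(c_1,c_2)$ with pre-existing $\diff^\cM(c_1,c_2) = k < d$ must satisfy $c_1(d) = c_2(d)$, while arrays with an undefined but bounded-above difference set with supremum below $d$ must also agree at $d$. Showing that these simultaneous equality constraints on the new values $\{c(d) : c \in \ARRAY^\cM\}$ are globally satisfiable (i.e.\ admit a joint coherent assignment compatible with functionality and all $\sim$-classes preserved by Lemma~\ref{lem:dependency}) is the core combinatorial content of the argument, and will rely on tracking the equivalence relation ``$c_1$ and $c_2$ must agree above $\sup D_{a,b}$'' induced by the currently fixed $\diff$-values.
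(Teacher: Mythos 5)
Your high-level architecture is the same as the paper's: well-order the problematic pairs $(a,b)$, adjoin for each one a new index realizing the ``$\sup$ of the set $D_{a,b}$ of disagreement'' type, extend all arrays there, take unions at limits, and repeat $\omega$ times so that newly created pairs also get treated. The $T_I$-side of the construction is essentially right, although the tool is compactness plus Robinson's diagram lemma, not strong amalgamation: strong amalgamation amalgamates two models over a common submodel, it does not by itself let you realize a type, and your brief justification (``because $\len{a}$ bounds $D_{a,b}$'') does not give finite satisfiability either --- what is actually used is that $D_{a,b}$ is cofinal in its downward closure and has no maximum, so any finite fragment of the type already has a realization inside $\cM$ itself.

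The genuine gap is exactly the one you flag at the end and then defer: you never say \emph{how} to choose the values $c(d)$ so that the inclusion preserves $\len{-}$, $wr$, all already-defined $\diff$-values, and (in the $\extestesa$ case) $\Const$. The paper's proof (Lemma~\ref{immersionediff}) resolves this with a concrete recipe, not merely ``track the induced equivalence relation''. First it secures, via a preliminary embedding, a spare element $e\in\ELEM$ distinct from $\bot$ and $\el$ --- this is needed because you must assign $a(d)\neq b(d)$ with both values nonbottom, and you cannot in general use values already attached to other arrays without breaking $\diff$-preservation. Then for every array $c$ with $\len{c}\geq d$ it applies the dichotomy $(\star)$: $c(d):=e$ if $c$ eventually agrees with $a$ on $I:=D_{a,b}$, else $c(d):=\el$. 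Preservation of $\diff$ then reduces to the observation that $(\star)$ holds for $c_1$ iff for $c_2$ whenever $\diff^\cM(c_1,c_2)<d$, and preservation of $\Const$ requires the extra WLOG arrangement $(+)$ (that if one of $a,b$ is eventually $\el$ on $I$, it is $b$) so that constant arrays never satisfy $(\star)$. None of this is in your sketch, and without some such coherent global rule the ``simultaneous equality constraints'' you mention are not obviously jointly satisfiable --- that is the actual content of the lemma, and it must be supplied.
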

\begin{proof}
 It is sufficient to well-order the pairs $a,b\in \INDEX^\cM$ such that $\diff(a,b)$ is not defined in $\cM$, apply to each pair the contruction of the next lemma (taking unions at limit ordinals), and then repeat the
 whole
 construction $\omega$ times, taking union again.
\end{proof}

\begin{lemma}
\label{immersionediff} Let $\cM$ be a model of $\AXEXTTI$ (resp. of \extestesa) and let
$a,b\in \ARRAY^\cM$ be such that $\diff^{\cM}(a,b)$ is not defined. 
Then there are a model $\cN$ of $\AXEXTTI$ (resp. of \extestesa) and a \diff-faithful embedding
$\mu:\cM\rightarrow \cN$ such that $\diff^\cN(a,b)$ is defined. 
\end{lemma}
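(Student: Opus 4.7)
The plan is to enlarge $\cM$ by a single new index $d$ that will serve as $\diff(a,b)$ in $\cN$, together with whatever further indices are forced by $T_I$-closure, and then to extend every array of $\cM$ uniformly to the enlarged index sort. Since $\diff^\cM(a,b)$ is undefined, Lemma~\ref{lem:easy} forces $\len{a}=\len{b}$. Set $S:=\{i\in\INDEX^\cM\mid a(i)\neq b(i)\}$ (nonempty by extensionality since $a\neq b$, having no maximum by hypothesis, and contained in $[0,\len{a}]$), and let $U_S:=\{u\in\INDEX^\cM\mid \forall s\in S.\,u>s\}$ be its set of strict upper bounds in $\INDEX^\cM$; note $\len{a}\in U_S$, so $U_S\neq\emptyset$, and $S$ is infinite. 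The first step is to build an index model $\cN_I$ extending $\cM_I$ and containing a new element $d$ with $i<d$ for all $i\in S$, $d<u$ for all $u\in U_S$, and $d\neq i$ for all $i\in\INDEX^\cM$. By compactness applied to the diagram of $\cM_I$, the axioms of $T_I$, and these constraints, consistency holds: every finite subset can be realised inside $\cM_I$ itself by interpreting $d$ as any element of $S$ above the listed elements of $S$ and outside the finite avoidance set (such an element exists because $S$ is infinite). Let $\cN_I$ be the $T_I$-substructure generated by $\INDEX^\cM\cup\{d\}$ in any such model; universality of $T_I$ makes $\cN_I$ a $T_I$-model, and by construction $0\leq d \leq \len{a}=\len{b}$.

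Next, set $\INDEX^\cN:=\INDEX^{\cN_I}$ and $\ELEM^\cN:=\ELEM^\cM$. Extend each $c\in\ARRAY^\cM$ to $c^\cN$ by leaving its values on $\INDEX^\cM$ unchanged and assigning, at every new index $n$, the value $\bot$ if $n\notin[0,\len{c}]$ and $\el^\cM$ otherwise; override only at $n=d$ for $c\in\{a,b\}$, picking some $i_0\in S$ and setting $a^\cN(d):=a(i_0)$, $b^\cN(d):=b(i_0)$ (two distinct non-$\bot$ elements, witnessed by $i_0\in S$). Take $\ARRAY^\cN$ to be the closure of $\{c^\cN : c\in \ARRAY^\cM\}$ under pointwise $wr$ on $\INDEX^{\cN_I}$; in the \extestesa\ case, further include each $\Const^\cN(n)$ for $n\in\INDEX^{\cN_I}$, interpreted as the positive-support function valued $\el^\cM$ on $[0,\max(n,0)]$ and $\bot$ elsewhere. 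Then $\cN$ is functional by construction, hence a model of $\AXEXTTI$ (resp.\ \extestesa); length, $rd$, $wr$ (and $\Const$) are visibly preserved, so the inclusion $\mu:\cM\hookrightarrow\cN$ (identity on indices and elements, $c\mapsto c^\cN$ on arrays) is a $\Sigma$-embedding.

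It remains to verify $\diff^\cN(a,b)=d$ and $\diff$-faithfulness. For the former: $a^\cN(d)\neq b^\cN(d)$ by construction, and any $n>d$ in $\INDEX^{\cN_I}$ is either in $U_S\subseteq\INDEX^\cM$ (whence $a(n)=b(n)$ since $n\notin S$) or new (whence both values coincide: both $\bot$ if $n>\len{a}$ and both $\el^\cM$ if $n\leq \len{a}=\len{b}$); so $d$ is the maximum disagreement index. For $\diff$-faithfulness, fix $c_1,c_2\in\ARRAY^\cM$ with $k:=\diff^\cM(c_1,c_2)$ defined; then $c_1(k)\neq c_2(k)$ persists in $\cN$, and for $n>k$ in $\cN_I$, either $n\in\INDEX^\cM$ (handled by maximality of $k$ in $\cM$) or $n$ is new and Lemma~\ref{lem:easy} splits the analysis: if $\len{c_1}\neq\len{c_2}$ then $k=\max(\len{c_1},\len{c_2})$, forcing $n>\len{c_1},\len{c_2}$ and both values to be $\bot$; if $\len{c_1}=\len{c_2}$ the two default values coincide (both $\bot$ when $n>\len{c_1}$, both $\el^\cM$ when $n\leq \len{c_1}$). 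In every case $c_1^\cN(n)=c_2^\cN(n)$, so $\diff^\cN(c_1,c_2)=k$. The main obstacle is the index-level step: inserting $d$ into a $T_I$-model at precisely the right position in the order requires exploiting both the universality of $T_I$ (so that the generated substructure is still a $T_I$-model) and the fact that $T_I$ extends $TO$ (so that the order constraints $i<d$, $d<u$ are well-posed); once $\cN_I$ is fixed, the uniform default assignment of $\bot$ or $\el^\cM$ to arrays at new indices is exactly what preserves all previously-defined $\diff$-values.
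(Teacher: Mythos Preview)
Your construction has a genuine gap: the ad-hoc override $a^\cN(d):=a(i_0)$, $b^\cN(d):=b(i_0)$, applied \emph{only} to $a$ and $b$, breaks both preservation of $wr$ and $\diff$-faithfulness. Take any $i\in\INDEX^\cM\cap[0,\len{a}]$ and $e'\neq\bot$ with $e'\neq a(i)$, and set $a':=wr^\cM(a,i,e')\in\ARRAY^\cM$. Since $S$ is infinite, $a'\notin\{a,b\}$, so your rule gives $a'^\cN(d)=\el^\cM$ (the default, as $d\in[0,\len{a}]$). But $wr^\cN(a^\cN,i,e')(d)=a^\cN(d)=a(i_0)$ because $d\neq i$. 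Hence $\mu(wr^\cM(a,i,e'))\neq wr^\cN(\mu(a),i,e')$ whenever $a(i_0)\neq\el^\cM$, so $\mu$ is not an embedding. The same example shows your $\diff$-faithfulness argument is incomplete: $\diff^\cM(a,a')=i$ may lie in $\downarrow S$, so $i<d$ in $\cN_I$, yet $a^\cN(d)=a(i_0)\neq\el^\cM=a'^\cN(d)$, whence $\diff^\cN(a^\cN,a'^\cN)\geq d>i$. Your case analysis for $\diff$-faithfulness silently assumed both arrays receive the default value at $d$, which fails exactly when one of them is $a$ or $b$.

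The paper's proof avoids this by making the value assigned at the new index $k_0$ depend only on the \emph{tail behaviour} of the array along $I$: an array $c$ gets a distinguished value $e\neq\el^\cM,\bot^\cM$ at $k_0$ iff $c$ eventually agrees with $a$ on $I$ (condition~$(\star)$), and $\el^\cM$ otherwise. Since $I$ is infinite, $(\star)$ is invariant under $wr$ and under passing to arrays that agree with $c$ above any fixed index, which is precisely what makes $wr$ and $\diff$ preserved. This also explains why one first needs an element $e\notin\{\bot^\cM,\el^\cM\}$ (supplied by a preliminary lemma): you cannot separate the two $(\star)$-classes using only $\el^\cM$ and $\bot^\cM$. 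Your index-level compactness step and the default $\bot/\el$ extension at new indices other than $d$ are essentially the same as in the paper; the missing idea is that the value at $d$ must be assigned coherently to an entire $wr$- and $\diff$-stable class of arrays, not just to $a$ and $b$ individually.
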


\begin{proof} 
We can assume that 
$\ELEM^\cM$ 
has at least an element
 $e$, different from $\bot^\cM, \el^\cM$ (for details, see Lemma~\ref{aggiungoelem} in the appendix). 
 We must have $\len{a}=\len{b}$, otherwise $\diff(a,b)$ is defined and it is $\max(\len{a},\len{b})$ according to Lemma~\ref{lem:easy}.

 Let $I=\{i\in \INDEX^\cM\mid a(i)\neq b(i)\}$ be the set of indices without maximum element (hence infinite) where they differ. 
 Let $\downarrow I := \{j\in \INDEX^\cM \mid \exists i\in I,\ j\le i\}\supseteq I$.
 The condition
 $$
 (+)~~ ``\exists i\in I\,\forall j\in I\, ( j\geq i \to x(j)=\el)"
 $$
 cannot be satisfied  both for $x=a$ and $x=b$ (see the appendix). 
  In case one of them satisfies it, we assume it is $b$.
  
Let
$\Delta$ be the Robinson diagram of the $T_I$-reduct of $\cM$ and let $k_0$
be  a new constant; let us introduce the set
$$\Delta' := \Delta\cup\{ i<k_0 \mid i \in\,\downarrow I\} \cup
\{k_0<i \mid i \in \INDEX^\cM\setminus \downarrow I\}.
$$
By compactness theorem and since $I$ is infinite, the set 
 $\Delta'$ turns out to be consistent (see again the appendix for details). 

By Robinson Diagram Lemma, there exists a model $\cA$ of $T_I$ extending the 
$T_I$-reduct of  $\cM$; such $\cA$ contains in its support an element  $k_0$ such that
$$
\forall i\in\, \downarrow I,\ i<k_0,
$$ $$
\forall i\in \INDEX^\cM\setminus \downarrow I,\ k_0<i.
$$

We now take 
 $\ELEM^\cN=\ELEM^\cM$, $\INDEX^\cN=\INDEX^\cA$; we let also  $\ARRAY^\cN$
 to be the set of all positive-support functions from
  $\INDEX^\cN$ into $\ELEM^\cN$ (notice that this $\cN$ is trivially also a model of \extestesa).
  We observe  that $k_0< \len{a}^\cM$ and recall that $\len{a}^\cM=\len{b }^\cM$.
  
Let us now define the embedding 
 $\mu:\cM\rightarrow \cN$; at the level of the sorts \INDEX and \ELEM, we use inclusions. For the \ARRAY sort, we need to specify the value 
  $\mu(c)(k)$ for $c\in \ARRAY^\cM$ and $k\in \INDEX^\cN\setminus \INDEX^\cM$
  (for the other indices we keep the old $\cM$-value to preserve the read operation). Our definition for $\mu$ must preserve the maxdiff index (whenever already defined in $\cM$) and must guarantee that  $\diff^\cN(\mu(a),\mu(b))=k_0$ (by construction, we have $k_0>0$). 
  For a generic array 
   $c\in \ARRAY^\cM$, we operate as follows:
\begin{compactenum}
    \item if $\len{c}^{\cM}<k_0$ we put $\mu(c)(k_0)=\bot^\cM$, otherwise:
    \item if the condition $(\star)$ below holds, we put $\mu(c)(k_0)=e$,
    \item if such condition does not hold, we put $\mu(c)(k_0)=el^\cM$.
\end{compactenum}
The condition  $(\star)$ is specified as follows:
\begin{description}
 \item[$(\star)$] ~~
there is $i\in I$ such that for all $j\in I, j\geq i$ we have $c(j)=a(j)$.
\end{description}
\noindent
For all the remaining indexes $k\in \INDEX^\cN\setminus (\INDEX^\cM\cup \{k_0\})$: 
\begin{equation}
\label{indicinonk0}
    \mu(c)(k)=\begin{cases} 
\bot^\cM, & \mbox{if $k \notin [0, \len{c}^\cM]$}\\
\el^\cM, & \mbox{if $k \in [0, \len{c}^\cM]$}.
    \end{cases}
\end{equation}
Notice that   we have $\mu(a)(k_0)=e\neq \el^\cM=\mu(b)(k_0)$ (because $I$ is infinite and does not have maximum, hence condition  $(\star)$ holds for $a$ but not for $b$). 
   In addition:
\begin{compactenum}[$\bullet$]
    \item for all $i\in\INDEX^\cM$ such that $k_0<i$, 
    we have $i\notin\, \downarrow I$ (according to the construction of $k_0$) and
    consequently $i\notin\, I$, that is $a(i)=b(i)$;
    \item for all $i\in\INDEX^\cN\setminus (\INDEX^\cM\cup\{k_0\})$ such that $k_0<i$, since we have  $\len{a}^\cM=\len{b}^\cM$, we get
    $$\mu(a)(i)=\bot^\cM~{\rm iff}~ \mu(b)(i)=\bot^\cM,$$
    $$\mu(a)(i)=\el^\cM ~{\rm iff}~ \mu(b)(i)=\el^\cM.$$
\end{compactenum}
Hence, we can conclude that  $\diff^\cN(\mu(a),\mu(b))$ is defined and equal to $k_0$.

We only need to check that our $\mu$ preserves $rd, \len{-}, wr$, constant arrays and \diff\ 
(whenever defined). For preservation of constant arrays, we need to prove only that $\mu(\Const(i))(k_0)=\el^\cM$ in case $k_0<i$: this is clear, because $a$ does not satisfy (+), hence  $(\star)$ does not hold for $\Const(i)$. The other cases are proved in the appendix.
\end{proof}

\section{Strong Amalgamation for \estesa}\label{sec:strongamalg}

In this section, we prove that the most expressive theory of the paper \estesa has strong amalgamation. However, we also show that this is not the case for \AXDTI (even if it is amalgamable).
We recall that strong amalgamation holds for models of $T_I$  (see Definition~\ref{def:index}): this observation is crucial for the  following. %

Strong amalgamation of \estesa will be proved in two steps. First, we provide the amalgam construction for \estesa, where we also notice that the same arguments can be used to prove that \AXDTI has amalgamation too. Then, after exhibiting a counterexample showing that the strong amalgamation fails for \AXDTI, we check that the amalgam construction for \estesa satisfies the condition for being a $\estesa$-strong amalgam.

\subsection{Amalgam constructions}
Let 
 $\cM_1$ and $\cM_2$ be two models of $\estesa$  (resp. of $\AXDTI$); we want to amalgamate them over their common substructure $\cA$ and let $f_i$ be the embedding of $\cA$ into $\cM_i$
 (we assume that $f_i$ is just inclusion for the \INDEX and \ELEM components).
 We can assume w.l.o.g. 
 that our models are all functional and, by applying renaming, that 
$$
(\INDEX^{\cM_1}\setminus \INDEX^{\cA})\ \cap\ (\INDEX^{\cM_2}\setminus \INDEX^{\cA})\ =\ \emptyset 
$$ $$
(\ELEM^{\cM_1}\setminus \ELEM^{\cA})\ \cap\ (\ELEM^{\cM_2}\setminus \ELEM^{\cA})\ =\ \emptyset . 
$$ 

\noindent 
We build the amalgamated model in two steps.
We first embed  $\cM_1$ and $\cM_2$, via  the embeddings $\mu_i$ (i=1,2), into a model $\cM$ of \extestesa (resp., of $\AXEXTTI$) in a \diff-faithful way. Then $\cM$ is embedded, via another \diff-faithful embedding  $\mu'$ into a model $\hat{\cM}$ of  \estesa (resp., of $\AXDTI$): $\mu'$ is guaranteed
to exist
by Theorem~\ref{thm:extension}. 
$$
\begin{tikzcd}
& \cM_1 \ar[dr,"\mu_1"] 
&
&
& \\
\cA \ar[ur, "f_1"] \ar[dr, "f_2"']
&
& \cM \ar[r,"\mu'"]
& \hat{\cM}
&\\
& \cM_2 \ar[ur,"\mu_2"]
&
\end{tikzcd} 
$$

\subsection*{Construction of $\mu_i$}
We build the model $\cM$ 
and the two \diff-faithful embeddings $\mu_i:\cM_i\to \cM$ such that $  \mu_{1} \circ f_1 = \mu_{2} \circ  f_2 $. 

\noindent We let $\INDEX^{\cM}$ be a strong amalgam of 
$\INDEX^{\cM_1}$ and $\INDEX^{\cM_2}$ ($T_I$
enjoys strong amalgamation), whereas we let
$
\ELEM^{\cM}\ =\ \ELEM^{\cM_1}\cup \ELEM^{\cM_2} 
$. 
Let 
$\ARRAY^\cM$ be the set of all positive-support functions from 
 $\INDEX^\cM$ into $\ELEM^\cM$.

\noindent 
The $\INDEX$ and $\ELEM$ components of the embeddings $\mu_i$ will be inclusions. The definition of the value of $\mu_i(a)(k)$, for $a\in \ARRAY^{\cM_{i}}$ and $k \in \INDEX^\cM$, is given by cases as follows:
\begin{compactenum}[$\bullet$]
    \item if $k\in \INDEX^{\cM_i}$, we put $\mu_{i}(a)(k)= a(k)$;
    \item if $k \in \INDEX^{\cM_{3-i}} \setminus \INDEX^\cA$:
    let 
     ($2\star$) be the relation\footnote{When we write  $k>\diff^{\cM_i}(b,f_i(c))$ 
    we mean in fact that
     $k>\mu_i(\diff^{\cM_i}(b,f_i(c)))$ (this relation is meant  to hold in $\cM$).
     Our simplified notation is justified by the fact that 
      $\mu_i$ is inclusion for \INDEX sort.}
    \begin{eqnarray*} ''{\rm there~exist}~ c\in \ARRAY^\cA, b\in \ARRAY^{\cM_i}\ \\ {\rm s.t.} \ b\sim^{\cM_i} a,\ k>\diff^{\cM_i}(b,f_i(c))''
    ,\end{eqnarray*}  
    $$\small
    \mu_{i}(a)(k)=\begin{cases} 
f_{3-i}(c)(k), & \mbox{if ($2\star$) holds}\\ 
\bot^\cM, & \mbox{if ($2\star$) does not hold \& $k \notin [0, \len{a}^{\cM_i}]$}\\
el^\cM, & \mbox{if ($2\star$) does not hold \& $k \in [0, \len{a}^{\cM_i}]$}
    \end{cases}
    $$\small
    \item if $k \notin \INDEX^{\cM_{i}}\cup \INDEX^{\cM_{3-i}}$, we put
   $$ \mu_{i}(a)(k)=\begin{cases} 
\bot^\cM, & \mbox{if $k \notin [0, |a|^{\cM_i}]$}\\
el^\cM, & \mbox{if $k \in [0, \len{a}^{\cM_i}]$}.
    \end{cases}$$
\end{compactenum}

We need to prove that the functions 
 $\mu_i$: (i) are well-defined, (ii) are injective, (iii) preserve $\len{-}$,
 (iv) preserve 
 $rd$ and $wr$, (v) preserve  $\diff$, (vi) satisfy the condition $ \mu_{1} \circ f_1=  \mu_{2} \circ f_2 $, (vii) preserve constant arrays (this point is not needed for $\AXDTI$). 
 These checks 
 require a careful analysis of all the cases: all the details can be found in the appendix.

 As a consequence of these constructions, we get the following result for both \estesa and $\AXDTI$.
 \begin{theorem}\label{thm:axd_amalg}
 $\AXDTI$ and \estesa enjoy the amalgamation property.
\end{theorem}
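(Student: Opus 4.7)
The plan follows the two-step construction already laid out. First, I would verify that each $\mu_i$ defined by the three-case clause above is a \diff-faithful embedding into a model $\cM$ of \extestesa (resp.\ of $\AXEXTTI$) with $\mu_1 \circ f_1 = \mu_2 \circ f_2$. Then I would invoke Theorem~\ref{thm:extension} to \diff-faithfully embed $\cM$ into a model $\hat\cM$ of \estesa (resp.\ of $\AXDTI$) via $\mu'$; the sought amalgam is $\mu' \circ \mu_i$, and \diff-faithfulness at both stages guarantees that every $\diff$-value already defined in the $\cM_i$'s is preserved.

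The bulk of the work is the verification of properties (i)--(vii) of $\mu_i$. Injectivity (ii), preservation of $\len{-}$ (iii), and preservation of $rd$ (part of (iv)) are immediate from the case definition, since $\mu_i$ acts as inclusion on $\INDEX^{\cM_i}$ and the third/fourth clauses assign $\bot^\cM$ exactly outside $[0,\len{a}^{\cM_i}]$. For $wr$ in (iv), a single-index modification preserves both the cardinality class $\sim^{\cM_i}$ and the value of $\diff^{\cM_i}(\cdot, f_i(c))$ used in ($2\star$) at any $k\ne j$, so $\mu_i$ commutes with $wr$. For agreement (vi), the witness $(c, f_i(c))$ realises ($2\star$) with $\diff=0$, yielding $f_{3-i}(c)(k)$ on $\INDEX^{\cM_{3-i}}\setminus \INDEX^\cA$ consistently from both sides. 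Preservation of constant arrays (vii), needed only for \estesa, follows because the definition assigns $\el^\cM$ on the support of $\Const(j)$ outside $\INDEX^{\cM_i}$, and when ($2\star$) applies with $c = \Const(j)$ the value returned is again $\el^\cM$.

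The main obstacle, and the genuinely technical step, is well-definedness (i): the case split is exhaustive because $\INDEX^\cM$ is a strong amalgam of the $\INDEX^{\cM_j}$'s, but one must also show independence from the witness chosen in the ($2\star$) clause. Given two witness pairs $(c_1,b_1), (c_2,b_2)$ with, say, $k > \diff^{\cM_1}(b_1,f_1(c_1)) \geq \diff^{\cM_1}(b_2,f_1(c_2))$, one exploits $b_1 \sim^{\cM_1} b_2$ to obtain a finite symmetric-difference set $J \subseteq \INDEX^\cA$ above $\diff^{\cM_1}(b_1,f_1(c_1))$ with associated values $E = \{b_1(j)\mid j\in J\}$; then $c := wr(c_2, J, E)$ satisfies $c \sim^\cA c_2$ and one can prove $\diff^\cA(c_1,c) \le \diff^{\cM_1}(f_1(c_1), b_1) < k$, which yields the required equality $f_2(c_1)(k) = f_2(c)(k) = f_2(c_2)(k)$.

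Finally, for $\diff$-preservation (v), I would split on $\len{a}$ versus $\len{a'}$: the unequal case is handled by Lemma~\ref{lem:easy} via length preservation, while the equal case requires showing that for $k > \diff^{\cM_i}(a,a')$ outside $\INDEX^{\cM_i}$, condition ($2\star$) holds for $a$ iff it holds for $a'$ with compatible witnesses producing the same value $f_{3-i}(c)(k)$; this rests on the triangular inequality from Lemma~\ref{lem:easy} and the consistency established in (i). Once the well-definedness argument is in place, the remaining checks reduce to the same core insight concerning the interplay between the cardinality equivalence $\sim$ and the witness clause ($2\star$).
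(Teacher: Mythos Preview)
Your proposal is correct and follows essentially the same approach as the paper: the two-step construction via $\cM$ and then Theorem~\ref{thm:extension}, with the verification of properties (i)--(vii), and you have correctly identified well-definedness under witness change in ($2\star$) as the technical core, with the same $J,E$ and $c=wr(c_2,J,E)$ argument leading to $\diff^\cA(c_1,c)\le\diff^{\cM_1}(f_1(c_1),b_1)<k$. Two places where the paper is slightly more careful than your sketch: for length preservation (iii) the ($2\star$) case is not quite immediate and requires distinguishing $\len{b}=\len{c}$ from $\len{b}\neq\len{c}$ (using Lemma~\ref{lem:easy} in the latter); and for $\diff$-preservation (v) in the equal-length case the paper explicitly builds a new witness $\hat b=wr(a_2,I,E)$ and proves $\diff^{\cM_1}(b_1,\hat b)\le\diff^{\cM_1}(a_1,a_2)$ before invoking the triangular inequality, rather than appealing directly to the consistency argument from (i).
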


\subsection{The \estesa-amalgam is strong}

We now prove the main result of the section, i.e., strong amalgamation for \estesa.  
Unfortunately, this property does not hold for \AXDTI: as it is shown in the example below, we need constant arrays in the language 
(recall that strong amalgamation is equivalent to general quantifier-free interpolation, see Theorem~\ref{thm:interpolation-amalgamation}(ii)). 

\begin{example}\label{example1}
 Consider the following two formulae (where $P$ is a free predicate symbol):
 \begin{eqnarray*}
  (A)~~~& \len{a}=0 \wedge rd(a,0)= e \wedge P(a)~~ \\
  (B)~~~& \len{b}=0 \wedge rd(b,0)= e \wedge \neg P(b). 
 \end{eqnarray*}
The conjunction $(A)\wedge (B)$ is inconsistent because $a$ and $b$ are in fact the same array (because of Axioms~\ref{ax5} and~\ref{ax8}). 
  However, the only common variable is $e$; to get the interpolant, we can use $P(wr(\Const(0),0,e))$, 
  but then it is clear that the language lacking constant arrays does not suffice.
  \eop
 
  \end{example}

\begin{theorem}\label{thm:axd_strong_amalg} 
 $\estesa$ has strong amalgamation. 
\end{theorem}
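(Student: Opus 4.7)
The plan is to verify that the amalgam $\hat{\cM}$ constructed in Section~\ref{sec:strongamalg} is in fact a \emph{strong} $\estesa$-amalgam. Since $\mu':\cM\to \hat{\cM}$ is an embedding, it is injective on every sort, so the strong amalgamation condition at the level of $\hat{\cM}$ reduces to the same condition for the intermediate amalgam $\cM$ equipped with $\mu_1,\mu_2$: whenever $\mu_1(m_1)=\mu_2(m_2)$, produce $a\in|\cA|$ with $f_1(a)=m_1$ and $f_2(a)=m_2$.

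For the \INDEX\ and \ELEM\ sorts the property is built into the construction. For \INDEX, $\INDEX^\cM$ was chosen as a strong $T_I$-amalgam of $\INDEX^{\cM_1},\INDEX^{\cM_2}$ over $\INDEX^\cA$, which directly delivers the condition. For \ELEM, the renaming assumption at the start of Section~\ref{sec:strongamalg} enforces $\ELEM^{\cM_1}\cap\ELEM^{\cM_2}=\ELEM^\cA$ inside $\ELEM^\cM=\ELEM^{\cM_1}\cup\ELEM^{\cM_2}$, and since $\mu_1,\mu_2$ act as inclusions on \ELEM, the condition is immediate.

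The substantive case is the \ARRAY\ sort. Given $a_1\in\ARRAY^{\cM_1}$ and $a_2\in\ARRAY^{\cM_2}$ with $\mu_1(a_1)=\mu_2(a_2)$, I first extract two preparatory facts. Length preservation plus the \INDEX\ strong amalgamation shown above yields a common length $\len{a_1}=\len{a_2}=l\in\INDEX^\cA$; and for every $k\in\INDEX^\cA$, the equation $a_1(k)=\mu_1(a_1)(k)=\mu_2(a_2)(k)=a_2(k)$ forces the common value to lie in $\ELEM^{\cM_1}\cap\ELEM^{\cM_2}=\ELEM^\cA$. Next, I unfold the case-analysis in the definition of $\mu_2$ at an arbitrary $k\in\INDEX^{\cM_1}\setminus\INDEX^\cA$: the identity $a_1(k)=\mu_2(a_2)(k)$ forces $a_1(k)$ to be either $\bot^\cM$ (when $k\notin[0,l]$), $\el^\cM$, or $f_1(c)(k)$ for some witness $c\in\ARRAY^\cA$ of condition $(2\star)$; a symmetric constraint pins down $a_2$ on $\INDEX^{\cM_2}\setminus\INDEX^\cA$. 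Invoking the well-definedness argument (item (i) in the construction of $\mu_i$), any two witnesses for $(2\star)$ yield the same value, so these local constraints are mutually coherent and collectively determine a single array $a\in\ARRAY^\cA$ whose $f_i$-image agrees with $a_i$ on every index of $\cM_i$.

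The main obstacle is the last step: packaging the pointwise constraints into a genuine element of $\ARRAY^\cA$. Here the richer signature of $\estesa$ is crucial, as Example~\ref{example1} already indicates: $\AXDTI$ fails strong amalgamation precisely because $\cA$ may lack any array carrying a shared value $e\in\ELEM^\cA$, whereas in $\estesa$ the term $wr(\Const(0),0,e)$ supplies one, and more generally $\Const^\cA(l)$ modified by finitely many $wr$-updates (indexed by $\INDEX^\cA$ and $\ELEM^\cA$, or compatible with a uniform $(2\star)$-witness $c$) realizes the required $a$ in $\ARRAY^\cA$. Checking that this construction preserves length and the read/write operations completes the proof.
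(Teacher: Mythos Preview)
Your reduction to the intermediate amalgam $\cM$, the treatment of the \INDEX\ and \ELEM\ sorts, and the observation that $\len{a_1}=\len{a_2}\in\INDEX^\cA$ and $a_1(k)=a_2(k)\in\ELEM^\cA$ for $k\in\INDEX^\cA$ are all correct and match the paper. The gap is in the last step, which you flag as ``the main obstacle'' but do not actually resolve.

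The problem is this: knowing that $a_1(k)\in\ELEM^\cA$ for every $k\in\INDEX^\cA$ does \emph{not} produce an element of $\ARRAY^\cA$, because $\ARRAY^\cA$ is not the set of all positive-support functions on $\INDEX^\cA$. Your suggestion that the desired $a$ is ``$\Const^\cA(l)$ modified by finitely many $wr$-updates'' or built from ``a uniform $(2\star)$-witness $c$'' begs the question twice. First, nothing in your argument forces the set $\{k\in\INDEX^\cA : a_1(k)\neq \el\}$ to be finite when $l$ is an infinite index, so finitely many writes to $\Const^\cA(l)$ need not suffice. Second, the $(2\star)$-witness $c$ may vary with $k$ (well-definedness only says different witnesses at the \emph{same} $k$ agree), and even if a single $c$ worked on all of $\INDEX^{\cM_1}\setminus\INDEX^\cA$, you still need $a_1$ to differ from $f_1(c)$ at only finitely many $\INDEX^\cA$-indices to finish with finitely many writes---which is exactly the statement $f_1(c)\sim^{\cM_1}a_1$ that you have not established.

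The paper closes this gap by a preprocessing step (Lemma~\ref{lemmaka}): before forming the amalgam, each $\cM_i$ is enlarged so that every array $a$ with $\len{a}\in\INDEX^\cA$ either (1) satisfies $f_i(c)\sim^{\cM_i}a$ for some $c\in\ARRAY^\cA$, or (2) carries, at some fresh index $k_a\in\INDEX^{\cM_i}\setminus\INDEX^\cA$, a value in $\ELEM^{\cM_i}\setminus\ELEM^\cA$ distinct from every $f_i(c)(k_a)$. Condition (2) is shown to be incompatible with $\mu_1(a_1)=\mu_2(a_2)$, forcing both arrays into case (1); then the finiteness coming from $\sim$ makes the finitely-many-writes construction go through. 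The $\Const$ symbol is used not in the final packaging but inside Lemma~\ref{lemmaka}: it guarantees that an array whose length is a \emph{finite} element of $\INDEX^\cA$ is automatically $\sim$-equivalent to $f_i(\Const^\cA(\len{a}))$, so only arrays of infinite length need the fresh-index trick.
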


\begin{proof}
We keep the same notation and construction as in the proof of Theorem~\ref{thm:axd_amalg}. However, it can be proved (for details, see
 Lemma~\ref{lemmaka} in the appendix) 
  that all arrays 
 $a\in \ARRAY^{\cM_i}$  (for $i=1,2$) whose length belongs to $\INDEX^\cA$ can be assumed to be s.t. one of the following two conditions are satisfied:
\begin{enumerate}
    \item there exists $c\in \ARRAY^\cA$ with $f_i(c)\sim^{\cM_i}a$;
    \item there exists $k_a\in \INDEX^{\cM_i}\setminus \INDEX^\cA$ 
    such that $a(k_a)$ is an element from $\ELEM^{\cM_i}\setminus \ELEM^\cA$ different from all the $f_i(c)(k_a)$, varying
     $c\in \ARRAY^\cA$. 
\end{enumerate}

Let 
$a_i\in \ARRAY^{\cM_i}$ ($i=1,2$) be s. t. $\forall k\in \INDEX^\cM$ we have
\begin{equation}
\label{indiciamalgama3}
  \mu_1(a_1)(k)=\mu_2(a_2)(k).  
\end{equation}
Notice that, since $T_I$ has the strong amalgamation property and the $\mu_i$ preserve length, this can only happen if $\len{a_1}=\len{a_2}$ belongs to $\INDEX^\cA$. We look for some
$c\in \ARRAY^{\cA}$ such that  $a_1=f_1(c)$; since $\mu_2$ is injective this would entail  $a_2=f_2(c)$ because
$
\mu_2(a_2)=\mu_1(a_1)=\mu_1(f_1(c))=\mu_2(f_2(c)),
$   
implying that $\hat{\cM}$ is a strong amalgam, as requested. 

We separate two cases: (i) one of the arrays $a_1, a_2$ satisfy 
condition 2; (ii) both arrays $a_1, a_2$ satisfy 
condition 1. 
\begin{compactenum}
    \item [(i)]
    This case is impossible: this is proved in the appendix, by exploiting condition 2 and the definition of $\mu_2$ via rule ($2\star$). 
    \item [(ii)] Hence, we have 
    $\mu_1(a_1)=\mu_2(a_2)$ only when both $a_1,a_2$ satisfy condition 1. Let
    $c_i\in \ARRAY^\cA$ ($i=1,2$) be the arrays s.t. $f_i(c_i)\sim^{\cM_i}a_i$. In the appendix, we prove how to build, from $c_2$, a
     $c\in \ARRAY^\cA$ such that $f_1(c)=a_1$. 
\end{compactenum}
\end{proof}

Strong amalgamation corresponds to general quantifier-free interpolation 
(Theorem~\ref{thm:interpolation-amalgamation}), 
hence we obtain that: 

\begin{corollary}
\label{risultatot3} The theory $\estesa$ has the general quantifier-free interpolation property.
\end{corollary}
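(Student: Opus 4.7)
The corollary is essentially a direct consequence of the semantic characterization of general interpolation in terms of strong amalgamation. The plan is to package Theorem~\ref{thm:axd_strong_amalg} together with Theorem~\ref{thm:interpolation-amalgamation}(ii) and check that the hypotheses of the latter are met for $\estesa$.

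First, I would observe that $\estesa$ is a universal theory. A quick scan of its axioms confirms this: the $T_I$-part is universal by Definition~\ref{def:index}, axioms \eqref{ax1}--\eqref{ax9} are visibly universal, the extension axioms \eqref{ax10}--\eqref{ax11} for $\Const$ are universal, and \eqref{ax12} is a ground sentence (hence trivially universal). So $\estesa$ satisfies the universality hypothesis required by the second item of Theorem~\ref{thm:interpolation-amalgamation}.

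Second, I would invoke Theorem~\ref{thm:axd_strong_amalg}, which established that $\estesa$ has the strong amalgamation property via the two-step construction of $\cM$ and its $\diff$-faithful completion $\hat\cM$. Applying Theorem~\ref{thm:interpolation-amalgamation}(ii) immediately yields the general quantifier-free interpolation property. There is no real obstacle: all the heavy lifting has already been done in Sections~\ref{sec:embeddings} and~\ref{sec:strongamalg}, and in particular in the case analysis distinguishing arrays satisfying condition~1 from those satisfying condition~2 in the proof of Theorem~\ref{thm:axd_strong_amalg}.

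Finally, I would briefly emphasize the delta with respect to the weaker $\AXDTI$: by Theorem~\ref{thm:axd_amalg} alone we only get plain quantifier-free interpolation (via Theorem~\ref{thm:interpolation-amalgamation}(i)), and Example~\ref{example1} shows that plain interpolation cannot be strengthened to general interpolation for $\AXDTI$ because constant arrays are genuinely required to express the interpolant. The addition of $\Const$ in $\estesa$ is precisely what upgrades amalgamation to strong amalgamation, and hence plain to general quantifier-free interpolation.
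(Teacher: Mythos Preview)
Your proposal is correct and matches the paper's own argument: the corollary is stated immediately after Theorem~\ref{thm:axd_strong_amalg} with the one-line justification that strong amalgamation corresponds to general quantifier-free interpolation via Theorem~\ref{thm:interpolation-amalgamation}(ii). Your explicit check that $\estesa$ is universal is a welcome addition that the paper leaves implicit.
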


\section{Satisfiability}\label{sec:sat}

We now address the problem of checking satisfiability  of quantifier-free \formulae in our theories. Decidability of the $SMT(\AXDTI)$- and $SMT(\estesa)$-problems, at least in the relevant case where $T_I$ is any fragment of Presburger arithmetics, can be solved by reduction to the satisfiability problem for the so-called `array-property fragment' of ~\cite{BMS06}: the reduction 
can be obtained by eliminating the $wr, \len{-}, \diff$ and $\Const$ symbols in favor of universally quantified \formulae belonging to that fragment (see Lemmas~\ref{lem:univinst},\ref{lem:constinst},\ref{lem:elim}). However, we now supply a decision procedure for quantifier-free \formulae, since this will be useful for the interpolation algorithm in Section~\ref{sec:algo}. 


A \emph{flat} literal $L$ is a formula of the kind $x_0=f(x_1,\dots, x_n)$  or $x_1\neq x_2$ or $R(x_1, \dots, x_n)$ or $\neg R(x_1, \dots, x_n)$, where the $x_i$ are variables, $R$ is a relation symbol, 
and $f$ is a function symbol.
If $\cI$ is a  set of $T_I$-terms, an \emph{$\cI$-instance} of a universal formula of the kind  $\forall i\,\phi$ is a formula of the kind $\phi(t/i)$ for some $t\in \cI$.

\begin{definition}\label{def:separated}
A pair of sets of quantifier-free \AXDTI-\formulae $\Phi=(\Phi_1, \Phi_2)$ is a
 \emph{separated pair} 
iff
\begin{compactenum}
\item[{\rm (1)}] $\Phi_1$ contains  equalities of the form $\len{a}=i,\diff_k(a,b)=i$ and $a=wr(b,i,e)$; moreover if it contains the equality $\diff_k(a,b)=i$, it must also contain an equality of the form $\diff_l(a,b)=j$ for every $l<k$; finally, if $\Phi_1\cup \Phi_2$ contains occurrences of an array variable $a$, $\Phi_1$  must contain also an equality of the form $\len{a}=i$;
 \item[{\rm (2)}] $\Phi_2$ contains 
 Boolean combinations of $T_I$-atoms  and of atoms of the forms:
 \begin{equation}\label{phi2}
 rd(a,i)= rd(b,j), ~~
 rd(a,i)= e,~~
 e_1=e_2,
 \end{equation}
 where $a,b,i,j,e,e_1,e_2$ are variables or constants of the appropriate sorts.
\end{compactenum}
$\Phi$ is said to be finite iff $\Phi_1$ and $\Phi_2$ are both finite. 
\end{definition}

Notably, in a separated pair $\Phi=(\Phi_1, \Phi_2)$, 
if we introduce a unary function symbol $f_a$ for every array variable $a$ and
rewrite $rd(a,i)$ as $f_a(i)$, it turns out that  \emph{ the \formulae from $\Phi_2$ can be seen as \formulae of the combined theory $T_I\cup\EUF$}. $T_I\cup \EUF$  
is decidable in its quantifier-free fragment and 
admits quantifier-free interpolation because $T_I$ is an index theory (see Nelson-Oppen results~~\cite{NO79} and
Theorems~\ref{thm:interpolation-amalgamation},\ref{thm:ti+euf}): we  adopt a hierarchical approach (similarly to~\cite{SS08,SS18}) and \emph{we rely  on satisfiability and interpolation algorithms for such a  theory as  black boxes}.

\begin{definition}\label{def:instsep}
 Let $\cI$ be a set of $T_I$-terms and let $\Phi=(\Phi_1, \Phi_2)$ be a separated pair;
 $\Phi(\cI)=(\Phi_1(\cI), \Phi_2(\cI))$ is the smallest separated pair satisfying the following conditions:
 \begin{compactenum}
  \item[-] $\Phi_1(\cI)$ is equal to $\Phi_1$ and $\Phi_2(\cI)$ contains $\Phi_2$;
  \item[-] if $\Phi_1$ contains the atom $a=wr(b,i,e)$ then $\Phi_2(\cI)$ contains \emph{all the $\cI$-instances of the 
  formulae~\eqref{wrt3}} (with the terms 
  $\len{a}, \len{b}$ replaced by the index constants $i,j$ such that $\len{a}=i, \len{b}=j\in \Phi_1$, respectively);
  \item[-] if $\Phi_1$ contains the atom $\len{a}=i$, then 
  $\Phi_2(\cI)$ contains all the 
  $\cI$-instances of  the  formulae~\eqref{lunghequivt3};
  \item[-] if $\Phi_1$ contains the conjunction $\bigwedge_{i=1}^l \diff_i(a,b)= k_l$, then 
  $\Phi_2(\cI)$ contains   the  formulae~\eqref{diffiterate} (with the terms 
  $\len{a}, \len{b}$ replaced by the index constants $i,j$ such that $\len{a}=i, \len{b}=j\in \Phi_1$, respectively).
 \end{compactenum}
A separated pair $\Phi$ is \emph{0-instantiated} iff $\Phi=\Phi(\cI)$, where 
$\cI$ is the set of index variables occurring in $\cI$.
\end{definition}

 We say that a separated pair $\Phi=(\Phi_1, \Phi_2)$ is 
 is \AXDTI-satisfiable iff so it is the formula $\bigwedge \Phi_1\wedge \bigwedge \Phi_2$.
 %
 %
 %
 %
 %
 %
 %
 %
 %

\begin{lemma}\label{lem:sat1}
 Let $\phi$ be a quantifier-free formula; then it is possible to compute 
 in linear time
 a finite 
 separation pair $\Phi=(\Phi_1, \Phi_2)$
 such that $\phi$ is \AXDTI-satisfiable iff so is  $ \Phi$.
\end{lemma}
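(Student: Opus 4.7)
The plan is to use the standard flattening/abstraction technique. I would perform a single bottom-up traversal of $\phi$ (treated as a DAG to avoid term duplication) and, for every compound subterm $t$ of a shape forbidden inside $\Phi_2$, introduce a fresh variable $v_t$ of the appropriate sort together with a defining equality. Specifically: every subterm $wr(b,i,e)$ is replaced by a fresh array variable $v$, and $v = wr(b,i,e)$ is placed in $\Phi_1$; every index subterm $\len{a}$ is replaced by a fresh index variable $l_a$, with $\len{a} = l_a$ placed in $\Phi_1$; every index subterm $\diff_k(a,b)$ is replaced by a fresh index variable $d^{a,b}_k$, with $\diff_k(a,b) = d^{a,b}_k$ put in $\Phi_1$, together with $\diff_l(a,b) = d^{a,b}_l$ for every $l < k$ (as demanded by Condition~(1) of Definition~\ref{def:separated}). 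After this pass, every remaining term is either a variable (or constant), a pure $T_I$-term on indexes, or a term $rd(a,i)$ with $a$, $i$ variables or constants.

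The Boolean structure of $\phi$ is preserved and placed in $\Phi_2$. Most resulting atoms already fit the admissible schemas $rd(a,i)=rd(b,j)$, $rd(a,i)=e$, $e_1=e_2$, or are $T_I$-atoms; the only cases not directly admissible are array (dis)equalities $a = b$ and their negations. These I would eliminate via Lemma~\ref{lem:univinst}: introduce a fresh index variable $d$, add $\diff_1(a,b) = d$ to $\Phi_1$, and rewrite each $a = b$ in $\phi$ as $d = 0 \wedge rd(a,0) = rd(b,0)$, which is now an admissible Boolean combination. Finally, for every array variable $a$ still occurring in $\Phi_1 \cup \Phi_2$ and not yet equipped with a length-equality in $\Phi_1$ (including fresh variables produced by the $wr$-abstractions), I would add $\len{a} = l_a$ for a fresh $l_a$; this is always consistent since every array has a length in every model of \AXDTI.

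Equisatisfiability is routine: each abstraction step preserves $\AXDTI$-satisfiability in both directions because the defining equality forces $v_t$ to take the value of the abstracted term in any model, and the only semantic rewrite (for $a = b$) is justified by Lemma~\ref{lem:univinst}. For the complexity claim, on the DAG representation of $\phi$ each node is visited a constant number of times and produces a bounded number of fresh variables and equalities; the prefix closure required by the $\diff_k$-chain adds at most $k$ equalities per occurrence, and $k$ is bounded by the size of the corresponding subterm of $\phi$, so the overall cost stays linear. The main obstacle I anticipate is purely bookkeeping: one has to verify carefully that the cascade of abstractions terminates after a single bottom-up sweep — in particular, that the fresh $rd(a,0)$-atoms introduced when eliminating array equalities are themselves already in admissible form and do not trigger further abstraction, and that the length-equalities added at the very end do not retroactively create new array variables in need of a length.
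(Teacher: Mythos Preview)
Your proposal is correct and follows essentially the same approach as the paper: flatten by abstracting out subterms, eliminate array equalities via the equivalence of Lemma~\ref{lem:univinst}, abstract the $wr$, $\diff$, $\len{-}$ terms into $\Phi_1$, and finally add the missing length equalities. The paper's proof is terser and does not discuss the DAG representation or the $\diff_k$ prefix closure (indeed it remarks that the transformation never introduces $\diff_n$ for $n>1$, since those are defined terms that unfold into nested $\diff/wr/rd$ and are handled by ordinary flattening), but the underlying procedure is the same.
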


\begin{lemma}\label{lem:sat2} 
 The following conditions are equivalent for a finite 0-instantiated separation pair $\Phi=(\Phi_1, \Phi_2)$:  
 \begin{compactenum}
 \item[{\rm (i)}] $\Phi$ is \AXDTI-satisfiable;
 \item[{\rm (ii)}] $\bigwedge \Phi_2$ is $T_I\cup \EUF$-satisfiable.
 \end{compactenum}
\end{lemma}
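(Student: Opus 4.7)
The direction (i)$\Rightarrow$(ii) is routine. Given an $\AXDTI$-model $\cM$ of $\bigwedge\Phi_1\wedge\bigwedge\Phi_2$, expand the $T_I$-reduct of $\cM$ by interpreting, for each array variable $a$ occurring in $\Phi$, a fresh unary function symbol $f_a$ as the read function $k\mapsto rd^{\cM}(a,k)$. All $T_I$-atoms and all atoms of the shapes~\eqref{phi2} appearing in $\Phi_2$ hold in this expansion by construction. Every remaining formula in $\Phi_2$ is an $\cI$-instance of one of the universal formulae~\eqref{wrt3},~\eqref{lunghequivt3} or~\eqref{diffiterate}; by Lemmas~\ref{lem:univinst} and~\ref{lem:elim} each such universal formula is $\AXDTI$-equivalent to an atom in $\Phi_1$, and so holds in $\cM$ for every choice of the universally quantified index, in particular at the $\cI$-instances. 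Reading $rd(a,k)$ as $f_a(k)$ yields a $T_I\cup\EUF$-model of $\bigwedge\Phi_2$.

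For the converse (ii)$\Rightarrow$(i), let $\cN$ be a $T_I\cup\EUF$-model of $\bigwedge\Phi_2$. The plan is to build a functional $\AXDTI$-model $\cM$ with $\INDEX^{\cM}:=\INDEX^{\cN}$ and $\ELEM^{\cM}:=\ELEM^{\cN}$ (adjoining a fresh element $\bot^{\cM}$ if $\cN$ does not already contain an interpretation for $\bot$), and for each array variable $a$ in $\Phi$ a positive-support function $a^{\cM}$ of length $\len{a}^{\cN}$ (where $\len{a}^{\cN}$ means the $\cN$-interpretation of the index term $i$ with $\len{a}=i\in\Phi_1$, which exists by the separated-pair condition). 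Let $V$ be the finite set of $\cN$-interpretations of the index terms occurring in $\Phi$. On $V$, put $a^{\cM}(k):=f_a^{\cN}(k)$; this is compatible with the contiguity axiom~\eqref{ax5} because the $V$-instances of~\eqref{lunghequivt3} are satisfied in $\cN$. Outside $[0,\len{a}^{\cN}]$ put $a^{\cM}(k):=\bot^{\cM}$. The remaining task is to define $a^{\cM}(k)$ for non-variable indices $k\in[0,\len{a}^{\cN}]$ so as to match all atoms of $\Phi_1$.

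This extension is carried out by processing the array variables along the dependency order induced by the $wr$-atoms of $\Phi_1$. For a source array $a$ (one not appearing as the left-hand side of any $a=wr(b,i,e)\in\Phi_1$), assign a fixed non-$\bot$ default to all non-variable indices inside $[0,\len{a}^{\cN}]$; for a derived $a=wr(b,i,e)$, set $a^{\cM}(k):=b^{\cM}(k)$ for every $k\neq i^{\cN}$ and $a^{\cM}(i^{\cN})$ according to~\eqref{wrt3}. Any $\diff_k$-atom in $\Phi_1$ constrains the off-$V$ values of a pair of arrays to agree strictly above the largest declared diff-index; this is achieved by aligning the source-array defaults for arrays linked by $\diff_k$-chains, which is possible because the separated-pair condition requires every $\diff_k(a,b)=i\in\Phi_1$ to be accompanied by all $\diff_j(a,b)$ for $j<k$, so the full pattern of disagreement is pinned down by $\Phi_1$ itself. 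The verification that $\cM\models\bigwedge\Phi_1\wedge\bigwedge\Phi_2$ then reduces, via Lemmas~\ref{lem:univinst} and~\ref{lem:elim}, to checking that the associated universal formulae hold in $\cM$: at variable indices this is guaranteed by $\cN\models\Phi_2$; at non-variable indices it follows from the extension rule. The main obstacle is the global consistency of the off-$V$ extension when a single array is involved simultaneously in $wr$- and $\diff_k$-atoms with several partners; the plan resolves this by choosing source-array defaults uniformly per connected component of the graph induced by these equalities, exploiting the completeness of the $\diff_j$-chain mandated by the separated pair condition.
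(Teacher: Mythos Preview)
Your direction (i)$\Rightarrow$(ii) is fine and matches the paper. For (ii)$\Rightarrow$(i) you have the right overall idea---complete the $f_a$'s to positive-support functions by assigning default values at non-variable indices---but the execution is both more complicated than needed and genuinely incomplete.

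On the first point: the paper does not process arrays along a $wr$-dependency order or choose defaults per connected component. It simply sets, for \emph{every} array variable $a$ and every index $k$ not named by a variable in $\cI$, the value $f_a(k)$ to $el$ if $k\in[0,l_a]$ and to $\bot$ otherwise (the ``standardization'' $\cA'$). This uniform choice makes all off-$\cI$ instances of the universal formulae~\eqref{wrt3},~\eqref{lunghequivt3},~\eqref{diffiterate} trivially true, because any such instance comparing $rd(a,h)$ and $rd(b,h)$ at a non-variable $h$ sees the same value on both sides. Your dependency-order scheme is unnecessary, and in fact not obviously well-defined: nothing in Definition~\ref{def:separated} forbids cycles such as $a=wr(b,i,e)$ and $b=wr(a,j,e')$ both lying in $\Phi_1$, in which case there is no ``source'' array.

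The more serious gap is that you never explain why the structure you build is a model of $\AXDTI$. You only supply interpretations for the finitely many array variables occurring in $\Phi$; to obtain a structure you must also specify $\ARRAY^{\cM}$ and show that $\diff$ is \emph{totally} defined on it. Taking $\ARRAY^{\cM}$ to be all positive-support functions (as the paper does) does not suffice on its own: for a general index theory $T_I$ two positive-support functions can differ on a set of indices with no maximum, so $\diff$ is only partially defined. The paper handles this by first observing that the standardized structure is a functional model of $\AXEXTTI$ in which the specific $\diff_k$-values mentioned in $\Phi_1$ are defined, and then invoking Theorem~\ref{thm:extension} to obtain a $\diff$-faithful embedding into a full $\AXDTI$-model. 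Without this step (or an equivalent argument) your proof does not deliver an $\AXDTI$-model.
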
 

From Lemmas~\ref{lem:sat1} and~\ref{lem:sat2}, we get the following result: 

\begin{theorem}\label{thm:sat}
 The  $SMT(\AXDTI)$ problem is decidable for every
 index theory $T_I$.
\end{theorem}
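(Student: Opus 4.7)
The plan is to chain together the two preceding lemmas and then invoke Nelson--Oppen for $T_I\cup\EUF$. Given a quantifier-free $\AXDTI$-formula $\phi$, the procedure proceeds in three steps.

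First, apply Lemma~\ref{lem:sat1} to compute (in linear time) a finite separated pair $\Phi=(\Phi_1,\Phi_2)$ that is $\AXDTI$-equisatisfiable with $\phi$. By the definition of a separated pair, every array variable $a$ appearing in $\Phi$ is associated with a length literal $\len{a}=l_a$ in $\Phi_1$, and every \diff-term occurring in $\Phi_1$ occurs together with its lower-index companions; thus $\Phi_1$ is in the exact syntactic shape required by Definition~\ref{def:instsep}.

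Second, let $\cI$ be the finite set of index variables occurring in $\Phi$ and compute the 0-instantiation $\Phi(\cI)=(\Phi_1,\Phi_2(\cI))$. Since $\cI$ is finite and the number of axiom schemata~\eqref{wrt3},~\eqref{lunghequivt3} and~\eqref{diffiterate} to be instantiated is bounded by the size of $\Phi_1$, the set $\Phi_2(\cI)$ is finite (in fact its size is polynomial in $|\phi|$). By construction $\Phi(\cI)$ is 0-instantiated, and it is trivially $\AXDTI$-equisatisfiable with $\Phi$ (the added formulae are $\AXDTI$-consequences of $\Phi_1$ by Lemmas~\ref{lem:univinst} and~\ref{lem:elim}).

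Third, apply Lemma~\ref{lem:sat2}: $\Phi(\cI)$ is $\AXDTI$-satisfiable iff $\bigwedge \Phi_2(\cI)$ is $T_I\cup\EUF$-satisfiable, where each array variable $a$ is translated to a unary function symbol $f_a$ (so that $rd(a,i)$ becomes $f_a(i)$) and the atoms of $\Phi_2(\cI)$ have shapes~\eqref{phi2} together with $T_I$-atoms. Since $T_I$ is stably infinite and has a decidable quantifier-free fragment (Definition~\ref{def:index}) and $\EUF$ is stably infinite and decidable, the Nelson--Oppen combination result yields decidability of the quantifier-free fragment of $T_I\cup\EUF$, which settles the test.

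The only subtle step is the second one: one must check that all the universally quantified conjuncts produced by the rewriting of $wr$-, $\len{-}$- and $\diff_k$-atoms are correctly captured by their finite $\cI$-instantiations, which is precisely the content of Lemma~\ref{lem:sat2} (no further instantiation beyond the variables occurring in $\Phi$ is needed to preserve satisfiability, the missing indexes being handled by the ``standardization'' argument implicit in its proof). Given this, decidability is immediate from the finite-time reduction to $SMT(T_I\cup\EUF)$.
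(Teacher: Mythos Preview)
Your proof is correct and follows exactly the approach the paper intends: the theorem is stated as an immediate consequence of Lemmas~\ref{lem:sat1} and~\ref{lem:sat2}, and your three-step elaboration (reduce to a separated pair, 0-instantiate, then test $T_I\cup\EUF$-satisfiability via Nelson--Oppen) is precisely the intended chaining of those lemmas.
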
  

Regarding complexity for the $SMT(\AXDTI)$ problem, 
notice that the satisfiability of the quantifier-free fragment of common index theories (like \IDL, \LIA, \LRA) is decidable in NP; hence, 
for such index theories, 
an NP bound for our $SMT(\AXDTI)$-problems is
easily obtained,  
because 0-instantiation is clearly finite and  polynomial (all strings of universal quantifiers to be instantiated have length one).

The above decidability and complexity results apply also to \estesa: one only simply has to allow  the $\Phi_1$-component of a separation pair to contain also atoms of the form $\Const(i)=a$ and Definition~\ref{def:instsep} to require
that $\Phi_2(\cI)$ contains all the $\cI$-instances of the 
  formulae~\eqref{constequivt} in case $Const(i)=a\in \Phi_1$.

\section{The interpolation algorithm}\label{sec:algo}

Since amalgamation is equivalent to quantifier-free interpolation for universal theories such as \AXDTI and \estesa (thanks to Theorem~\ref{thm:interpolation-amalgamation}),  Theorem~\ref{thm:axd_amalg} 
guarantees that \AXDTI and \estesa admit 
quantifier-free interpolation. 
However, the proof of Theorem~\ref{thm:axd_amalg} is not constructive: hence, 
 in order to compute an interpolant  for  an 
unsatisfiable conjunction like $\psi(\ux,\uy)\wedge \phi(\uy,\uz)$,  one 
needs in principle to enumerate all quantifier-free \formulae $\theta(\uy)$ that    are  consequences of $\phi$ and are inconsistent with $\psi$. Since the quantifier-free fragments of \AXDTI and \estesa are decidable, this is an effective procedure and, 
considering the fact that interpolants of jointly unsatisfiable pairs of \formulae exist, it also terminates. However, 
this type of algorithm is not practical. In this section, we 
provide a better and more practical algorithm that relies on 
 a hierarchical reduction to $T_I\cup \EUF$. Our algorithm works for \AXDTI only; 
for \estesa, 
we make some comments in Section~\ref{sec:conclusions}.

Our problem is the following: given two quantifier-free formulae $A^0$ and $B^0$ such that $A^0\wedge B^0$ is not satisfiable (modulo \AXDTI), to compute a quantifier-free formula $C$ such that 
\begin{compactenum}[(i)]
\item
$\AXDTI \models A^0\to C$; 
\item $\AXDTI \models C\wedge B^0\to \bot$;
\item
$C$ contains only the variables (of sort \INDEX, \ARRAY, \ELEM) which occur both in $A^0$ and in $B^0$.
\end{compactenum}

Below, we 
work with ground formulae over signatures expanded with free constants instead of quantifier-free formulae. 
We use letters $A, B, \dots$ 
for finite sets of ground formulae; the logical reading of a set of
formulae is the conjunction of its elements.  
For a signature $\Sigma$ and a set $A$ of formulae, $\Sigma^A$ denotes the signature $\Sigma$
expanded with the free constants occurring in $A$.
Let $A$ and $B$ be two finite sets of ground formulae in the
signatures $\Sigma^A$ and $\Sigma^B$, resp., and $\Sigma^C :=
\Sigma^A \cap \Sigma^B$.  
We `color' a term, a literal, or a formula $\varphi$ by calling it:

\begin{compactenum}[$\bullet$]
\item {\em \abcommon} iff it is defined over $\Sigma^C$;
\item {\em \alocal} (resp. {\em \blocal}) if it is defined over
  $\Sigma^A$ (resp. $\Sigma^B$);
\item {\em $A$-strict} (resp. {\em $B$-strict}) iff it is \alocal
  (resp. \blocal) but not \abcommon;
\item {\em strict} if it is either \astrict or \bstrict.
\end{compactenum}
  
 


There are a number of manipulations that can be freely applied to a jointly unsatisfiable pair $A,B$ without compromising the possibility of extracting an interpolant out of them. A list of such manipulations (called
`metarules') is supplied in~\cite{lmcs_interp},~\cite{BGR14}. Here we need to introduce only  some of them:

\begin{compactenum}[(i)]
 \item we can add to $A$ an \alocal quantifier-free formula entailed by $A$ (similarly we can add to $B$ a \blocal quantifier-free formula entailed by $B$): the interpolant computed after such a transformation is trivially an interpolant for the original pair too;
 \item we can pick an \alocal term $t$ and a fresh constant $x$ (to be considered \astrict from now on)
 and add to $A$ the equality $x=t$: again, the interpolant computed after such a transformation is trivially an interpolant for the original pair too (the same observation extends to $B$);
 \item we can pick an \abcommon term $t$ and a fresh constant $x$ (to be considered \abcommon from now on)
 and add to both $A$ and $B$ the equality $x=t$: in this case, if $\theta$ is the interpolant computed after 
 such a transformation, then $\theta(t/x)$ is an interpolant for the original pair.
\end{compactenum}

We shall often apply the above metarules (i)-(ii)-(iii) in the sequel.

\subsection{The Algorithm}

\emph{We reduce the problem of finding an interpolant of an unsatifiable pair $(A^0, B^0)$ to an analogous polynomial size problem in the weaker theory $T_I\cup \EUF$. }

Our unsatisfiable pair $(A^0, B^0)$ needs to be \emph{preprocessed}. 
Using the procedure in the proof of Lemma~\ref{lem:sat1},
we can suppose that  both $A^0$ and $B^0$ are given in the form of finite  separated pairs.
In fact, the procedure of Lemma~\ref{lem:sat1} just introduces constants in order to explicitly name terms, so that it fits within the above explained remarks (see metarules (ii)-(iii)). The newly introduced constants are colored \astrict, \bstrict or \abcommon depending on the color of the terms they name.
Notice that because of this preprocessing, for every \astrict (resp. \bstrict, \abcommon) array constant $a$, in $A^0$ (resp. $B_0$, $A^0\cap B^0$) there is an atom of the kind $\len{a}=l_a$.

 To sum up, $A^0$ is of the form $\bigwedge A^0_1\wedge \bigwedge A^0_2$ and 
 $B^0$ is of the form $\bigwedge B^0_1\wedge \bigwedge B^0_2$,
 for separated pairs $(A^0_1,A^0_2)$ and $(B^0_1, B^0_2)$.

Our interpolation algorithm consists of \emph{three transformation steps} (all of them 
fit our metarules (i)-(ii)-(iii)).
We let $N_A$ (resp. $N_B$) be the number of \alocal (resp. \blocal) index constants occurring 
within a $wr$ symbol
in $A^0$ (resp. $B^0$); we let also  $N$ be equal to $1+max(N_A, N_B)$.

\vskip 2mm\noindent
\framebox{\textbf{Step 1}.} This transformation must be applied for every  pair of  
 distinct \abcommon \ARRAY-constants $c_1, c_2$.
 The transformation  picks
 fresh \INDEX constants $k_1, \dots, k_N$
 (to be colored \abcommon)
 and adds the atoms 
 $\diff_n(c_1,c_2)=k_n$ (for all $n=1, \dots ,N$) to both sets  $A_1$ and $B_1$.
This transformation fits metarule (iii).
\vskip 2mm\noindent
\framebox{\textbf{Step 2}.} We apply 0-instantiation, that is we replace $A$ with $A(\cI_A)$ and $B$ with $B(\cI_B)$, where $\cI_A$ is the set of \alocal index constants and $\cB$ is the set of \blocal index constants 
(see Definition~\ref{def:instsep}).
This transformation fits metarule (i).
\vskip 2mm\noindent
\framebox{\textbf{Step 3}.}
As proved in Theorem~\ref{thm:al} below, at this step
 $A_2\wedge B_2$ is $T_I\cup \EUF$-inconsistent; 
 since $T_I\cup \EUF$ has quantifier-free interpolation by Theorem~\ref{thm:ti+euf},
 we can compute an interpolant $\theta$ of the jointly unsatisfiable pair $A_2,B_2$. To get our desired \AXDTI-interpolant, we only have to replace back
  in it the fresh \abcommon constants introduced by our trasformations by the \abcommon terms they name.

\begin{example}\label{ex2}
We let 
\noindent\begin{eqnarray*}
 A^0 \equiv   \{ \diff(a_1,a_2)=j,\diff(a_1,c_1)=j_1,\diff(a_2,c_2)=j_2 \}
 \\
 B^0 \equiv   \{j<l,~j_1<l,~ j_2<l,~ rd(c_1,l)\neq rd(c_2,l) \}
\end{eqnarray*}
In the preprocessing step, we must add the atoms $\len{a_1}=l_{a_1}, \len{a_2}=l_{a_2}$ to $A^0_1$ and 
$\len{c_1}=l_{c_1}, \len{c_2}=l_{c_2}$ to both $A^0_1$ and $B^0_1$.
Since $N_A=N_B=0$, we have $N=1$; Step 1 adds the \abcommon atom $\diff(c_1,c_2)=k_1$. 
Step 2 makes the required 0-instantiations producing the 0-instantiated separated pair $(A, B)$. From such instantiations, we can conclude that $A_1$ entails $k_1\leq \max (j_1,j_2, j)$; this  is $T_I\cup \EUF$-inconsistent with $B_1$ 
as $B_1$ contains, in addition to $j<l,~j_1<l,~ j_2<l, rd(c_1,l)\neq rd(c_2,l)$,  also
$$
k_1 < l \to rd(c_1,l)= rd(c_2,l)
$$
by~\eqref{diffiterate}. Thus $k_1\leq \max (j_1,j_2, j)$ is a $T_I\cup \EUF$-interpolant. Using the recover instruction of meta-rule (iii), we get $\diff(c_1,c_2)\leq \max (j_1,j_2, j)$ as an \AXDTI-interpolant. 
\end{example}

\begin{example}\label{ex3}
 Let $A^0$ be
 \begin{eqnarray*}
 \{\diff(a,c_1)=i_1,\diff(b,c_2)=i_1,a_1=wr(b, i_1, e_1),\len{a}=k,~~\\ ~a=wr(a_1,i_3, e_3),
  \len{a_1}=k,~\len{b_1}=k,~\len{c_1}=k,~\len{c_2}=k \}~~~~~~~~~~~~~~~~~~~~~~~~~~~~~~~~~~
 \end{eqnarray*}
 and let $B^0$ be
$\{ rd(c_1,i_1)\neq rd(c_2,i_2), ~i_1<i_2,~ i_2<i_3,\\~i_3<k,~~\len{c_1}=k,~\len{c_2}=k\}$. 

We do not need any preprocessing here; since $N=3$, Step 1 adds the \abcommon atoms 
$$
\diff_1(c_1,c_2)=k_1,~\diff_2(c_1, c_2)=k_2,~\diff_3(c_1,c_2)=k_3~.
$$
Step 2 produces a separated pair $(A,B)$ such that 
 $A_2\wedge B_2$  is $T_I\cup
\EUF$-in\-con\-sistent (inconsistency can be tested via an SMT-solver like \textsc{z3}~\cite{z3} or \textsc{MathSat}~\cite{brutto08}).  
 The related $T_I\cup
\EUF$-interpolant (once $k_1, k_2$ and $k_3$ are replaced by 
   $\diff_1(c_1,c_2), \diff_2(c_1,c_2)$ and $\diff_3(c_1,c_2)$, respectively) gives our \AXDTI-interpolant.
 \end{example}

\begin{example}\label{ex4}
 This is the classical example (due to R. Jhala) showing that \AXEXT does not have quantifier-free interpolation (one needs \diff\ in the signature to recover it). Let $A^0$ be
 $
 \{c_1 =wr(c_2,i,e)\}
 $
 and $B^0$  be
 $
 \{i_1\neq i_2,~rd(c_1,i_1)\neq rd(c_2,i_1),~rd(c_1,i_2)\neq rd(c_2,i_2)\}. 
 $
 Preprocessing adds the \abcommon literals $\len{c_1}=l_{c_1}, \len{c_2}=l_{c_2}$ to both $A^0_1$ and $B^0_1$.
 Step 1 introduces the \abcommon atoms 
 $$
\diff_1(c_1,c_2)=k_1,~\diff_2(c_1, c_2)=k_2~.
$$
In  the 0-instantiated separated pair $(A,B)$ produced by Step 2, from~\eqref{wrt3} and~\eqref{diffiterate}, we realize that 
\begin{equation}\label{eq:ex4}
k_2=0~\wedge~ (k_1=k_2\vee rd(c_1,k_2)= rd(c_2,k_2))
\end{equation}
is $T_I\cup\EUF$-implied
by $A_1$ and $T_I\cup\EUF$-inconsistent with $B_1$. To get an \AXDTI-interpolant, it is 
enough to
replace $k_1, k_2$ respectively by $\diff_1(c_1,c_2),\diff_2(c_1, c_2)$
in~\eqref{eq:ex4}.
\end{example}

\begin{theorem}\label {thm:al}
 The above rule-based algorithm computes a quantifier-free interpolant for every 
 \AXDTI-mutually unsatifiable pair $A^0, B^0$ of quantifier-free formulae.
\end{theorem}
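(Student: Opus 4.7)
The plan is to verify three things about the algorithm: termination, correctness of the back-substitution (i.e., that the output is indeed an \AXDTI-interpolant of $A^0, B^0$), and --- the main content --- that at the point where Step~3 is invoked the pair $A_2 \wedge B_2$ is actually $T_I\cup\EUF$-inconsistent, so that the black-box interpolation module applies.

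Termination is immediate: Step~1 introduces only $N$ fresh \INDEX-constants per pair of \abcommon \ARRAY-constants; Step~2 is the 0-instantiation of Definition~\ref{def:instsep}, which stays polynomial since every axiom-schema used here carries a single universal quantifier; Step~3 is a single call to a decision/interpolation procedure. Correctness of back-substitution follows because each of the three transformations fits one of the admissible metarules: Step~1 uses metarule~(iii), namely a fresh \abcommon constant naming a \abcommon term $\diff_n(c_1, c_2)$; Step~2 uses metarule~(i) on each side, since by Lemmas~\ref{lem:univinst} and~\ref{lem:elim} the freshly added \formulae are \alocal (resp.\ \blocal) consequences of the respective side; and any $T_I\cup\EUF$-interpolant $\theta$ of $(A_2, B_2)$ is a fortiori an \AXDTI-interpolant of the current pair, which upon back-substituting the constants introduced in Steps~1--2 yields the desired interpolant of $(A^0, B^0)$.

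For the substantive claim I argue by contraposition: assuming $A_2 \wedge B_2$ is $T_I\cup\EUF$-satisfiable, I construct an \AXDTI-model of $A^0\wedge B^0$, contradicting the initial hypothesis. Let $\mathcal{R}$ be a $T_I\cup\EUF$-model of $A_2\wedge B_2$ (with each array constant $a$ read as a unary function $f_a$). Following the standardization technique of Lemma~\ref{lem:sat2} and then applying the $\diff$-faithful extension of Theorem~\ref{thm:extension}, I would extract from the $A$-side reduct of $\mathcal{R}$ an \AXDTI-model $\cM_A \models A^0$ and from the $B$-side reduct an \AXDTI-model $\cM_B \models B^0$. Since both constructions are rooted in the same \abcommon data of $\mathcal{R}$, they are designed to share a candidate common \AXDTI-substructure $\cA$ supported on the interpretations of the \abcommon constants. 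Amalgamation of $\cM_A$ and $\cM_B$ over $\cA$ via Theorem~\ref{thm:axd_amalg} then produces a joint \AXDTI-model of $A^0\wedge B^0$, contradicting unsatisfiability.

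The hard part is verifying that $\cA$ is genuinely a common \AXDTI-substructure of both $\cM_A$ and $\cM_B$ --- specifically, that for every pair of \abcommon arrays $c_1, c_2$ the value of $\diff(c_1, c_2)$ computed in $\cM_A$ coincides with its value computed in $\cM_B$. Here the choice $N = 1 + \max(N_A, N_B)$ in Step~1 is decisive. The named iterated diffs $\diff_n(c_1, c_2) = k_n$ for $n \le N$ are \abcommon atoms and thus are inherited identically in both extensions from $\mathcal{R}$; the universal clause of~\eqref{diffiterate} (via Lemma~\ref{lem:elim}) then enforces $rd(c_1, h) = rd(c_2, h)$ for every $h > k_N$ outside the finitely many named exceptions. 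Since any further difference that could be introduced between $c_1$ and $c_2$ on the $A$-side must arise through a write at one of the $N_A$ \alocal write indices (and symmetrically for the $B$-side with the $N_B$ \blocal ones), the strict bound $N > \max(N_A, N_B)$ guarantees that no further \abcommon-visible discrepancy can appear beyond what is already pinned down by the named diffs. This makes $\diff$ coherent across the candidate common substructure $\cA$, enabling the amalgamation step and closing the argument.
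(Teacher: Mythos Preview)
Your termination and back-substitution remarks are fine, and the contrapositive strategy is the right one. The gap is in the ``two separate models plus amalgamation'' construction: you build $\cM_A$ and $\cM_B$ by applying the standardization of Lemma~\ref{lem:sat2} \emph{separately} to the $A$-reduct and the $B$-reduct of $\mathcal R$, and then assert that they share a common \AXDTI-substructure $\cA$ on which the \abcommon constants agree. But separate standardizations do not produce compatible interpretations of an \abcommon array $c$. Concretely, take an index $i\in\INDEX^{\mathcal R}$ that is $k^{\mathcal R}$ for some \bstrict constant $k$ but is not the value of any \alocal constant: the $A$-side standardization overwrites $c$ at $i$ with $el$ or $\bot$, whereas the $B$-side keeps the arbitrary value $f_c^{\mathcal R}(i)$. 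Whenever such an $i$ lies in the $T_I$-substructure generated by the \abcommon index constants (which is unavoidable once $T_I$ has nontrivial term constructors, e.g.\ in \LIA), the restrictions $c^{\cM_A}\!\upharpoonright\!\INDEX^{\cA}$ and $c^{\cM_B}\!\upharpoonright\!\INDEX^{\cA}$ disagree, so there is no common substructure $\cA$ to amalgamate over. You identify $\diff$-coherence as ``the hard part'', but the more basic $rd$-coherence already fails, and your final paragraph does not address it.

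The paper avoids this by refusing to split: it builds a \emph{single} structure $\cN$ directly, defining $a^{\cN}(i)$ for an \alocal array $a$ via a three-case rule. The key clause ($\dag$-ii) handles exactly the problematic situation above: at a \bstrict index $k$ it sets $a^{\cN}(k^{\cM})$ to $c^{\cM}(k^{\cM})$ for any \abcommon $c$ that is $\equiv_k$-related to $a$, where $\equiv_k$ is the write/diff-connectivity relation among \alocal arrays. The bound $N>\max(N_A,N_B)$, which you correctly isolate, is used precisely to prove the Claim that makes ($\dag$-ii) well-defined and to guarantee that the interpretation of each \abcommon $c$ is the same whether computed via the \alocal or the \blocal recipe. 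That single-model construction is the missing ingredient; your amalgamation plan would need essentially the same device to make $\cA$ exist, at which point the detour through two models becomes redundant.
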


\begin{proof} 
We only need to prove that Step 3 really applies. 

Suppose not;
let $A=(A_1, A_2)$ and 
$B=(B_1, B_2)$ be the separated pairs obtained  
after  applications of Steps 1 and 2. If Step 3 does not apply, then $A_2\wedge B_2$ is $T_I\cup \EUF$-consistent.
  We claim that  $(A, B)$ is \AXDTI-consistent (contradicting 
  that $(A^0, B^0)\subseteq (A, B)$ was  
\AXDTI-inconsistent).

Let $\cM$ be a $T_I\cup \EUF$-model of $A_2\wedge B_2$.
$\cM$ is a two-sorted structure (the sorts are \INDEX and \ELEM)  endowed for every array  constant $d$ occurring in $A\cup B$ of a function $d^\cM:\INDEX^\cM \longrightarrow \ELEM^\cM$. 
In addition, $\INDEX^\cM$ is a model of $T_I$. 
 We list the properties of $\cM$ that comes from the fact that our Steps 1-2 have been  applied;
 below, we denote by $k^\cM$ the element of $\INDEX^\cM$ assigned to an index constant $k$ 
 (thus, if, e.g., $k,l$ are index constants, $\cM\models k=l$ is the same as $k^\cM=l^\cM$):
\begin{compactenum}
 \item[{\rm (a)}] we have that $\cM\models \bigwedge A_1(\cI_A)$ (where $\cI_A$ is the set of \alocal constants) and $\cM\models \bigwedge B_1(\cI_B)$ (where $\cI_A$ is the set of \blocal constants): this is because $(A_1, A_2)$ and $(B_1, B_2)$ are 0-instantiated by Step 2;
\item[{\rm (b)}] 
for \abcommon array variables $c_1, c_2$, we have that $A_1\cap B_1$ contains a literal of the kind $\diff_n(c_1, c_2)= k_n$ for $n\leq N$; suppose that $\cM \models l_{c_1}= l_{c_2}$\footnote{ Recall that  $l_{c_1}, l_{c_2}$ are the \abcommon constants such that the literals 
 $\len{c_1}=l_{c_1}, \len{c_2}=l_{c_2}$ belongs to $A_1\cap B_1$.} and that $k$ is an index constant such that 
 $\cM\models k\neq l$ for all \abcommon index constant $l$; then, we can have $\cM \models c_1(k)\neq c_2(k)$   only when $\cM \models k< k_N$: this is because Step 1 has been  applied and because of (a).
\end{compactenum}

We expand $\cM$ to an \AXEXTTI-structure $\cN$ and endow it with an assignment to our \alocal and \blocal variables, in such a way that all \diff\ operators mentioned in $A_1, B_1$ are defined and all formulae in $A, B$ are true. In view of Theorem~\ref{thm:extension}, this structure can be expanded to the desired full model of \AXDTI.
We take $\INDEX^\cN$ and $\ELEM^\cN$ to be equal to $\INDEX^\cM$ and $\ELEM^\cM$; the $T_I$-reduct of $\cN$ will be equal to the $T_I$-reduct of $\cM$ and we let $x^\cN=x^\cM$ for all index and element constants occurring in $A\cup B$. 
$\ARRAY^\cN$ is the set of all positive support functions from $\INDEX^\cN$ into $\ELEM^\cN$. 
The interpretation of \alocal and \blocal constants of sort \ARRAY is more subtle. We need a d\'etour to 
 introduce it.

Let $k$ be an  index constant s.t.
 $\cM\models k\neq l$ for all \alocal index constant $l$;
we introduce an equivalence relation  
$\equiv_k$ on the set of \alocal array variables as follows: $\equiv_k$ is the smallest equivalence relation that contains all pairs $(a_1,a_2)$ such that $\cM\models l_{a_1}=l_{a_2}$ and moreover an atom of one of the following two 
kinds belongs to $A^0_1$: (I) $a_1= wr(a_2, i, e)$; (II) $\diff(a_1,a_2)= l$, for an $l$ such that $\cM\models l<k$.

\vskip 2mm\noindent
\textbf{Claim}: \emph{if $c_1, c_2$ are \abcommon and $c_1\equiv_k c_2$, then $c_1^\cM(k^\cM)=c_2^\cM(k^\cM)$.}
The claim is proved 
by preventively showing, for every \alocal constants $a_1,a_2$ such that $a_1\equiv_k a_2$,
 that the number of the \alocal constants $j$
such that $\cM\models k<j$ and $a_1^\cM(j^\cM)\neq a_2^\cM(j^\cM)$ is less or equal to $N_A<N$.
In the proof (see the appendix for details),
properties (a) and (b) above and the fact that $N:=1+max(N_A, N_B)$ play a crucial role.

In order to interpret \alocal constants of sort \ARRAY, we assign to an \alocal constant $a$ of sort \ARRAY the function $a^\cN$ defined as follows for every $i\in \INDEX^\cN$:
\begin{compactenum}[(\dag-i)]
\item if $i$ is equal to $k^\cM$, where $k$ is an \alocal constant, then $a^\cN(i):=a^\cM(k^\cM)$;
\item if $i$ is different from $k^\cM$ for every \alocal constant $k$, but 
$i$ is equal to $k^\cM$
for some (necessarily \bstrict) index constant $k$  and 
 there is an \abcommon array variable $c$ such that $c\equiv_k a$,
then $a^\cN(i)$ is equal to $c^\cM(k^\cM)$ (this is well-defined thanks to the claim); 
\item
in the remaining cases,
 $a^\cN(i)$ is equal to $el^\cM$ or $\bot^\cM$ depending whether 
$\cM\models 0\leq i \wedge i\leq l_a$ holds or not.
\end{compactenum}

It can be now shown that 
\begin{compactenum}
 \item[($*$-$A$)] \emph{all $a^\cN$ are positive-support functions and  all \formulae from $A_1\cup A_2$ are true in $\cN$}.
\end{compactenum}
In fact, \formulae in $A_2$ are Boolean combinations of \alocal atoms of the kind~\eqref{phi2}: these are $T_I\cup \EUF$-atoms and, due to their shape, each of them is true in $\cM$ iff it is true in $\cN$. 
 Moreover, formul\ae\ in $A_1$ are 
 all $wr, \diff$ and $ \len{-}$-atoms. 
 They are true in $\cN$ because of the 0-instantiation performed by Step 2 (see (a) above). Full details are in the appendix.

The  assignments to the \blocal array variables $b$ are  defined analogously, so that 
\begin{compactenum}
 \item[($*$-$B$)] \emph{all $b^\cN$ are positive-support functions and  all \formulae from $B_1\cup B_2$ are true in $\cN$}.
\end{compactenum}

There is however one important point to notice.
 For all \abcommon constants $c$ of sort \ARRAY,
 our specification of  $c^\cM$ \emph{does not depend} on the fact that we use the above definition for \alocal or for \blocal array constants: to see this,
 notice that $c\equiv_k c$ holds in case ($\dag$-ii) is applied. This remark concludes the proof. 
\end{proof}

It is not difficult to see that the
 quantifier-free $T_I\cup \EUF$-interpolation problem 
 generated by our algorithm is 
 of polynomial size. Notice however that  $T_I\cup \EUF$-interpolation is at least exponential.

\section{Further Related Work and Conclusions}\label{sec:conclusions}

We introduced two theories of arrays, namely the theory \AXDTI of contiguous arrays with maxdiff and its extension \estesa that also supports `constant' arrays'.
These theories are strictly more expressive than McCarthy's theory and the other variants studied in the literature: notably, strong length of arrays is definable, and inside it arrays are fully defined in every memory location. We proved that \estesa admits general interpolation by showing that its models are strongly amalgamable; the existence of (plain) amalgams also implies that \AXDTI has (plain) interpolants. 
 We also studied the SMT problem for \AXDTI and showed through instantiations techniques that it is decidable. Finally, we provided a general 
 algorithm for computing \AXDTI quantifier-free interpolants that relies on a polynomial reduction to the problem of computing general interpolants for the index theory. Differently from previous algorithms, this procedure avoids full instantiation of terms.
 
 One future research direction regards the implementation of this procedure, which is still missing. In the last decade, some implemented approaches have been introduced to compute interpolants for different theories, by relying on different techniques. For the significant theory of \EUF, in~\cite{cesare} the  DPT prover, which implements a method that exploits colored congruence graphs for extracting interpolants, has been shown to produce simpler and smaller interpolants than other solvers.  For more complex theories, 
 in \cite{mcmillan-z3-interpolation} McMillan proposed
an interpolating proof calculus to compute interpolants via 
refutational proofs obtained from the \textsc{z3} SMT-solver. It is worth mentioning his approach because it takes advantages from the flexibility of the \textsc{z3} solver to deal with several theories and their combination: it makes use of a 
secondary interpolation engine in order to `fill the gaps' 
of refutational proofs introduced by \emph{theory lemmas}, 
which are specific \formulae derived by the satellites theories encoded in \textsc{z3}.
This secondary engine only needs an
interpolation algorithm for \verb|QF_UFLIA|. 
This approach can be used to computes interpolants for array theories, but since they use 
quantified \formulae,  the method can generate quantified
\formulae.

Concerning in particular array theories, another notable approach for computing interpolants is due to the authors of \cite{HS18}, which exploited the proof tree preserving interpolation
scheme from \cite{JuHoe2013} to construct interpolants via a
resolution proof.  This approach is capable of handling, following the nomenclature of our paper, \abcommon literals but not
\abcommon terms: 
 for this reason, \emph{weakly equivalences} between arrays are there introduced in order
to cope with those cases.  This method supports the use of the
$\diff$ operation between arrays in order to compute quantifier-free
interpolants, but its semantic interpretation is undetermined as in~\cite{lmcs_interp}. 


 In~\cite{SMTpaper}, the authors presented AXDInterpolator~\cite{Jose}, an implementation of the interpolation algorithm
from~\cite{ourfossacs}, which allows the user to choose \textsc{z3},
\textsc{Mathsat}, or \textsc{SMTInterpol} (\cite{SMTinterp}) as the underlying interpolation engines. In order to show its feasibility, it was tested against a benchmark based on C programs from the ReachSafety-Arrays and MemSafety-Arrays tracks of SV-COMP~\cite{svcomp}.
Since many C programs from ~\cite{svcomp} require the usage of array length (and, in particular, strong length) we plan to develop a tool that implements the new algorithm for contiguous arrays presented in this paper. 


There is still a question concerning our interpolation algorithm that needs to be investigated:
extending the algorithm to the theory \estesa (with constant arrays in the language). In order to handle constant arrays, 
the construction of Theorem~\ref{thm:al} is still appropriate, except for the fact that condition ($\dag$-ii)
should not be applied to define $\Const(i)^\cN$ when $i$ is an \astrict constant such that $i^\cM$ is equal to
$j^\cM$ for some \abcommon $j$. To avoid this, one could
introduce right after Step 1 some form of guessing for equalities between index constants: however, such a guessing
(based on colorings) would create branches and consequently would not produce a polynomial instance of a $T_I\cup \EUF$-interpolation problem in Step 3. 
This issue needs further analysis.

Finally, although quite challenging, it would be interesting to extend our interpolation results also to array theories combined with cardinality constraints, similar to those introduced, e.g., in~\cite{elena},\cite{kunkak}.

\bibliographystyle{plain}
\bibliography{biblio}

\newpage
\appendix

\onecolumn

\section{Proofs of results from Section~\ref{sec:embeddings}}

\vskip 2mm\noindent
\textbf{Lemma~\ref{lem:dependency}}  \emph{
  Let $\cN$, $\cM$ be models of  \AXEXTTI 
  such that $\cM$ is a
  substructure of $\cN$.  For every $a,b\in\ARRAY^\cM$, we have that
  \begin{eqnarray*}  
    \cM\models a\sim b & \mbox{ {\rm iff} } &
    \cN\models a\sim b.
  \end{eqnarray*}
}
\vskip 1mm
\begin{proof}  
The left-to-right side is trivial because if $\cM\models a\sim b$ then 
$a$ and $b$ have equal length in $\cM$ and in $\cN$ too because length is preserved; moreover, $\cM\models a=wr(b, I, E)$,
where $I\coincide i_1, \ldots, i_n$ is a list of costants (naming elements of $\cM$) of sort
$\INDEX$, $E \coincide e_1, \ldots, e_n$ is a list of costants (naming elements of $\cM$) of sort
$\ELEM$, and $wr(b, I, E)$ abbreviates the term $wr(wr(\cdots wr (b,
i_1, e_1) \cdots), i_n, e_n)$.  Thus,
also $\cN\models a=wr(b, I, E)$ because $\cM$ is a substructure of
$\cN$. Vice versa, suppose that $\cM\not\models a\sim b$. This
means that either $\len{a}\neq \len{b}$ or that there are infinitely many $i\in\INDEX^\cM$ such that
$rd^\cM(a,i)\neq rd^\cM(b,i)$. Since $\cM$ is a substructure of $\cN$, these conditions holds in $\cN$ too.
\end{proof}

\begin{lemma}\label{lem:diffsim} 
Let $\cM$ be a model of $\AXEXTTI$ and let  $a,a',b,b'\in \ARRAY^\cM$; if  
 $a\sim_\cM a'$, $b\sim_\cM b'$ and $\diff_k(a',b')$ is defined for every $k$, then  $\diff(a,b)$ is also defined.
\end{lemma}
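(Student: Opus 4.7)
The plan is to split into cases based on lengths, then leverage the fact that $\sim$ allows only finitely many pointwise changes, so that an infinite supply of differences between $a'$ and $b'$ must contain at least one ``preserved'' difference between $a$ and $b$, and above that point only finitely many differences can occur.

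First, from $a\sim a'$ and $b\sim b'$, the length axioms give $\len{a}=\len{a'}$ and $\len{b}=\len{b'}$. I would immediately dispose of the case $\len{a}\neq\len{b}$ by invoking Lemma~\ref{lem:easy}, equation~\eqref{eq:diffmax}, which yields $\diff(a,b)=\max(\len{a},\len{b})$, hence defined. From now on we may set $l:=\len{a}=\len{b}=\len{a'}=\len{b'}$.

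Next, I would isolate the easy sub-case $a'\sim b'$: transitivity of $\sim$ gives $a\sim b$, so the set $\{i\mid a(i)\neq b(i)\}$ is finite and its maximum (or $0$ when $a=b$) witnesses that $\diff(a,b)$ is defined. The interesting case is $a'\not\sim b'$ at equal lengths, which (combined with the hypothesis that $\diff_k(a',b')$ is defined for every $k$) forces the set $I':=\{i\mid a'(i)\neq b'(i)\}$ to be infinite: the recursive definition of $\diff_k$ produces the strictly decreasing sequence of indices $j_k:=\diff_k(a',b')\in I'$, and if $I'$ were finite this sequence would collapse to $0$, in which case $\diff_k$ would actually be undefined when the construction tries to take a further maximum on an unbounded set. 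So we obtain infinitely many $j_k\in I'$ with $a'(j_k)\neq b'(j_k)$.

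Because $a\sim a'$ and $b\sim b'$, the sets $\{i\mid a(i)\neq a'(i)\}$ and $\{i\mid b(i)\neq b'(i)\}$ are finite, so for all but finitely many $k$ we have $a(j_k)=a'(j_k)$ and $b(j_k)=b'(j_k)$, and therefore $a(j_k)\neq b(j_k)$. Fix such a $j_k$. Finally, I would look at the interval $[j_k,l]$: within it, $a$ and $b$ differ on at least one index (namely $j_k$) but on only finitely many, because each of the three differences $a\triangle a'$, $b\triangle b'$, $a'\triangle b'$ restricted to $[j_k,l]$ is finite (the last one by choice of $j_k$, the others by the $\sim$ hypotheses), and differences between $a$ and $b$ at an index $i\ge j_k$ force a difference in at least one of those three sets. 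Thus $\{i\ge j_k\mid a(i)\neq b(i)\}$ has a maximum, which must also be the maximum of $\{i\mid a(i)\neq b(i)\}$ (since above $l$ both arrays equal $\bot$). This maximum is the witness that $\diff(a,b)$ is defined.

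The only delicate point — which I expect is the main obstacle — is making precise the claim that $\diff_k(a',b')$ being defined for every $k$ actually forces $I'$ to be infinite when $a'\not\sim b'$; this requires unwinding the recursive definition of $\diff_k$ and checking that the $j_k$ form a strictly decreasing chain in $I'$ under the assumption of equal lengths. Once this is clear, the rest is the finite-set bookkeeping described above.
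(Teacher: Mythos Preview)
Your argument is correct and follows the same case split and finite-set bookkeeping as the paper's own proof. One minor point: the infinitude of $I'$ when $a'\not\sim b'$ at equal lengths follows directly from the definition of $\sim$, not from the definability of the $\diff_k$ (your parenthetical about an ``unbounded set'' is off---if $I'$ were finite the $\diff_k$ would eventually stabilize at $0$ and remain defined); once $I'$ is infinite, the hypothesis that every $\diff_k(a',b')$ is defined is what guarantees the $j_k$ form a strictly decreasing sequence in $I'$, exactly as you then use.
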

\begin{proof}
 Notice first that, from $a\sim a'$ and $b\sim b'$, it follows that $\len{a}=\len{a'}$ and $\len{b}=\len{b'}$. In case $\len{a}\neq\len{b}$, we have that $\diff(a,b)=\max\{\len{a},\len{b}\}$ (see Lemma~\ref{lem:easy}), which implies that $\diff(a,b)$ is defined. 
 Hence, the relevant case is when we have  $l=\len{a}=\len{b}=\len{a'}=\len{b'}$ and $a'\not\sim b'$ (if
 $a'\sim b'$, then we have also $a\sim b$
 and the maximum of the finitely many indexes where 
 $a,b$ differ is $\diff(a,b)$). 
 Then for the  infinitely many indexes $j_k=\diff_k(a',b')$ we have $a'(j_k)\neq b'(j_k)$; for at least one of such $j_k$ we must also have $a(j_k)\neq b(j_k)$ because $a\sim a'$ and $b\sim b'$. 
Consider now the indexes in $[j_k,l]$: 
in this interval, the pair of arrays $a,b$  differs on at least one but at most finitely many indices (because $a,a'$ 
differs on finitely many indices there and so do the pairs $b, b'$ and $a',b'$), so the biggest one such index will be $\diff(a,b)$.
\end{proof}

\begin{lemma}
\label{aggiungoelem}
Let $\cM$ be a model of $\AXEXTTI$. There exist a model  $\cN$ of $\AXEXTTI$ and a \diff-faithful embedding  $\mu:\cM\rightarrow \cN$ such that 
the restriction of $\mu$ to the sort $\ELEM$ is not surjective.
In addition, if $\cM$ is a model of $\extestesa,\AXDTI$ or of \estesa, so it is $\cN$.
\end{lemma}
\begin{proof}
To build $\cN$ it is sufficient to put:
\begin{itemize}
    \item $\INDEX^\cN=\INDEX^\cM$,
    \item  $\ELEM^\cN=\ELEM^\cM\cup \{e\}$ where $e\notin \ELEM^\cM$,
    \item  $\ARRAY^\cN$ 
    consists of the positive-support functions $a:\INDEX^\cN\longrightarrow
    \ELEM^\cN$ for which there exist $a'\in \ARRAY^\cM$ such that $a\sim a'$.
\end{itemize}
Now notice that if \diff\ is totally defined in $\cM$, so it is in $\cN$. In fact, this follows from the definition of $\ARRAY^\cN$  and Lemma~\ref{lem:diffsim}:  
if $a\sim a'$, $b\sim b'$ and $\diff_k(a',b')$ is defined for every $k$, then  $\diff(a,b)$ is also defined by the previous lemma. The claim  is proved  in the same way  for all the above mentioned  array theories  (notice that in case the signature includes the $\Const$ symbol, $\mu$ trivially preserves it).
\end{proof}

\vskip 2mm\noindent
\textbf{Lemma~\ref{immersionediff}}  \emph{Let $\cM$ be a model of $\AXEXTTI$ (resp. of \extestesa) and let
$a,b\in \ARRAY^\cM$ be such that $\diff^{\cM}(a,b)$ is not defined. 
Then there are a model $\cN$ of $\AXEXTTI$ (resp. of \extestesa) and a \diff-faithful embedding
$\mu:\cM\rightarrow \cN$ such that $\diff^\cN(a,b)$ is defined. 
}
\vskip 1mm
\begin{proof} Thanks to Lemma~\ref{aggiungoelem}, 
we can assume that 
$\ELEM^\cM$ 
has at least an element
 $e$  (different from $\bot^\cM, \el^\cM$). Notice that we must have $\len{a}=\len{b}$, otherwise $\diff(a,b)$ is defined and it is $\max(\len{a},\len{b})$ according to Lemma~\ref{lem:easy}.

 Let $I=\{i\in \INDEX^\cM\mid a(i)\neq b(i)\}$ be the set of indices without maximum element (hence infinite) where they differ. 
 Let $\downarrow I := \{j\in \INDEX^\cM \mid \exists i\in I,\ j\le i\}\supseteq I$.
 Notice that  the condition
 $$
 (+)~~ ``\exists i\in I\,\forall j\in I\, ( j\geq i \to x(j)=\el)"
 $$
 cannot be satisfied  both for $x=a$ and $x=b$: indeed, if this were the case, assuming w.l.o.g. that $i_a\leq i_b$ (where $i_a$ and $i_b$ are the witnesses for the existentially quantified index $i$ in $(+)$ for $x=a$ and $x=b$ respectively), we would have that 
 $a(j)=el^\cM=b(j)$ for all $j\geq i_b$, $j\in I$, 
 which is a contradiction with the definition of $I$. 
  In case one of them satisfies it, we assume it is $b$.
  
Let
$\Delta$ be the Robinson diagram of the $T_I$-reduct of $\cM$ and let $k_0$
be  a new constant; let us introduce the set
$$\Delta' := \Delta\cup\{ i<k_0 \mid i \in \downarrow I\} \cup
\{k_0<i \mid i \in \INDEX^\cM\setminus \downarrow I\}.
$$
By the compactness theorem for first order logic and since $I$ is infinite, the set 
 $\Delta'$ turns out to be consistent. In fact, if
  $\Delta'$ were inconsistent, 
  then there would exist a finite subset of it not admitting a model.
  However, a finite subset of
   $\Delta'$ can contain constraints only for a finite number of index constants $d$ occurring in $\Delta$, $i\in  \downarrow I$, $i'\in \INDEX^\cM\setminus \downarrow$ and $k_0$. Such constraints can be verified  inside the $T_I$-reduct of $\cM$ itself: to interpret the additional constant  $k_0$, it is sufficient to 
use the fact that $I$ contains arbitrarily large indexes  and the fact that the definition of
    $\downarrow I$ implies that
$$\forall i\in \downarrow I,\ \forall j\in \INDEX^{\cM}\setminus \downarrow I,\quad i<j. $$

By Robinson Diagram Lemma, there exists a model $\cA$ of $T_I$ extending the 
$T_I$-reduct of  $\cM$; such $\cA$ contains in its support an element  $k_0$ such that
$$
\forall i\in \downarrow I,\ i<k_0,
$$ $$
\forall i\in \INDEX^\cM\setminus \downarrow I,\ k_0<i.
$$

We now take 
 $\ELEM^\cN=\ELEM^\cM$, $\INDEX^\cN=\INDEX^\cA$; we let also  $\ARRAY^\cN$
 to be the set of all positive-support functions from
  $\INDEX^\cN$ into $\ELEM^\cN$ (notice that this $\cN$ is trivially also a model of \extestesa).
  We observe  that $k_0< \len{a}^\cM$ and recall that $\len{a}^\cM=\len{b }^\cM$.
  
Let us now define the embedding 
 $\mu:\cM\rightarrow \cN$; at the level of the sorts \INDEX and \ELEM, we use inclusions. For the \ARRAY sort, we need to specify the value 
  $\mu(c)(k)$ for $c\in \ARRAY^\cM$ and $k\in \INDEX^\cN\setminus \INDEX^\cM$
  (for the other indices we keep the old $\cM$-value to preserve the read operation). Our definition for $\mu$ must preserve the maxdiff index (whenever already defined in $\cM$) and must guarantee that  $\diff^\cN(\mu(a),\mu(b))=k_0$ (by construction, we have $k_0>0$). 
  For a generic array 
   $c\in \ARRAY^\cM$, we operate as follows:
\begin{enumerate}
    \item if $\len{c}^{\cM}<k_0$ we put $\mu(c)(k_0)=\bot^\cM$, otherwise:
    \item if the condition $(\star)$ below holds, we put $\mu(c)(k_0)=e$,
    \item if such condition does not hold, we put $\mu(c)(k_0)=el^\cM$.
\end{enumerate}
The condition  $(\star)$ is specified as follows:
\begin{description}
 \item[$(\star)$] ~~
there is $i\in I$ such that for all $j\in I, j\geq i$ we have $c(j)=a(j)$.
\end{description}
\noindent
For all the remaining indexes $k\in \INDEX^\cN\setminus (\INDEX^\cM\cup \{k_0\})$ we put
$$
    \mu(c)(k)=\begin{cases} 
\bot^\cM, & \mbox{if $k \notin [0, \len{c}^\cM]$}\\
\el^\cM, & \mbox{if $k \in [0, \len{c}^\cM]$}
    \end{cases}
    \,\,\,\,\,\,\,   \quad \quad     \eqref{indicinonk0} $$
Notice that   we have $\mu(a)(k_0)=e\neq \el^\cM=\mu(b)(k_0)$ (the last equality holds because $I$ is infinite and does not have maximum, hence condition  $(\star)$ holds for $a$ but not for $b$). 
   In addition:
\begin{itemize}
    \item for all $i\in\INDEX^\cM$ such that $k_0<i$, 
    we have $i\notin\, \downarrow I$, according to the construction of $k_0$ and
    consequently $i\notin I$, that is $a(i)=b(i)$;
    \item for all $i\in\INDEX^\cN\setminus (\INDEX^\cM\cup\{k_0\})$ such that $k_0<i$, since we have  $\len{a}^\cM=\len{b}^\cM$, we get
    $$\mu(a)(i)=\bot^\cM~{\rm iff}~ \mu(b)(i)=\bot^\cM,$$
    $$\mu(a)(i)=\el^\cM ~{\rm iff}~ \mu(b)(i)=\el^\cM.$$
\end{itemize}
Hence, we can conclude that  $\diff^\cN(\mu(a),\mu(b))$ is defined and equal to $k_0$.

We only need to check that our $\mu$ preserves $rd, \len{-}, wr$, constant arrays and \diff\ 
(whenever defined).

\vspace{5pt}
 The operation $rd$ is preserved because $\mu$ acts as an inclusion for indexes and elements and because we have 
 $\mu(c)(k)=c(k)$ if $k\in \INDEX^\cM$.
 
\vspace{5pt} Concerning length, we have
 $\len{\mu(c)}^\cN=\len{c}^\cM$ because of \eqref{indicinonk0} and because of the above definition of $\mu(c)(k_0)$ (recall that $k_0>0$). 

\vspace{5pt}
Concerning write operation, we prove that for all $c\in \ARRAY^{\cM}$,
$i\in \INDEX^{\cM}\cap [0,\len{c}]$ and $e'\in \ELEM^{\cM}\setminus\{\bot^\cM \}$  
we have
$$
\mu(wr(c,i,e'))=wr(\mu(c),i,e').
$$
Remember that we have $\len{wr(c,i,e')}=\len{c}=\len{\mu(c)}$.
\begin{itemize}
    \item For $k\neq i$ in $\INDEX^\cM$
    $$
    \mu(wr(c,i,e'))(k)=wr(c,i,e')(k)=c(k)
    $$ $$
    wr(\mu(c),i,e')(k)=\mu(c)(k)=c(k);
    $$
    \item For $k=i$
    $$
    \mu(wr(c,i,e'))(i)=wr(c,i,e')(i)=e'
    $$ $$
    wr(\mu(c),i,e')(i)=e';
    $$
    \item For $k=k_0>\len{c}$ the claim  follows immediately from the definition;
    \item For $k=k_0<\len{c}$, $(\star)$ holds for $c$ iff it 
    holds for $wr(c,i,e')$, because $I$ is infinite.
     Hence we have
    $$
    \mu(wr(c,i,e'))(k_0)= e~~{\rm iff}~~  \mu(c)(k_0)=e.
$$ 
    \item For $k\in \INDEX^\cN \setminus (\INDEX^\cM \cup \{k_0\})$, the claim is clear from the definition and from the fact that  $wr$ preserves length. 
\end{itemize}

\vspace{5pt}
For constant arrays, we must show only that $\mu(\Const(i))(k_0)=\el^\cM$ in case $k_0<i$: this is clear, because $a$ does not satisfy (+), hence  $(\star)$ does not hold for $\Const(i)$. 

\vspace{5pt}
Let us now finally consider the \diff\ operation and let us prove that if 
$\diff^\cM(c_1,c_2)$ is defined, then  $\diff^N(\mu(c_1),\mu(c_2))$ is also defined and equal to it. 
Assume that $\diff^\cM(c_1,c_2)$ is defined; since $\mu$ preserve length, the only relevant case,
in view of Lemma~\ref{lem:easy}, is when we have $\len{c_1}=\len{c_2}$; since the values of $c_1, c_2$ on indexes from $\cM$ are preserved, taking in mind~\eqref{indicinonk0} (in particular, that for $k \in \INDEX^\cN \setminus (\INDEX^\cM \cup \{k_0\})$ we have $\mu(c_1)(k)=\mu(c_2)(k)$), 
we only have to exclude that we have 
$$
\diff^{\cM}(c_1,c_2)<k_0 ~{\rm and}~\mu(c_1)(k_0)\neq \mu(c_2)(k_0)
$$
If this is the case, we have, e.g., 
 $\mu(c_1)(k_0)=e\neq \el^\cM=\mu(c_2)(k_0)$ ($k_0<\len{c_1}=\len{c_2}$), which implies that $(\star)$ holds for $c_1$ but not for $c_2$. 
 However,  
 it cannot be that $(\star)$  holds for only one among $c_1,c_2$. The reason for this is as 
 follows.
 Indeed, if $i\in I$ is the index that witnesses $(\star)$ for $c_1$, then $c_1(j)=a(j)$ for all indexes $j\in I$ such that $j\geq i$ (which are infinitely many). Since $I$ is infinite and without maximum and since $\diff^{\cM}(c_1,c_2)<k_0$, 
 we must have $\diff^{\cM}(c_1,c_2)\in\, \downarrow I$ by the definition of $\Delta$, so 
 there must be infinitely many indices in $I$ bigger than 
 $\diff^{\cM}(c_1,c_2)$ and arbitrarily large, which means in particular that there exists an index $i'\in I$ such that $i'\geq i$ and $i'> \diff(c_1,c_2)$. This $i'$ witnesses $(\star)$ for $c_2$, as wanted. This concludes the proof. 
 \eop
\end{proof}

\section{Proofs of results from Section~\ref{sec:strongamalg}}

We prove here that the model $\cM$ introduced in Section~\ref{sec:strongamalg} is a \estesa-amalgam (and also a \AXDTI-amalgam) for the models $\cM_1$ and $\cM_2$ with the common substructure $\cA$.
\subsection*{Requirements check for the amalgamated model}
\begin{proof}
We need to prove that the functions 
 $\mu_i$: (i) are well-defined, (ii) are injective, (iii) preserve $\len{-}$,
 (iv) preserve 
 $rd$ and $wr$, (v) preserve  $\diff$, (vi) satisfy the condition $ \mu_{1} \circ f_1=  \mu_{2} \circ f_2 $, (vii) preserve constant arrays (for the statement about \estesa).

\begin{enumerate}
    \item [(i)] Since $\INDEX^\cM$ is a strong amalgam of $\INDEX^{\cM_1}$
    and $\INDEX^{\cM_2}$, the case distinctions we made for defining $\mu_i(a)(k)$ are non-overlapping and exhaustive.

\noindent We now show that if, for $i=1,2$ and
 $k\in \INDEX^{\cM_{3-i}}\setminus \INDEX^\cA$ and $a\in \ARRAY^{\cM_i}$,
 the relation ($2\star$) holds relatively to two different pairs of arrays  $(c_1,b_1)$, $(c_2,b_2)$ from $\ARRAY^\cA \times \ARRAY^{\cM_i}$, 
 then we nevertheless have 
 $f_{3-i}(c_1)(k)=f_{3-i}(c_2)(k)$ 
 (this proves the consistency of the definition).
 For symmetry, let us consider only the case $i=1$. 
 Since the index ordering is total, let us suppose that we have for instance
  \begin{equation}\label{eq:cp}
   k>\diff^{\cM_1}(b_1,f_1(c_1)) \ge \diff^{\cM_1}(b_2,f_1(c_2)).
  \end{equation}
By the transitivity of
 $\sim^{M_1}$, 
 %
  $b_1$ and $b_2$ differ on finitely many indices,
 hence we can consider the finite sets 
$$ J:= \{j\in \INDEX^\cA \mid \ b_1(j)\neq b_2(j), j>\diff^{\cM_1}(b_1,f_1(c_1))\}$$
$$ E:= \{b_1(j) \mid\ j\in J\}\subseteq \ELEM^\cA.$$

Let now pick $c:=wr(c_2,J,E)$; then  $c\sim^{\cA}c_2$. Since $f_2$ is an embedding, 
 we have 
  $f_2(c)(k)=f_2(c_2)(k)$ for all $k\in \INDEX^{\cM_2}\setminus \INDEX^\cA$.
  Suppose we have also
\begin{equation}
\label{supp1}
    \diff^\cA(c_1,c)\le \diff^{\cM_1}(f_1(c_1),b_1).
\end{equation}
Then
$$ \diff^\cA(c_1,c)\le \diff^{\cM_1}(f_1(c_1),b_1)<k $$
and consequently also the desired equality 
$$ f_2(c_1)(k)=f_2(c)(k)=f_2(c_2)(k) $$
follows.

In order to prove
 (\ref{supp1}), we consider  $j\in \INDEX^\cA$ with $j>\diff^{\cM_1}(f_1(c_1),b_1)$ 
 and show that we have 
  $c(j)=c_1(j)$. 
  Suppose not, i.e. that 
  $c(j)\neq c_1(j)$; then  we cannot have  $j\in J$ otherwise, by definition of $c$ and $E$, we would have 
  $c(j)=b_1(j)=f_1(c_1)(j)=c_1(j)$ ($f_1$ is inclusion for indexes 
  and $J\subseteq\INDEX^\cA$), contradiction. Hence, we have $j\notin J$, 
 so $c(j)=c_2(j)$ and $b_1(j)=b_2(j)$. Now remember that  
$$ j>\diff^{\cM_1}(b_1,f_1(c_1))\ge \diff^{\cM_1}(b_2,f_1(c_2)) ,$$
hence
$$ c_1(j)=b_1(j),\ c_2(j)=b_2(j)$$
$$ c(j)=c_2(j)=b_2(j)=b_1(j)=c_1(j)$$
thus getting an absurdity.
    
\item[(ii)]  Injectivity of $\mu_1$ and $\mu_2$ is immediate.
    
\item [(iii)] In order to prove that $\len{-}$ is preserved, it is sufficient to show that for every $a\in \ARRAY^{\cM_1}$ and for all $ k\in \INDEX^\cM$, we have 
$$
\mu_1(a)(k)\neq \bot \leftrightarrow 0\le k \le \len{a}^{\cM_1}.
$$
The only relevant case is when $k\in \INDEX^{\cM_2}\setminus \INDEX^\cA$ and ($2\star$) holds. In such a case, we have two possibilities:
    \begin{compactenum}
        \item [$\len{b}^{\cM_1}=\len{c}^{\cA}$:] in this case, since $b\sim^{\cM_1}a$, we have  $\len{a}^{\cM_1}=\len{b}^{\cM_1}=\len{c}^{\cA}=\len{f_2(c)}^{\cM_2}$ ($f_2$ is an embedding), thus getting what we need for $k$;
        \item[$\len{b}^{\cM_1}\neq\len{c}^{\cA}$:] in this case $k>\diff^{\cM_1}(b,f_1(c))=\max\{\len{b}^{\cM_1},\len{c}^\cA\}$ by Lemma~\ref{lem:easy}.
        Since $a\sim b$ implies $\len{a}=\len{b}$, the definition of $\mu_1$ produces $\mu_1(a)(k)=f_2(c)(k)=\bot$ (the last identity holds because $k>\len{c}^{\cA}$),
         which is as desired because
         $k>\len{b}^{\cM_1}=\len{a}^{\cM_1}$ too.
    \end{compactenum}
    
\item[(iv)] The fact that 
 $rd$ and $wr$ operations are preserved is easy (notice in particular that, if   ($2\star$) holds for $a$ via the pair $(c,b)$, 
 then the same pair guarantees 
  ($2\star$) for arrays of the kind $wr(a,i,e)$).

\item[(v)] Again we limit to the case of $\mu_1$ for symmetry.
We need to show that for every $a_1,a_2\in \ARRAY^{\cM_1}$, we have $\mu_1(\diff^{\cM_1}(a_1,a_2))=\diff^{\cM}(\mu_1(a_1),\mu_1(a_2))$. Notice that, if we call $j$ the index $\diff(a_1,a_2)\in \INDEX^{\cM_1}$, by definition of $\mu_1$ on array applied to indexes in $\INDEX^{\cM_1}$, we have that  $\mu_1(a_1)(j)= a_1(j)\neq a_2(j)=\mu_1(a_2)(j)$. Hence, in order to conclude, it is sufficient to show that, given  $k\in \INDEX^\cM$  such that $k>\diff^{\cM_1}(a_1,a_2)$, the equality $\mu_1(a_1)(k)=\mu_1(a_2)(k)$ holds. 
Notice first that we can always reduce to one of the following three cases
\begin{enumerate}
    \item [(a)] $\len{a_1}<\len{a_2}=\diff^{\cM_1}(a_1,a_2)$;
    \item [(b)] $\diff^{\cM_1}(a_1,a_2)<\len{a_1}=\len{a_2}$;
    \item [(c)] $\len{a_1}=\len{a_2}=\diff^{\cM_1}(a_1,a_2)$.
\end{enumerate}
We now show that 
$\mu_1(a_1)(k)=\mu_1(a_2)(k)$.
\begin{itemize}
    \item If $k \in \INDEX^{\cM_1}$:
    $$ \mu_1(a_1)(k)=a_1(k)=a_2(k)=\mu_1(a_2)(k).$$
    \item If $k\notin \INDEX^{\cM_1}\cup \INDEX^{\cM_2}$, we analyze the three cases separately:
    \begin{enumerate}
        \item [Case(a)]: then $k\notin [0,\len{a_i}]$ for $i=1,2$. We have $\mu_1(a_1)(k)=\bot=\mu_1(a_2)(k)$;
        \item[Case (b)]: we have 
        $$ \mu_1(a_1)(k)=\bot \Leftrightarrow \mu_1(a_2)(k)=\bot$$
        $$ \mu_1(a_1)(k)=el \Leftrightarrow \mu_1(a_2)(k)=el; $$
        \item [Case (c)]: similarly to  (a), we have  $k\notin [0,\len{a_i}]$ for $i=1,2$.
    \end{enumerate}
    \item If $k\in \INDEX^{\cM_2}\setminus \INDEX^{\cA}$ and ($2\star$) does not hold neither for $a_1$ nor for $a_2$, the argument is the same as in the previous case.
    
    Otherwise, suppose that ($2\star$) holds for, say,  $a_1$ 
    as witnessed by the pair $(c_1,b_1)$. Then we get 
     $\mu_1(a_1)(k)=f_2(c_1)(k)$. We prove that 
     ($2\star$) holds for  $a_2$ too and that we have $\mu_1(a_1)(k)=f_2(c_1)(k)=\mu_1(a_2)(k)$.  
    
    Since $b_1$ and $a_1$ differ on finitely many indices inside $\cM_1$, we can consider the finite sets
    $$ I:=\{i\in \INDEX^{\cM_1}\mid \ b_1(i)\neq a_1(i),\ i>\diff^{\cM_1}(a_1,a_2)\}, $$
    $$
    E:=\{b_1(i)\mid\ i\in I\}\subseteq \ELEM^{\cM_1}
    $$
    and the array $\hat{b}:=wr(a_2,I,E)$; for this array, we obviously have   $a_2\sim^{\cM_1} \hat{b}$. If we also have 
    \begin{equation}
    \label{supp2}
        \diff^{\cM_1}(b_1,\hat{b})\le \diff^{\cM_1}(a_1,a_2)
    \end{equation}
    then we get:
    $k>\diff^{\cM_1}(a_1,a_2)$ e $k>\diff^{\cM_1}(b_1,f_1(c_1))$ (the latter is from ($2\star$)). Hence:
    $$ k>\max\{\diff^{\cM_1}(a_1,a_2),\diff^{\cM_1}(b_1,f_1(c_1))\} \ge$$
    $$
    \ge \max\{\diff^{\cM_1}(b_1,\hat{b}),\diff^{\cM_1}(b_1,f_1(c_1))\}\ge$$
    $$
    \ge \diff(\hat{b},f_1(c_1))
    $$
    (the last disequality holds because of the `triangular disequality'~\eqref{eq:triangular} of Lemma~\ref{lem:easy}).
    Hence we obtain ($2\star$) for $a_2$ via the pair given by $c:=c_1$ and $b:=\hat{b}$ (because we have $a_2\sim^{\cM_1}\hat{b}$ and $k>\diff(\hat{b},f_1(c_1))$) and consequently $\mu_1(a_2)(k)=f_2(c_1)(k)$.
    
    It remains to prove 
     (\ref{supp2}); to this aim, let us pick $j\in\INDEX^{\cM_1}$ such that $j>\diff(a_1,a_2)$ and let us show that   $b_1(j)=\hat{b}(j)$. If this is not the case, i.e., if $b_1(j)\neq \hat{b}(j)$, 
     then according to the definition of 
      $\hat{b}$ we have $\hat{b}(j)=a_2(j)$ and $j\notin I$. Hence $\hat{b}(j)=a_2(j)=a_1(j)=b_1(j)$  (the last identity holds because $j\notin I$ and $j>\diff(a_1,a_2)$), absurd. 

\end{itemize}

    \item[(vi)]
    In order to prove $ \mu_{1} \circ f_1  =  \mu_{2} \circ f_2 $, let us consider $c\in \ARRAY^\cA$; let us put  $a_i=f_i(c)$ for  $i=1,2$ and let us check that  $$\mu_1(a_1)(k)=\mu_2(a_2)(k)$$
    holds for all $k\in \INDEX^\cM.$
\begin{itemize}
    \item Case $k\in \INDEX^\cA$: we have $$\mu_1(a_1)(k)=a_1(k)=f_1(c)(k)=c(k)$$ $$\mu_2(a_2)(k)=a_2(k)=f_2(c)(k)=c(k).$$
    \item Case $k\in \INDEX^{\cM_1}\setminus \INDEX^\cA $:
    clearly  ($2\star$) holds for  $a_2$ with $c:=c$ and $b:=a_2$, consequently
    $$ \mu_1(a_1)(k) = a_1(k) =  f_1(c)(k)$$
    $$ \mu_2(a_2)(k) = f_1(c)(k).$$
    \item Case $k\in \INDEX^{\cM_2}\setminus \INDEX^\cA $:
    clearly  ($2\star$) holds for  $a_1$ with $c:=c$ and $b:=a_1$, consequently
    $$ \mu_1(a_1)(k) =  f_2(c)(k)$$
    $$ \mu_2(a_2)(k) = a_2(k) = f_2(c)(k).$$
    \item Case $k\notin (\INDEX^{\cM_1}\cup \INDEX^{\cM_2})$ and $ k\in [0,\len{c}] $:
    we have $$ \mu_1(a_1)(k) = el = \mu_2(a_2)(k). $$
    \item se $k\notin (\INDEX^{\cM_1}\cup \INDEX^{\cM_2})$ and $k\notin [0,\len{c}] $: we have 
    $$ \mu_1(a_1)(k) = \bot = \mu_2(a_2)(k) .$$
\end{itemize}
This completes our case analysis.
\item[(vii)] Here we assume that $\cM_1, \cM_2$ are models of \estesa; we need to show, e.g., that $\mu_1(\Const^{\cM_1}(i))(k)=\el$ for every $k$
such that $0\leq k\leq i$.
Now, if $i\in \INDEX^\cA$ this is obvious, because $\Const^{\cM_1}(i)=f_1(\Const^{\cA}(i))$.  Hence suppose that $i\in\INDEX^{\cM_1}\setminus \INDEX^\cA$; the only possibly problematic case is when $k\in \INDEX^{\cM_2}\setminus \INDEX^\cA$ and ($2\star$) applies, as witnessed by a pair $(b,c)$ 
for $\Const^{\cM_1}(i)$. But we have $\len{b}=\len{\Const^{\cM_1}(i)}=i$ and $\len{f_1(c)}\neq i$ 
(because $\len{f_1(c)}=\len{c}\in \INDEX^\cA$). Then, according to ($2\star$) and recalling Lemma~\ref{lem:easy}, we have
$k>\diff^{\cM_i}(b,f_i(c))=\max\{\len{b}, \len{f_1(c)}\}=\max\{i, \len{f_1(c)}\}\geq i$, contradicting the choice of $k$.
%
\end{enumerate}
\end{proof}


To prove \emph{strong} amalgamation for \estesa we need a couple of lemmas.

\begin{lemma}\label{lem:indexinfinite}
Every model $\cM$ of \estesa can be embedded into a model $\cN$ such that $\INDEX^\cN$ is infinite. 
\end{lemma}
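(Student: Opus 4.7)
The plan is to first enlarge the index sort via stable infiniteness of $T_I$, then build an intermediate model of \extestesa on this enlarged index sort, and finally apply Theorem~\ref{thm:extension} to land in \estesa.

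Since $T_I$ is stably infinite (Definition~\ref{def:index}), the $T_I$-reduct of $\cM$ embeds into some $T_I$-model $\cA$ whose support $\INDEX^\cA$ is infinite; we may assume this embedding is inclusion. I will then construct a model $\cN'$ of \extestesa by taking $\INDEX^{\cN'}=\INDEX^\cA$, $\ELEM^{\cN'}=\ELEM^\cM$, and letting $\ARRAY^{\cN'}$ be the set of all positive-support functions from $\INDEX^{\cN'}$ to $\ELEM^{\cN'}$. Verifying that $\cN'$ satisfies all the axioms of \extestesa (including extensionality and the two Const-axioms~\eqref{ax10},\eqref{ax11}) is routine, since the definition of $\ARRAY^{\cN'}$ is closed under $wr$ and contains all constant arrays.

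The embedding $\mu:\cM\to\cN'$ is the identity on \INDEX\ and \ELEM, and on arrays it sends $a\in\ARRAY^\cM$ to the function $\tilde a\in\ARRAY^{\cN'}$ defined by $\tilde a(k)=a(k)$ for $k\in\INDEX^\cM$, $\tilde a(k)=\el^\cM$ for $k\in\INDEX^\cA\setminus\INDEX^\cM$ with $0\le k\le\len{a}$, and $\tilde a(k)=\bot^\cM$ otherwise. I then check that $\mu$ preserves $rd$, $\len{-}$, $wr$ and $\Const$ by direct case analysis on whether the indices involved are old or new (using that $wr^\cM(a,i,e)$ only modifies $a$ on the old index $i$, and that $\Const^\cM(i)$ agrees with $\Const^{\cN'}(i)$ on old indices). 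The crucial point is $\diff$-faithfulness: if $\len{a}=\len{b}$ then $\tilde a$ and $\tilde b$ coincide on all new indices (both $\bot$ outside $[0,\len{a}]$ and both $\el$ inside), so $\diff^{\cN'}(\tilde a,\tilde b)$ equals the maximum index in $\INDEX^\cM$ where $a,b$ differ, i.e.\ $\diff^\cM(a,b)$; if instead $\len{a}\neq\len{b}$, Lemma~\ref{lem:easy} gives $\diff^\cM(a,b)=\max(\len{a},\len{b})$ and the same holds in $\cN'$ since $\len{\tilde a}=\len{a},\len{\tilde b}=\len{b}$.

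Finally, $\cN'$ is a model of \extestesa in which $\diff$ may fail to be totally defined, so I invoke Theorem~\ref{thm:extension} to obtain a $\diff$-faithful embedding $\nu:\cN'\to\cN$ with $\cN$ a model of \estesa. The composition $\nu\circ\mu:\cM\to\cN$ is an \estesa-embedding (it preserves every symbol, and $\diff^\cM$ is everywhere defined so $\diff$-faithfulness of both $\mu$ and $\nu$ suffices), and $\INDEX^\cN\supseteq\INDEX^{\cN'}=\INDEX^\cA$ is infinite, as required. The main technical step is the $\diff$-faithfulness check for $\mu$; once set up correctly, it reduces to the two length cases above, and the rest is bookkeeping on the definitions.
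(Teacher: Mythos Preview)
Your proposal is correct and follows essentially the same approach as the paper: enlarge the index sort using stable infiniteness of $T_I$, build the intermediate functional model with all positive-support functions, define the array embedding by keeping old values and filling new indices with $\el$ or $\bot$ according to position relative to $[0,\len{a}]$, and then invoke Theorem~\ref{thm:extension}. You actually supply more detail than the paper on the $\diff$-faithfulness check (the two length cases) and on the preservation of $\Const$, whereas the paper simply declares these verifications ``easy''.
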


\begin{proof} This is basically due to the fact that $T_I$ is stably infinite.
So let us first embed the $T_I$-reduct of $\cM$ into an infinite model $\cA$ of $T_I$. We define $\cN$ as follows. We let $\ELEM^\cN$ be equal to $\ELEM^\cM$ and the $T_I$-reduct of $\INDEX^\cN$ be equal to $\cA$. We let $\ARRAY^\cN$ be the set of positive support functions from $\INDEX^\cN$ to $\ELEM^\cN$ (the model so built will then be embedded into a full model of \estesa using Theorem~\ref{thm:extension}). We only need to define the embedding $\mu: \cM\longrightarrow \cN$.
This emebdding will be the identity for \INDEX and \ELEM sorts; for arrays, we let $\mu(a)(k)$ be equal to $a(k)$ for $k\in \INDEX^\cM$ and for $k\not\in \INDEX^\cM$, we put $\mu(a)(k)$ equal to  $el^\cM$ or $\bot^\cM$ depending whether we have $k\in[0,\len{a}]$ or not. The proof that $\mu$ preserves all operations is easy.
\end{proof}

Let us call an element $i\in \INDEX^\cM$ of a model $\cM$ of $\estesa$ \emph{finite} iff the set $\{ j\in \INDEX^\cM \mid 0\leq j\leq i\}$ is finite. $Fin(\cM)$ denotes the set of finite elements of $\cM$.

\begin{lemma}
\label{lemmaka} Let $\cA,\cM$ be  models of \estesa and let $f:\cA\longrightarrow \cM$
be an embedding.
Then there exist a third model $\cN$ of  $\estesa$ and an embedding  
 $\nu:\cM\rightarrow \cN$, such that for every $a\in \ARRAY^\cN$ 
 with $\len{a}^\cN\in \nu(f(\INDEX^\cA))$
 one of the following conditions hold:
\begin{enumerate}
    \item there exists $c\in \ARRAY^\cA$ such that $\nu(f(c))\sim^{\cN}a$; 
    \item there exists $k_a\in \INDEX^\cN\setminus \nu(\INDEX^{\cM})$ with $a(k_a)\not \in \nu(f(\ELEM^\cA))$ such that 
    for every $c\in \ARRAY^\cA$ we have  
    $ a(k_a)\neq\nu(f(c))(k_a).$
   %
\end{enumerate}
\end{lemma}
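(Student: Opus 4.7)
The plan is to enlarge both the $\INDEX$ and $\ELEM$ sorts of $\cM$ so as to attach to each ``problematic'' cardinality class a fresh distinguishing index carrying a fresh element value. I would first reduce the problem. If $a \in \ARRAY^\cM$ has $\len{a}^\cM = f(i_0) \in f(\INDEX^\cA)$ and the interval $[0, f(i_0)]^\cM$ is \emph{finite}, then $a$ and $f(\Const^\cA(i_0)) = \Const^\cM(f(i_0))$ have the same length and can only disagree on that finite interval, so $a \sim^\cM f(\Const^\cA(i_0))$ and condition~(1) holds automatically. Hence it suffices to treat the ``Type B'' $\sim^\cM$-classes $[a]$, namely those whose length lies in $f(\INDEX^\cA)$, whose interval $[0,\len{a}]^\cM$ is infinite, and which contain no $f(c)$ for $c \in \ARRAY^\cA$; I would fix a complete set of representatives $\mathcal{B}$ of such classes.

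For each $a \in \mathcal{B}$ I would introduce a fresh element constant $e_a$ together with countably many fresh index constants $k_a^1, k_a^2, \dots$. A compactness argument entirely analogous to the one in the proof of Lemma~\ref{immersionediff} --- which works precisely because each $[0,\len{a}]^\cM$ is infinite --- shows that the constraints $0 \le k_a^n \le \len{a}$, $k_a^n \neq i$ for every $i \in \INDEX^\cM$, and $k_a^n \neq k_b^m$ for distinct pairs $(a,n) \neq (b,m)$ are jointly consistent with the Robinson diagram of the $T_I$-reduct of $\cM$, yielding a $T_I$-model $\INDEX^\cN \supseteq \INDEX^\cM$ realising all the new indexes. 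Set $\ELEM^\cN := \ELEM^\cM \cup \{e_a : a \in \mathcal{B}\}$, and extend every $x \in \ARRAY^\cM$ to a positive-support function $\nu(x) : \INDEX^\cN \to \ELEM^\cN$ by $\nu(x)(i) = x(i)$ for $i \in \INDEX^\cM$ and, at a fresh $k_b^n$, $\nu(x)(k_b^n) = e_b$ when $x \sim^\cM b$, otherwise $\nu(x)(k_b^n) = el^\cM$ or $\bot^\cM$ according to whether $k_b^n \in [0,\len{x}]$ or not. Take $\ARRAY^\cN$ to be a family of positive-support functions closed under $wr$, containing $\nu(\ARRAY^\cM)$ and all $\Const^\cN(i)$, and restricted so that any element whose length lies in $\INDEX^\cM$ is $\sim^\cN$-equivalent to some $\nu(x)$. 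A direct verification along the lines of Lemma~\ref{immersionediff} shows that $\nu$ preserves $rd$, $\len{-}$, $wr$ and $\Const$, and a suitable variant of Theorem~\ref{thm:extension} (that respects the above restriction of $\ARRAY^\cN$) finally $\diff$-extends $\cN$ to a full model of $\estesa$.

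To verify the dichotomy, pick any $a \in \ARRAY^\cN$ with $\len{a}^\cN \in \nu(f(\INDEX^\cA)) \subseteq \INDEX^\cM$. By construction $a \sim^\cN \nu(x)$ for some $x \in \ARRAY^\cM$. If $[x]$ is Type A in $\cM$, then $a \sim^\cN \nu(f(c))$ for some $c \in \ARRAY^\cA$, giving~(1). If instead $[x] = [b]$ with $b \in \mathcal{B}$, then Lemma~\ref{lem:dependency} yields $a \sim^\cN \nu(b)$, so $a$ differs from $\nu(b)$ on only finitely many indexes and $a(k_b^n) = \nu(b)(k_b^n) = e_b$ for cofinitely many $n$; any such $k_b^n$ witnesses~(2), since $e_b \notin f(\ELEM^\cA)$ while $\nu(f(c))(k_b^n) \in \{el^\cM, \bot^\cM\} \subseteq f(\ELEM^\cA)$ for every $c \in \ARRAY^\cA$. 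The main technical obstacle is the bookkeeping in the compactness step together with the verification that distinguishing indexes survive both the closure of $\ARRAY^\cN$ under $wr$ and the subsequent $\diff$-extension; this is exactly why \emph{countably many} $k_b^n$ per Type B class must be introduced, so that no finite amount of rewriting can exhaust them.
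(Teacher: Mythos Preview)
Your overall architecture --- isolate the ``Type B'' $\sim$-classes, attach fresh indices and fresh element values, then $\diff$-complete --- is the same as the paper's, and your observation that finite-interval arrays are automatically handled by $\Const$ is exactly the key role that constant arrays play here. The genuine difference is that the paper treats one Type B array at a time by transfinite induction (adding a single $k_a$ per step and iterating $\omega$ rounds to catch newly created arrays), whereas you try to handle all classes simultaneously and compensate with countably many $k_a^n$ per class.

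There is, however, a real gap in your execution: the constraints you impose on the fresh indices are too weak to make $\nu$ $\diff$-faithful, and you conspicuously omit $\diff$ from the list of operations you verify. The paper does \emph{not} merely place $k_a$ somewhere in $[0,\len{a}]\setminus\INDEX^\cM$; it places $k_a$ above every \emph{finite} index of $\cM$ and below every infinite one. This positioning is what drives the $\diff$-preservation argument: if $k_a>\diff^\cM(b_1,b_2)$ with $b_1\sim a$ and $b_2\not\sim a$, then $\diff^\cM(b_1,b_2)$ is forced to be finite, whence either $b_1\sim b_2$ (contradiction) or $\len{b_1}\neq\len{b_2}$ and $k_a>\len{a}$ (also impossible). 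With only your constraints, compactness may well realise $k_a^n$ above some infinite $d=\diff^\cM(x_1,x_2)$ with $x_1\sim a\not\sim x_2$ and $d<\len{a}$; then $\nu(x_1)(k_a^n)=e_a\neq\nu(x_2)(k_a^n)$ and $\diff$ is violated. Without $\diff$-faithfulness of $\nu$, the composite $\cM\to\cN\to\hat\cN$ is not an $\estesa$-embedding, so the lemma fails. The fix is to add the ``above finite, below infinite'' clause to your compactness set; with it, your simultaneous construction goes through.

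A second, smaller gap: you appeal to ``a suitable variant of Theorem~\ref{thm:extension} that respects the above restriction of $\ARRAY^\cN$'', but the existing proof of that theorem (via Lemma~\ref{immersionediff}) passes to the set of \emph{all} positive-support functions at each step, destroying any such restriction. The paper sidesteps this by interleaving the $k_a$-steps with the $\diff$-completion inside one transfinite iteration repeated $\omega$ times, so that every array that ever appears is eventually given its $k_a$. Your countably-many-$k_a^n$ trick is a nice alternative idea for surviving $wr$-closure, but you would still need either to prove the restricted variant of Theorem~\ref{thm:extension} or to fall back on the paper's interleaved iteration.
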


\begin{proof} Because of Lemma~\ref{lem:indexinfinite}, we can freely assume that $\INDEX^\cM$ is infinite.
 If for all 
 $a\in \ARRAY^{\cM}$ whose length comes from $\cA$, there is $c\in \ARRAY^\cA$ such that 
 $f(c)\sim^{\cM} a$ then it is sufficient to take  
 $\cN=\cM$ and the identity as $\nu$. 
 Otherwise, one take a well ordering of the arrays, apply the construction below by tranfinite induction and repeat it $\omega$-times. The union of the chain so built will have the required properties.

\vspace{5pt}
Let $a\in\ARRAY^{\cM}$
be such that $\len{a}$ is from $\cA$ (i.e. such that $\len{a}=f(i)$ for  some $i\in \INDEX^\cA$) and such that there does not exist
$c\in \ARRAY^\cA$ such that 
$f(c)\sim^{\cM} a$. Then $\len{a}$ is not finite, because otherwise we would have that 
$a\sim^{\cM} f(\Const(i))$.\footnote{
Notice that this is the only argument in the whole strong amalgamation proof  requiring the fact that we have $\Const$ in the language.
} Consider the 
diagram $\Delta$ of the $T_I$-reduct of $\cM$ and let $k_a$ a fresh constant; the set 
\begin{equation}
\begin{aligned}
    \Delta' := \Delta \cup \{i< k_a  \mid i\in Fin(\cM)\}
    \cup \\ \cup \{i> k_a  \mid i\in \INDEX^\cM\setminus Fin(\cM)\}.
\end{aligned}
\end{equation} 
is consistent.
Suppose that $\Delta'$ is inconsistent. By compactness, we would have that there exists a finite subset $\Delta'_0$ of $\Delta'$ which is inconsistent too.  $\Delta'_0$ would involve finitely many finite indexes $i_1<\cdots < i_n$ and finitely many infinite indexes $j_1<\cdots<j_m$ and a finite subset $\Delta_0$ of $\Delta$.

If $m=0$, since $\INDEX^\cM$ is infinite, 
%
%
there exist an element $i'\in \INDEX^{\cM} $ large enough, 
so as to get $i_1<\cdots< i_n<i'$ in ${\cM}$. 
If $m>0$, then already in $\INDEX^{\cM}$ there exists an element $i'$ 
such that $i_1<\cdots < i_n<i'$ and such that $i'<j_1<\cdots< j_m$ hold in $\cM$, otherwise $j_1$ would be a finite index. This element $i'$ can interpret the constant $k_a$. In both cases, we conclude that $\Delta'_0$ would be consistent, which is a contradiction.

\noindent By Robinson Diagram Lemma, 
$\Delta$ has a model $\cB$ extending the $T_I$-reduct of $\cM$. 

We let now $\ELEM^\cN=\ELEM^\cM$, $\INDEX^\cN=\INDEX^\cB$ and we let $\ARRAY^N$ 
to be the set of positive-support functions from $\INDEX^\cN$ into $\ELEM^\cN$
(then, in view  of Theorem~\ref{thm:extension}, $\cN$ can be embedded into a full model of \estesa). 
This model contains an element $k_a$ such that for all $i\in \INDEX^\cM$ we have that $k_a\neq i$ and $i< k_a$ iff $i\in Fin(\cM)$. In particular, $k_a<\len{a}$ (because $\len{a}$ is infinite). 

Thanks to Lemma
 \ref{aggiungoelem}, we can freely suppose that $\ELEM^\cN=\ELEM^\cM$ has an element $e$ 
 not belonging to $f(\ELEM^\cA)$ (in particular $e\neq f(\el^\cA)=\el^\cN=\el^\cM $).
 We build $\mu$ as required by the statement of the lemma (more precisely, the $\nu$ required by the lemma will be a chain unions of the $\mu$'s built at each transfinite step as shown below). The \INDEX- and \ELEM-components of $\mu$ will be inclusions. 
 We define $\mu(b)(k)$ for all $b\in\ARRAY^\cM$. If $k\in \INDEX^\cM$, we obviously put  $\mu(b)(k)=b(k)$; in the other cases, the definition is as follows:
\begin{enumerate}
    \item if $b\not\sim^\cM a$, then $\mu(b)(k)=\el^\cN$ or $\mu(b)(k)=\bot^\cN$, depending on whether $k\in [0,\len{b}]$ or not;
    \item if $b\sim^\cM a$ and $k\neq k_a$, then again $\mu(b)(k)=\el^\cN$ or $\mu(b)(k)=\bot^\cM$, depending on whether $k\in [0,\len{b}]$ or not;
    \item if $b\sim^\cM a$ and $k= k_a$,  then $\mu(b)(k)=e$.
\end{enumerate}

We need to show that $\mu$ preserves $rd,wr, \len{-}$, constant arrays and $\diff$.
Preservation of $rd,wr, \len{-}$ are easy; constant arrays are preserved, because we cannot have $\Const^\cM(i)\sim^\cM a$, otherwise $\len{a}=i$, which cannot be because $\len{a}$ is an element from $\INDEX^\cA$ by hypothesis, so that we would have
$f(\Const^\cA(i))=\Const^\cM(i)\sim^\cM a$, contradiction. For preservation of \diff, the problematic case would be the case 
 in which we have $k_a>\diff(b_1,b_2)$, $b_1\sim^\cM a$ and $b_2\not\sim^\cM a$. However, this is impossible because $\diff(b_1,b_2)\in \INDEX^\cM$ and the fact that we have $k_a>\diff(b_1,b_2)$ implies that $\diff(b_1,b_2)$ is finite, which would entail either $b_1\sim^\cM b_2$ or 
$\len{b_1}\neq \len{b_2}$: in the former case, we would have $b_2\sim^\cM a$ and in the latter $k_a >\diff(b_1,b_2)=\max(\len{b_1}, \len{b_2})= \max(\len{a}, \len{b_2})\geq
\len{a}$.

We finally notice that $k_a$ satisfies the requirements of the lemma.
First, $k_a\not\in\mu(\INDEX^\cM)$ and $\mu(a)(k_a)=e\not \in 
\mu(f(\ELEM^\cA))$
hold by construction. Moreover, since for every $c\in \ARRAY^\cA$  we have $ \mu(f(c))(k_a)=\el^\cN$ or $ \mu(f(c))(k_a)=\bot^\cN$ (depending whether $k_a\in [0,\len{c}]$ holds or not), in any case we see that $ \mu(f(c))(k_a)\neq \mu(a)(k_a)$.
\end{proof}

\vskip 2mm\noindent
\textbf{Theorem~\ref{thm:axd_strong_amalg}}  \emph{
 $\estesa$ enjoys the strong amalgamation property.
}
\vskip 1mm
\begin{proof}
We keep the same notation and construction as in the proof of Theorem~\ref{thm:axd_amalg}. However, thanks to 
 Lemma~\ref{lemmaka} we can now suppose (for $i=1,2$) that all arrays 
 $a\in \ARRAY^{\cM_i}$ whose length belongs to $\INDEX^\cA$ are such that one of the following two conditions are satisfied:
\begin{enumerate}
    \item there exists $c\in \ARRAY^\cA$ with $f_i(c)\sim^{\cM_i}a$;
    \item there exists $k_a\in \INDEX^{\cM_i}\setminus \INDEX^\cA$ 
    such that $a(k_a)$ is an element from $\ELEM^{\cM_i}\setminus \ELEM^\cA$ different from all the $f_i(c)(k_a)$, varying
     $c\in \ARRAY^\cA$. 
\end{enumerate}

Let 
$a_i\in \ARRAY^{\cM_i}$ ($i=1,2$) be such that $\forall k\in \INDEX^\cM$ we have
$$
  \mu_1(a_1)(k)=\mu_2(a_2)(k) 
  \quad \quad \quad\eqref{indiciamalgama3}
  $$
Notice that, since $T_I$ has the strong amalgamation property and the $\mu_i$ preserve length, this can only happen if $\len{a_1}=\len{a_2}$ belongs to $\INDEX^\cA$. We look for some
$c\in \ARRAY^{\cA}$ such that  $a_1=f_1(c)$; since $\mu_2$ is injective this would entail  $a_2=f_2(c)$ because
$$
\mu_2(a_2)=\mu_1(a_1)=\mu_1(f_1(c))=\mu_2(f_2(c)),
$$
implying that $\hat{\cM}$ is a strong amalgam, as requested. 

We separate two cases: (i) one of the arrays $a_1, a_2$ satisfy the above condition 2; (ii) both arrays $a_1, a_2$ satisfy the above condition 1.

\begin{enumerate}
    \item [(i)]
    We show that this case is impossible.
     Suppose, e.g., that  $a_1$ satisfies condition 2 in $\cM_1$. 
     Then there exists an index 
      $k_{a_1}$ in $\INDEX^{\cM_1}\setminus \INDEX^\cA$ such that 
       $a(k_{a_1})$ is an element from $\ELEM^{\cM_1}\setminus \ELEM^\cA$
       which is
       different from all the $f_1(c)(k_{a_1})$, varying  $c\in \ARRAY^\cA$. Since we must have $\mu_1(a_1)(k_{a_1})=\mu_2(a_2)(k_{a_1})$ and $\mu_1(a_1)(k_{a_1})$ does not belong to $\ELEM^{\cM_2}$
      (recall that $\ELEM^{\cM_1}\cap \ELEM^{\cM_2}=\ELEM^{\cA}$),
     the value of $a_2$ for the index $k_{a_1}$ ($0<k_{a_1}<|a_1|=|a_2|$) is built according to the rule ($2\star$), because otherwise $\mu_2(a_2)(k_{a_1})$ would be equal to some element in $\ELEM^{\cM_2}$. 
    Let  $(c,b)$ the pair such that
    $$
    c\in \ARRAY^\cA, b\in \ARRAY^{\cM_2},\ b\sim^{\cM_2}a_2,\ k_{a_1}>\diff^{\cM_2}(b,f_2(c))
    $$
    $$
    \mu_2(a_2)(k_{a_1})=f_1(c)(k_{a_1}).
    $$
    Then we have
    $$
    \mu_1(a_1)(k_{a_1})=a_1(k_{a_1}) \neq f_1(c)(k_{a_1})=\mu_2(a_2)(k_{a_1})
    $$
    contradiction.
    
    \item [(ii)] Hence we can have 
    $\mu_1(a_1)=\mu_2(a_2)$ only when both $a_1,a_2$ satisfy condition 1 above. Let us call
    $c_i\in \ARRAY^\cA$ ($i=1,2$) the arrays such that $f_i(c_i)\sim^{\cM_i}a_i$. 
    Then,   the pair $(c_1,f_1(c_1))$  witnesses ($2\star$) for $a_1$ and  for every positive \footnote{
    Notice that if  $k\in\INDEX^{\cM_1}\setminus \INDEX^\cA$ is positive, then 
    $k>\diff(f_1(c),f_1(c))=0$.
    } index  $k\in \INDEX^{\cM_1}\setminus \INDEX^\cA$ (and similarly for $a_2$). We look for 
     $c\in \ARRAY^\cA$ such that $f_1(c)=a_1$. 
     Let us consider the following relations coming from
      (\ref{indiciamalgama3}) and from the definition of $\mu_i$:  
\begin{equation}
\label{diagramma3}
\begin{aligned}
    \forall k\in \INDEX^\cA,\ a_1(k)=a_2(k)\\
    \forall k\in \INDEX^{\cM_1}\setminus \INDEX^\cA,\ a_1(k)=f_1(c_2)(k) \\
    \forall k\in \INDEX^{\cM_2}\setminus \INDEX^\cA,\ f_2(c_1)(k)=a_2(k).\\
    \end{aligned}
\end{equation}
where we used that  $\mu_2(a_2)(k)=f_1(c_2)(k)$ if $k\in \INDEX^{\cM_1}\setminus \INDEX^\cA$, and $\mu_1(a_1)(k)=f_2(c_1)(k)$ if $k\in \INDEX^{\cM_2}\setminus \INDEX^\cA$.

Let us now consider the sets (they are finite because
     $f_2(c_2)\sim^{\cM_2}a_2$)
    $$
    J=\{j\in \INDEX^\cA\mid\ c_2(j)\neq a_2(j)\}\subseteq \INDEX^\cA
    $$
    $$
    E=\{a_2(j) \mid\  j\in J\}\subseteq \ELEM^\cA,
    $$
    and let us put $c=wr(c_2,J,E)$; we check that $c$ is such that $f_1(c)=a_1$.
    \begin{itemize}
    \item If $k\in \INDEX^\cA$: $$f_1(c)(k)=c(k)=wr(c_2,J,E)(k)=a_2(k)=a_1(k)$$
    because  $f_1$ preserves $rd$, by the definition of $J$ and because of the equalities (\ref{diagramma3});
    \item If $k\in \INDEX^{\cM_1}\setminus \INDEX^\cA$: $$f_1(c)(k)=f_1(wr(c_2,J,E))(k)=wr(f_1(c_2),J,E)(k)=f_1(c_2)(k)=a_1(k)$$
    by the definition of $c$, the fact that  $f_1$ is an embedding, because $J \subseteq \INDEX^\cA$ (hence $k\notin J$)
    and because of the equalities (\ref{diagramma3}).
\end{itemize}
\end{enumerate}
\end{proof}

\section{Proofs from Section~\ref{sec:sat}}

\vskip 2mm\noindent
\textbf{Lemma~\ref{lem:sat1}}\emph{
 Let $\phi$ be a quantifier-free formula; then it is possible to compute 
 in linear time
 a finite 
 separation pair $\Phi=(\Phi_1, \Phi_2)$
 such that $\phi$ is \AXDTI-satisfiable iff so is  $ \Phi$.
}
\vskip 1mm
\begin{proof} We first flatten all atoms from $\phi$ by repeatedly abstracting out subterms 
 (to abstract out a subterm $t$, we introduce a fresh variable $x$ and update $\phi$ to
 $x=t\wedge  \phi(x/t)$); then we remove all atoms of the kind $a=b$ occurring in $\phi$ by replacing them by the equivalent formula~\eqref{eq:eleq}, namely
 $$
  \diff(a,b)=0 \wedge rd(a,0)=rd(b,0)~.
 $$
 Then we abstract out all terms of the kind $wr(b,i,e), \diff(a,b)$ and $\len{a}$, so that $\phi$ has now the form
 $\Phi_1\wedge \Phi_2$, where $\Phi_2$ does not contain $wr, \diff, \len{-}$-symbols and $\Phi_1$ is a
 conjunction of atoms of the form $a=wr(b,i,e), i=\diff(a,b), j=\len{a}$.
 Finally, we add to $\Phi_1$ the missing atoms of the kind $\len{a}=i$  required by Definition~\ref{def:separated}.\footnote{
 The transformation of Lemma~\ref{lem:sat1} does not introduce in $\Phi_1$ any formula of the kind $\diff_n(a,b)= k_n$ (for $n>1$).
 These \formulae\ will however be introduced by the Step 1 of the interpolation algorithm of Section~\ref{sec:algo}.
 }
 \end{proof}

\vskip 2mm\noindent
\textbf{Lemma~\ref{lem:sat2}} \emph{
 The following conditions are equivalent for a finite 0-instantiated separation pair $\Phi=(\Phi_1, \Phi_2)$:  
 \begin{compactenum}
 \item[{\rm (i)}] $\Phi$ is \AXDTI-satisfiable;
 \item[{\rm (ii)}] $\bigwedge \Phi_2$ is $T_I\cup \EUF$-satisfiable.
 \end{compactenum}
}
\vskip 1mm
\begin{proof}
$(i)\Rightarrow(ii)$ is clear.

\noindent 
To prove 
$(ii)\Rightarrow(i)$, let $\cA$
be the model witnessing the satisfiability of
 $\bigwedge\Phi_2$ in $T_I\cup \mathcal{EUF}$
 and let $\cI$ be the set of all index variables occurring in $\Phi_1\cup \Phi_2$. According to the definition of a separated pair, for every array variable $a$ occurring in 
 $\Phi_1\cup \Phi_2$ there is an index variable
 $l_a$ such that:
$$
\cA \models f_a(i)\neq \bot \leftrightarrow 0\le i \le l_a.
$$
for all $i\in \cI$ (here $f_a$ is the unary function symbol replacing $a$ in $T_I\cup \mathcal{EUF}$). 

The \emph{standardization} $\cA'$ of $\cA$ is the $T_I\cup \mathcal{EUF}$-model obtained from $\cA$ by 
modifying the values $a(k)$ (for all array variables $a$ occurring in $\Phi_2$ and for all indexes $k\in \INDEX^\cA$ different from the elements assigned in $\cA$ to the variables  in $\cI$) in such a way that we have 
$$
   \cA'\models f_a(k)=\bot \leftrightarrow (l_a<k \lor k<0),
$$
$$
  \cA'\models  f_a(k)=el \leftrightarrow 0\le k \le l_a~.
$$
The standardization $\cA'$ of $\cA$ is still a model of $\bigwedge\Phi_2$.
However, in $\cA'$ now also the formulae (\ref{lunghequivt3}), (\ref{diffiterate}) and (\ref{wrt3}) hold: this is because the $\cI$-instantiations of the universal index quantifiers occurring in such formulae
were taken care in $\cA$ and their truth value is not modified passing to $\cA'$, whereas the construction of $\cA'$ takes care of the instantiations outside $\cI$.

\noindent 
Let us now define an \AXDTI-model $\cM$ satisfying $\Phi$. We first build a structure $\cN$ where $\diff$ may not be totally defined. 
We let $\INDEX^{\cN}=\INDEX^{\cA'}$ and $\ELEM^{\cN}=\ELEM^{\cA'}$; 
we take as  
 $\ARRAY^\cN$ 
 the set of all positive-support functions from $\INDEX^{\cN}$ into 
 $\ELEM^{\cN}$: this includes all functions of the form $f_a$.
 In addition, if $\bigwedge_{n=1}^l\diff_n(a_1, a_2)=k_n\in \Phi_1$, then the related iterated maxdiff's are defined in $\cN$ and we have $\cN\models \bigwedge_{n=1}^l\diff_n(a_1, a_2)=k_n$
 by the above construction. 
 Thus $\Phi$ holds in $\cN$ and
 in order to obtain our final $\cM$ we only need to apply 
 Theorem~\ref{thm:extension}.
\end{proof}

\section{Proofs from Section~\ref{sec:algo}}

\vskip 2mm\noindent
\textbf{Theorem~\ref{thm:al}}\emph{
 The above rule-based algorithm computes a quantifier-free interpolant for every 
 \AXDTI-mutually unsatifiable pair $A^0, B^0$ of quantifier-free formulae.
}
\vskip 1 mm
\begin{proof}
We only need to prove that Step 3 really applies. 

Suppose not;
let $A=(A_1, A_2)$ and 
$B=(B_1, B_2)$ be the separated pairs obtained  
after  applications of Steps 1 and 2. If Step 3 does not apply, then $A_2\wedge B_2$ is $T_I\cup \EUF$-consistent.
  We claim that  $(A, B)$ is \AXDTI-consistent (contradicting 
  that $(A^0, B^0)\subseteq (A, B)$ was  
\AXDTI-inconsistent).

Let $\cM$ be a $T_I\cup \EUF$-model of $A_2\wedge B_2$.
$\cM$ is a two-sorted structure (the sorts are \INDEX and \ELEM) endowed for every array  constant $d$ occurring in $A\cup B$ of a function $d^\cM:\INDEX^\cM \longrightarrow \ELEM^\cM$. In addition, $\INDEX^\cM$ is a model of $T_I$. 
 We list the properties of $\cM$ that comes from the fact that our Steps 1-2 have been  applied
 (below, we denote by $k^\cM$ the element of $\INDEX^\cM$ assigned to an index constant $k$):\footnote{
 Thus if e.g. $k,l$ are index constants, $\cM\models k=l$ is the same as $k^\cM=l^\cM$.
 }
\begin{description}
 \item[{\rm (a)}] we have that $\cM\models \bigwedge A_1(\cI_A)$ (where $\cI_A$ is the set of \alocal constants) and $\cM\models \bigwedge B_1(\cI_B)$ (where $\cI_A$ is the set of \blocal constants): this is because $(A_1, A_2)$ and $(B_1, B_2)$ are 0-instantiated by Step 2;\footnote{ The sets $A_1(\cI_A),  B_1(\cI_B)$
 are introduced in Definition~\ref{def:instsep}.
 }
\item[{\rm (b)}] 
for \abcommon array variables $c_1, c_2$, we have that $A_1\cap B_1$ contains a literal of the kind $\diff_n(c_1, c_2)= k_n$ for $n\leq N$; suppose that $\cM \models l_{c_1}= l_{c_2}$\footnote{ Recall that  $l_{c_1}, l_{c_2}$ are the \abcommon constants such that the literals 
 $\len{c_1}=l_{c_1}, \len{c_2}=l_{c_2}$ belongs to $A_1\cap B_1$.} and that $k$ is an index constant such that 
 $\cM\models k\neq l$ for all \abcommon index constant $l$; then, we can have $\cM \models c_1(k)\neq c_2(k)$   only when $\cM \models k< k_N$: this is because Step 1 has been  applied and because of (a).
\end{description}

We expand $\cM$ to an \AXEXTTI-structure $\cN$ and endow it with an assignment to our \alocal and \blocal variables, in such a way that all \diff\ operators mentioned in $A_1, B_1$ are defined and all formulae in $A, B$ are true. In view of Theorem~\ref{thm:extension}, this structure can be expanded to the desired full model of \AXDTI.
We take $\INDEX^\cN$ and $\ELEM^\cN$ to be equal to $\INDEX^\cM$ and $\ELEM^\cM$; the $T_I$-reduct of $\cN$ will be equal to the $T_I$-reduct of $\cM$ and we let $x^\cN=x^\cM$ for all index and element constants occurring in $A\cup B$. $\ARRAY^\cN$ is the set of all positive support functions from $\INDEX^\cN$ into $\ELEM^\cN$. 
The interpretation of \alocal and \blocal constants of sort \ARRAY is more subtle. We need a d\'etour to explain it.

Let $k$ be an  index constant such that 
 $\cM\models k\neq l$ for all \alocal index constant $l$;
we introduce an equivalence relation  
$\equiv_k$ on the set of \alocal array variables as follows: $\equiv_k$ is the smallest equivalence relation that contains all pairs $(a_1,a_2)$ such that $\cM\models l_{a_1}=l_{a_2}$ and moreover an atom of one of the following two 
kinds belongs to $A^0_1$: (I) $a_1= wr(a_2, i, e)$; (II) $\diff(a_1,a_2)= l$, for an $l$ such that $\cM\models l<k$.

\vskip 2mm\noindent
\textbf{Claim}: \emph{if $c_1, c_2$ are \abcommon and $c_1\equiv_k c_2$, then $c_1^\cM(k^\cM)=c_2^\cM(k^\cM)$.}

\vskip 1mm\noindent
\emph{Proof of the Claim.} 
The claim is proved 
by preventively showing, for every \alocal constants $a_1,a_2$ such that $a_1\equiv_k a_2$,
 that the number of the \alocal constants $j$
such that $\cM\models k<j$ and $a_1^\cM(j^\cM)\neq a_2^\cM(j^\cM)$ is less or equal to $N_A<N$.
This is easily shown 
by induction on the length of the finite sequence witnessing 
$a_1\equiv_k a_2$.\footnote{ According to the definition of reflexive-symmetric-transitive closure, if 
$a_1\equiv_k a_2$ holds then there are $d_0, \dots, d_n$ such that $d_0=a_1, d_n=a_2$ and for each $j<n$,
we have that either $(d_j, d_{j+1})$ or $(d_{j+1}, d_{j})$ satifies the above requirements: the induction is on such $n$.
Notice that the statement is not entirely obvious because the number of the \alocal index constants is much bigger than $N_A$ (for instance, it includes the \abcommon constants introduced in Step 1). However 
 induction is easy: it goes through the atoms occurring in the input set $A^0_1$ and uses (a).
 The required observations  are the following: if $d_j=wr(d_{j+1},i,e)\in A^0_1$, then  the only \alocal constant where $d_j^\cM$ and $d_{j+1}^\cM$ can differ is $i$; if $\diff(d_j,d_{j+1})=l\in A^0_1$ and $\cM \models l< k$, then  $d_j^\cM$ and $d_{j+1}^\cM$ cannot differ on any \alocal constant above $k^\cM$. Iterating these observations during induction, the claim is clear: we can collect at most the set of the \alocal constants occurring in $A^0_1$ within a $wr$ symbol.
 }
We apply this observation to the \abcommon array variables $c_1, c_2$ and let us consider the atoms 
 $\diff_1(c_1, c_2)=k_1, \dots, \diff_n(c_1, c_2)=k_N\in A_1$. 
 Now $k_1, \dots, k_N$ are all \abcommon (hence also \alocal) constants, moreover  $N> N_A$ and the number of the \alocal constants 
 above $k^\cM$ where $c_1, c_2$ differ is at most $N_A$. According to (b) above, if for absurdity  $c_1^\cM(k^\cM)= c_2^\cM(k^\cM)$ does not hold, then 
 we have $\cM \models k < k_N$. Since $\cM \models k\geq 0$
 (otherwise $c_1^\cM(k^\cM)= c_2^\cM(k^\cM)$ follows),
 this means by Lemma~\ref{lem:elim} that we have $\cM\models k_1>\cdots  > k_N>0$.  
 However $k_1, \dots, k_N$ are all \alocal constants above $k$, their number is bigger than $N_A$, 
 hence we must have 
 $\cM \models c_1^\cM(k_i^\cM)=c_2^\cM(k_i^\cM)$ for some $i=1,\dots,N$; the latter  implies $\cM \models k_i=\dots=k_N=0$ by Lemma~\ref{lem:elim},
 absurd.
\eop
\vskip 2mm

In order to interpret \alocal constants of sort \ARRAY, we assign to an \alocal constant $a$ of sort \ARRAY the function $a^\cN$ defined as follows for every $i\in \INDEX^\cN$:
\begin{compactenum}[(\dag-i)]
\item if $i$ is equal to $k^\cM$, where $k$ is an \alocal constant, then $a^\cN(i):=a^\cM(k^\cM)$;
\item if $i$ is different from $k^\cM$ for every \alocal constant $k$, but nevertherless $i$ is equal to $k^\cM$
for some (necessarily \bstrict) index constant $k$  and 
 there is an \abcommon array variable $c$ such that $c\equiv_k a$,
then $a^\cN(i)$ is equal to $c^\cM(k^\cM)$;
\item
in the remaining cases,
 $a^\cN(i)$ is equal to $el^\cM$ or $\bot^\cM$ depending whether 
$\cM\models 0\leq i \wedge i\leq l_a$ holds or not.
\end{compactenum}

Notice that 
$a^\cN(i)$ is univocally specified in case ($\dag$-ii) because of the above claim.
We now show that 
\begin{itemize}
 \item[(*)] \emph{all $a^\cN$ are positive-support functions and  all \formulae from $A_1\cup A_2$ are true in $\cN$}.
\end{itemize}
Recall in fact that \formulae in $A_2$ are Boolean combinations of \alocal atoms of the kind~\eqref{phi2}: these are $T_I\cup \EUF$-atoms and, due to their shape, each of them is true in $\cM$ iff it is true in $\cA$ (this is because $rd$ functions are applied only to \alocal index constants, so that the modifications we introduced for passing from $a^\cM$ to $a^\cN$ does not affect truth of these atoms).    Concerning \formulae in $A_1$, these are 
 all $wr, \diff$ and $ \len{-}$-atoms.\footnote{ In addition, we have $\diff_n$-atoms for $n>1$, but these are all \abcommon atoms that were not part of the initial pair $A^0, B^0$. In fact they are also true in $\cN$, but strictly speaking we do not need to check this fact to get the absurdity that $A^0\wedge B^0$ is \AXDTI-consistent.}
 The reason why they are true in $\cN$ is the 0-instantiation performed by Step 2 (see (a) above).
 For example, consider an atom of the kind $\len{a}=l_a$ appearing in $A_1$: since formulae~\eqref{lunghequivt3}  
 have been instantiated with all the \alocal index constants  via Step 2,  for all \alocal index constant $h$, we have 
$\cM\models rd(a,h)\neq \bot$ iff $\cM\models 0\le h \le i$. 
Now, 
 thanks to definition ($\dag$),  for all the  elements $h\in \INDEX^\cN$ we have an analogous result: in case $h$ is equal to $k^\cM$ for some \alocal constant $k$, we employ definition ($\dag$-i), otherwise we use ($\dag$-ii) or ($\dag$-iii) (for 
 ($\dag$-ii), notice that $a\equiv_k c$ implies that $\cM\models l_a=l_c$ and 0-instantiation
 guarantees that $c^\cM(k^\cM)$ is equal to $\bot^\cM$ or to $\el^\cM$ depending whether 
 $\cM\models 0\leq k \wedge k\leq  l_c$ holds or not).
 Hence we get that $\cA$ satisfies formula~\eqref{lunghequivt3}, 
 since the universal quantifier has been instantiated in all possible ways. Thus $\cA\models \len{a} = l_a$, by Lemma~\ref{lem:univinst}. The other cases are similar: notice in particular that if $a=wr(a',i,e)\in A_1$ 
 then $a\equiv_k a'$. If $\diff(a, a')=i\in A_1$, the relevant case is when $\cM\models i<k$ and $\cM\models l_{a_1}=l_{a_2}$, but in this case we have $a\equiv_k a'$ too.

The  assignments to the \blocal array variables $b$ are  defined analogously, so that 
\begin{itemize}
 \item[(*)] \emph{all $b^\cN$ are positive-support functions and  all \formulae from $B_1\cup B_2$ are true in $\cN$}.
\end{itemize}

There is however one important point to notice:
 for all \abcommon constants $c$ of sort \ARRAY,
 our specification of  $c^\cM$ \emph{does not depend} on the fact that we use the above definition for \alocal or for \blocal array variables: to see this,
 we only have to notice that $c\equiv_k c$ holds in case ($\dag$-ii) is applied. This remark concludes our proof.
\end{proof}



%
%
%
\end{document}